\documentclass[11pt,a4paper]{article}

\usepackage[utf8]{inputenc}
\usepackage[T1]{fontenc}
\usepackage{amsmath,amssymb,amsthm}
\theoremstyle{plain}
\newtheorem{theorem}{Theorem}[section]
\newtheorem{proposition}[theorem]{Proposition}

\theoremstyle{definition}

\theoremstyle{remark}
\newtheorem{remark}[theorem]{Remark}
\usepackage{graphicx}
\usepackage[colorlinks=true,linkcolor=blue,citecolor=blue,urlcolor=blue]{hyperref}
\usepackage{booktabs}
\usepackage{algorithm}
\usepackage{algpseudocode}
\usepackage{xcolor}
\usepackage[margin=1in]{geometry}
\usepackage{caption}
\usepackage{subcaption}
\usepackage{cite}
\usepackage{float}
\usepackage{enumitem}
\usepackage{tikz}
\usetikzlibrary{arrows.meta,positioning,calc,decorations.pathreplacing,shapes.geometric,fit,backgrounds}
\usepackage{authblk}
\usepackage[numbers,square]{natbib}

\definecolor{accentblue}{HTML}{1A6FA3}
\definecolor{accentteal}{HTML}{1E9A8E}
\definecolor{accentamber}{HTML}{D4860A}
\definecolor{accentred}{HTML}{C0392B}
\definecolor{lightgray}{HTML}{F5F5F5}
\definecolor{navyblue}{HTML}{0D1B3E}

\tikzset{
  node/.style={circle,draw,minimum size=20pt,font=\small,thick},
  nodeS/.style={node,fill=accentblue!30,draw=accentblue},
  nodeT/.style={node,fill=accentamber!30,draw=accentamber},
  strong/.style={line width=1.8pt,accentred},
  weak/.style={dashed,gray!70},
  arr/.style={-{Stealth[length=5pt]},accentblue,thick},
}

\title{\textbf{Quantum Variational Approaches to the Maximum Independent Set Problem at Utility Scale}\\\large }
\author[1]{Kalyan Dasgupta}
\author[1]{Sumanta Mukherjee}
\author[2]{Dhriti Verma}
\author[1]{Surya Shravan Kumar Sajja}
\author[1]{Abhishek Singh}
\author[2]{Dzung Phan}
\author[2]{Jayant Kalagnanam}

\affil[1]{IBM Research, Bangalore, India}
\affil[2]{IBM Research, Yorktown Heights, NY, USA}

\date{}

\begin{document}
\maketitle

\begin{abstract}
Near-optimal seed solutions for the Maximum Independent Set (MIS) problem on
dense benchmark graphs often lie in local optima from which classical
heuristics cannot escape, even with repeated restarts. We introduce
a hybrid quantum-classical framework that uses quantum superposition and interference
to execute a parallel search over multiple seeds simultaneously. The core technique
encodes near-optimal solutions as a uniform quantum superposition via ancilla qubits.
It then evolves this state under an excitation-preserving variational ansatz that
confines the search to the feasible Hamming-weight subspace. The ansatz introduces
genuine quantum correlations across seeds, and the resulting MPS bond dimension
grows exponentially with circuit depth. This pushes deeper circuits beyond classical
simulation ceiling, motivating hardware execution. We further introduce an ancilla
interference mechanism in which a CRZ phase layer followed by Hadamard and
post-selection on the ancilla creates quantum cross-term contributions absent
in any classical stochastic mixture. These cross-terms are confirmed experimentally
via a phase-sensitivity fringe test.

A preprocessing pipeline of spectral graph reordering and distance-based
sparsification, combined with history-guided bitstring post-processing,
enables MIS recovery on all benchmark instances up to 200 nodes. The
180-node instance, where standard VQE is trapped at a suboptimal solution, is
solved only when multiple seeds are combined in quantum superposition,
confirming the role of parallel search. For the 400-node \texttt{brock400-1} instance,
increasing circuit depth progressively improves the recovered IS size, providing direct
empirical evidence that circuit expressiveness is the bottleneck and that
hardware execution will be required beyond what MPS simulation can tractably handle.
The pipeline is validated on IBM Quantum hardware (\texttt{ibm\_marrakesh}),
confirming robustness to realistic noise with parameters transferred directly
from noiseless simulation.

\smallskip\noindent\textbf{Keywords:} Maximum Independent Set, Variational Quantum Eigensolver, QAOA, Spectral Reordering, Excitation-Preserving Ansatz, Ancilla Superposition, Ancilla Interference, fSim Gate, Combinatorial Optimization.
\end{abstract}

%=============================================================================
\section{Introduction}
\label{sec:intro}
%=============================================================================

The Maximum Independent Set (MIS) problem asks for the largest subset
of vertices in an undirected graph such that no two selected vertices
are adjacent. It belongs to the canonical NP-hard problems of combinatorics ~\cite{garey1979computers}
and arises in numerous practical contexts: frequency assignment in wireless
networks \cite{joo2012local}, protein interaction network analysis~\cite{atias2012comparative},
compiler register allocation, and VLSI design automation \cite{butenko2006clique}.
Through the clique-independence duality, it is also intimately related to the maximum
clique and minimum vertex cover problems. Classical exact solvers based on
branch-and-bound and integer programming scale poorly beyond a few hundred
vertices for hard instances \cite{pardalos1994maximum}, prompting continued
interest in heuristic and metaheuristic methods as well as, more recently,
quantum computing.

Two principal paradigms have emerged for quantum optimization. Quantum annealing,
realized on D-Wave hardware \cite{johnson2011quantum}, encodes combinatorial problems
as Quadratic Unconstrained Binary Optimization (QUBO) instances and searches for
low-energy states by controlling a transverse-field Hamiltonian. Gate-based variational
algorithms, principally the Variational Quantum Eigensolver (VQE)~\cite{peruzzo2014variational}
and the Quantum Approximate Optimization Algorithm (QAOA) \cite{farhi2014quantum},
take a hybrid quantum-classical approach in which a parameterized quantum circuit is
optimized by a classical outer loop. Both families have been applied to graph-structured
combinatorial problems, and each faces a characteristic set of challenges in the
near-term (NISQ) era \cite{preskill2018quantum}: limited qubit counts, short coherence
times, constrained hardware connectivity, and the barren plateau phenomenon in which
gradient signals vanish exponentially in system size \cite{mcclean2018barren}.

Beyond these generic difficulties, the MIS problem presents a structural challenge
that is specific to large, dense graphs. When the target independent set is a
small fraction of the total vertex count (as is the case for the 180-node instance
studied here, where the MIS size is 15 out of 180), the optimal solution occupies
an exponentially small corner of the $2^n$-dimensional Hilbert space. Uniform
superposition initial states, which underpin standard QAOA, assign equal probability
weight to all $2^n$ basis states and thus give the MIS solution an initial amplitude
of only $2^{-n/2}$. Even after thousands of optimization iterations, the variational
circuit may converge to a good but suboptimal energy basin from which escape requires
coordinated multi-qubit changes rather than local perturbations.

These challenges were addressed through a pipeline of techniques that combine
graph preprocessing, variational optimization, and classical post-processing. Graph
reordering places structurally adjacent vertices at nearby qubit indices so that the most
important two-qubit interactions become nearest-neighbour in the circuit and require
no routing overhead. Distance-based sparsification drops long-range graph edges from the
QAOA ansatz, cutting the two-qubit gate count and reducing circuit depth. Energy
evaluation still uses the full Hamiltonian. A two-stage bitstring post-processing
pipeline drives raw quantum measurement outcomes to maximum independent sets. Together,
these techniques are enough to find the exact MIS on smaller instances.

For the 180-node instance, however, these methods encounter a hard barrier. Classical
heuristics consistently return solutions of size 14 regardless of the number
of optimization iterations or random restarts. To overcome this barrier, we introduce a
technique that exploits quantum superposition in a qualitatively new way. Rather than
initializing the variational circuit in the computational zero state or a uniform
superposition, we encode a curated collection of the best classically-found near-optimal
solutions as a uniform quantum superposition using ancilla qubits, and we then evolve this
state under an excitation-preserving variational ansatz built on $fSim(\theta,\varphi)$ gates.
The $fSim$ gate with $\varphi \neq 0$ introduces a ZZ interaction that is classically intractable
for deep circuits, making hardware execution essential at scale. The excitation-preserving
structure keeps the circuit within the Hilbert subspace of fixed Hamming weight throughout,
eliminating penalty terms and preserving fine-grained energy differences that distinguish
near-optimal configurations. We further introduce an ancilla interference extension in
which a layer of controlled-RZ phase gates followed by Hadamard on the ancilla and post-selection
creates genuine quantum interference between seed branches, amplifying configurations that are
jointly supported by all seeds. This mechanism goes beyond the classical stochastic mixture
because the post-selected output contains cross-term contributions $\mathrm{Re}[A_k A_{k'}^*]$
confirmed experimentally via a phase-sensitivity fringe test. The ancilla superposition alone is classically tractable as a stochastic
mixture; it is the non-Clifford variational ansatz together with the
interference post-selection that introduces genuine quantum correlations.

We validate the pipeline across several benchmark instances. The 200-node QOBLIB suite
is solved to optimality. The 125-node \texttt{c125-9} instance is solved both in simulation
and on IBM Quantum hardware \texttt{ibm\_marrakesh}, confirming that the pipeline scales to
hardware execution. For the 400-node \texttt{brock400-1} instance, the excitation-preserving
ansatz reaches near-optimal IS sizes that standard VQE cannot achieve; the required circuit
depth scales with instance density and provides empirical evidence that hardware execution
will be necessary at this scale.

Encoding several near-optimal solutions simultaneously as a quantum superposition
and optimizing over all of them in a single circuit run is not, by itself, new.
What makes it work here is that the excitation-preserving ansatz, a circuit built on
$\mathrm{fSim}(\theta,\varphi)$ gates, keeps all evolution confined to the
fixed-Hamming-weight subspace. Without this constraint the seeds immediately mix
into infeasible states, and the penalty term has to do the job of feasibility
enforcement, which is precisely what breaks the normal VQE on dense instances. With
it, the energy landscape retains the resolution needed to distinguish near-MIS
configurations from each other, and the shared optimizer finds parameters that
improve all seeds at once.

The interference extension came out of asking what is actually lost when we measure
only the data register. Tracing out the ancilla leaves a stochastic mixture with no cross-seed
correlations and no advantage over running the seeds independently. A CRZ phase layer before
the final ancilla Hadamard and post-selection recovers the cross-terms
$\mathrm{Re}[A_k A_{k'}^*]$ that the mixture discards. The practical consequence,
confirmed by the fringe test, is that configurations jointly supported by multiple
seeds are amplified while those present in only one are suppressed.

The paper is organized as follows. Section~\ref{sec:related} surveys related work
on classical MIS solvers, quantum annealing, gate-based variational algorithms,
and neutral atom approaches. Section~\ref{sec:formulation} states the Ising Hamiltonian
formulation of MIS. Section~\ref{sec:methods} develops the full algorithmic framework,
covering graph preprocessing, the variational circuit constructions, the bitstring
post-processing pipeline, and the ancilla superposition and interference techniques.
Section~\ref{sec:theory} provides theoretical foundations of each component.
Section~\ref{sec:setup} describes the experimental setup and benchmark instances.
Section~\ref{sec:results} presents results across all instances, including the ablation
study, hardware validation, and scaling experiments. Section~\ref{sec:conclusion} summarises the
findings and identifies open directions.

%=============================================================================
\section{Related Work}
\label{sec:related}
%=============================================================================

Classical algorithms for MIS span exact solvers, approximation algorithms, and
heuristics. The best-known exact algorithms achieve time complexities of
$O(1.2^n)$ \cite{fomin2009measure} to $O(1.1996^n)$ ~\cite{xiao2013exact},
making them tractable only for small instances. Polynomial-time approximation
algorithms achieve ratios of $O(n/(\log n)^2)$ \cite{boppana1992approximating},
while inapproximability results show that a ratio of $n^{1-\varepsilon}$
requires NP-hard computation for any $\varepsilon > 0$ \cite{hastad1996clique}.
In practice, heuristic methods such as simulated annealing ~\cite{kirkpatrick1983optimization},
Tabu search, and evolutionary algorithms have been used on the DIMACS benchmark
suite \cite{johnson1996cliques}, which remains the standard reference for hard
instances.

Quantum annealing approaches to MIS encode the problem as a QUBO via Lucas's
Ising formulation \cite{lucas2014ising}. Studies using D-Wave hardware have
demonstrated MIS solutions on small instances after minor-embedding the problem
graph into the Chimera or Pegasus hardware topology \cite{choi2008minor,pelofske2019solving}.
The overhead introduced by chain-based embedding limits the effective problem
size and motivates interest in gate-model approaches, which are topology-agnostic
in principle.

QAOA was introduced by Farhi et al.\ \cite{farhi2014quantum} for binary
combinatorial optimization, with the central result that $p$ layers converge to
the optimal solution as $p \to \infty$. Its approximation properties for MaxCut
have been analyzed in detail \cite{wang2018quantum, farhi2020quantum}, and
application to maximum k-vertex cover has been demonstrated on small instances
\cite{cook2020quantum}. Bravyi et al.\ identified symmetry-induced obstructions
to QAOA performance on MIS~\cite{bravyi2020obstacles}, and the problem of
barren plateaus with large constraint penalties has been noted independently by
several groups. Warm-starting QAOA (initializing near a classically-found
solution rather than in the uniform superposition) was proposed by Egger et al.~\cite{egger2021warm} and
further developed by Tate et al.\ \cite{tate2023warm} using SDP relaxations.

VQE \cite{peruzzo2014variational} originated in quantum chemistry for estimating
molecular ground state energies but has since been applied widely to combinatorial
problems through hardware-efficient ansätze \cite{kandala2017hardware}. Symmetry-preserving
or excitation-preserving circuits, which conserve the number of excitations
(Hamming weight) in the quantum state, have been proposed as a way to embed feasibility
constraints directly into the circuit structure, avoiding expensive Lagrange
penalty terms \cite{gard2020efficient, hadfield2019quantum}. Such circuits are
built from fermionic simulation ($fSim$) gates \cite{arrazola2022universal} and
are particularly well-suited to problems where the optimal solution has a known
or bounded cardinality.

The impact of qubit ordering on circuit efficiency has been studied in the context
of reducing SWAP overhead on hardware-constrained devices \cite{itoko2020optimization, li2019tackling}.
Bandwidth minimization of sparse matrices by the Cuthill-McKee algorithm \cite{cuthill1969reducing}
and its reverse variant motivated spectral reordering approaches that use the Fiedler vector of
the graph Laplacian ~\cite{fiedler1973algebraic} for a spectral embedding. Spectral sparsification
theory ~\cite{spielman2011spectral} provides a rigorous framework for approximating graphs by
sparse subgraphs while preserving spectral properties.

Classical post-processing of quantum outputs has been studied by
Shaydulin et al. ~\cite{shaydulin2019multistart}, who showed that multi-start heuristics
significantly improve solution quality. Local search applied to decoded bitstrings,
in the spirit of the GRASP method ~\cite{festa2002randomized}, is now standard
practice in the field. A parallel line of work exploits the native structure of Rydberg
atom arrays, where the physical blockade radius naturally implements the independence
constraint. Ebadi et al.\ demonstrated MIS solutions on 289-node unit-disk
graphs \cite{ebadi2022quantum}, and subsequent hardware improvements have extended
this approach to larger instances \cite{scholl2021quantum}.

The present work builds on warm-starting and excitation-preserving ideas in a specific way.
Rather than initializing near a single classically-found solution, we prepare a uniform
quantum superposition over multiple distinct near-optimal solutions, so that a single
circuit execution optimizes over all seed branches simultaneously. The excitation-preserving
ansatz makes this viable and without it, the seeds mix into infeasible states and the
penalty term has to carry all feasibility enforcement. The bitstring post-processing pipeline
extends the local search approach of ~\cite{shaydulin2019multistart} with a two-sided
suppression mask and a maximality-augmentation step that together recover 2
to 3 additional vertices over raw quantum outputs. Unlike Rydberg-based approaches, the
framework places no geometric restriction on the problem graph and targets gate-based
processors directly.

%=============================================================================
\section{Problem Formulation}
\label{sec:formulation}
%=============================================================================

Let $G = (V, E)$ be an undirected graph on $n = |V|$ vertices. A subset $S \subseteq V$
is \emph{independent} if no two vertices in $S$ share an edge. The MIS problem seeks
the independent set of maximum cardinality,
\begin{equation}
  S^* \in \operatorname*{argmax}_{\substack{S \,\subseteq\, V \\ S \text{ independent}}} |S|.
  \label{eq:mis}
\end{equation}
To solve MIS with a quantum algorithm, the combinatorial structure is encoded in an Ising Hamiltonian whose ground state corresponds to the maximum independent set.

Associated with each vertex $i \in V$ a binary decision variable $x_i \in \{0, 1\}$, where $x_i = 1$ denotes that vertex $i$ belongs to the independent set and $x_i = 0$ denotes exclusion. The binary variable is related to an Ising spin variable $z_i \in \{-1, +1\}$ by the mapping
\begin{equation}
  x_i = \frac{1 - z_i}{2},
  \label{eq:mapping}
\end{equation}
so that $z_i = +1$ (qubit state $|0\rangle$) corresponds to $x_i = 0$ (vertex excluded) and $z_i = -1$ (qubit state $|1\rangle$) corresponds to $x_i = 1$ (vertex included). Substituting $x_i = (1-z_i)/2$ into the QUBO objective
$-\sum_i x_i + P\sum_{(i,j)\in E} x_i x_j$ and expanding gives the MIS Ising Hamiltonian
\begin{equation}
  H_{\mathrm{MIS}} = -\sum_{i \in V} z_i
                    + P \sum_{(i,j) \in E} \frac{(1-z_i)(1-z_j)}{4},
  \label{eq:ising}
\end{equation}
where the first term rewards including vertices ($z_i = -1$, i.e.\ $x_i=1$) and the second penalizes adjacent pairs: the factor $(1-z_i)(1-z_j)/4$ equals 1 when both vertices are in the set ($z_i=z_j=-1$) and 0 otherwise. Expanding and collecting terms, this reduces to
\begin{equation}
  H_{\mathrm{MIS}} = \sum_{i \in V} h_i\, z_i
                    + \sum_{(i,j) \in E} J_{ij}\, z_i z_j + C,
  \label{eq:ising2}
\end{equation}
with $h_i = \frac{1}{2} - \frac{P}{4}\deg(i)$, $J_{ij} = P/4$, and $C = P|E|/4 - n/2$. Choosing $P \geq 2$ is sufficient to make every constraint violation energetically unfavourable. In practice, we use $P = 2$, which provides a comfortable margin for all three benchmark instances. The energy $\langle H_{\mathrm{MIS}} \rangle$ is a dimensionless quantity representing the expectation value of the Ising Hamiltonian; lower energies correspond to larger feasible independent sets, with the global minimum achieved at the maximum independent set. In the quantum circuit, each vertex is represented by a single qubit; the computational basis state $|x_0 x_1 \cdots x_{n-1}\rangle$ encodes the candidate solution in which $x_i = 1$ (qubit in state $|1\rangle$) denotes inclusion in the independent set.

\begin{table}[!htbp]
\centering
\caption{Notation and terminology used throughout.}
\label{tab:notation}
\begin{tabular}{@{}ll@{}}
\toprule
Symbol / Term & Definition \\ \midrule
$n, |V|$ & Number of vertices (qubits) \\
$|E|$ & Number of graph edges \\
$P$ & Penalty weight for violated edges in $H_{\mathrm{MIS}}$ \\
$p$ & Number of ansatz layers (circuit depth parameter) \\
$k$ & Sparsification bandwidth (edges with $|\pi(i)-\pi(j)| \leq k$ retained) \\
$\chi$ & MPS bond dimension (entanglement capacity of tensor-network state) \\
Barren plateau & Region of parameter space with exponentially vanishing gradients \\
Heavy-hex & IBM's qubit connectivity: hexagonal lattice with additional ``heavy'' edges \\
fSim gate & Fermionic simulation gate; preserves excitation (Hamming weight) \\
EMA & Exponential moving average (for bitstring filtering heuristic) \\
\bottomrule
\end{tabular}
\end{table}

%=============================================================================
\section{Methodology}
\label{sec:methods}
%=============================================================================

Figure~\ref{fig:pipeline} summarises the full pipeline.
Steps~1 and~2 are about vertex reordering and distance-based sparsification. These steps
are applied universally before any circuit is constructed.
Step~3 runs VQE or QAOA with a CVaR objective and an optimizer; step~4 drives the
resulting bitstring samples to valid maximal independent sets via greedy repair and
stepwise extension.
When steps 1-–4 are not enough, we bring in the ancilla in step~5.
This step encodes multiple near-optimal seeds as a quantum superposition under an
excitation-preserving ansatz, and further, adds a CRZ interference layer that creates
cross-term contributions absent in any classical mixture.
Each component is detailed in the subsections below.

\begin{figure}[t]
  \centering
  \includegraphics[width=0.9\linewidth]{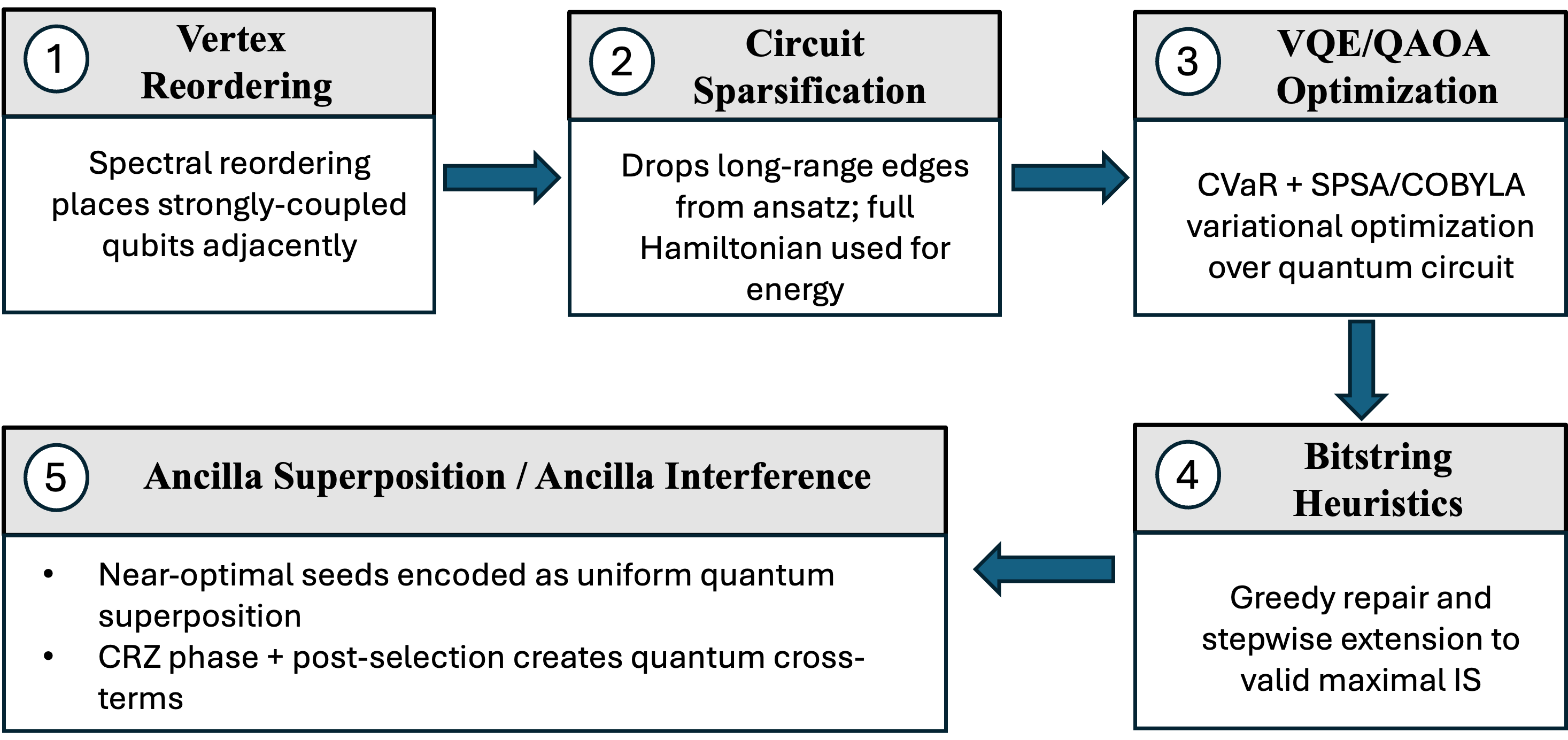}
  \caption{Overview of the six-step pipeline. Steps~1--4 are applied to all
  instances; steps~5 engage only when standard VQE and bitstring heuristics
  cannot recover the MIS.}
  \label{fig:pipeline}
\end{figure}

\subsection{Vertex Reordering}
\label{subsec:reorder}

The interaction graph of the MIS Ising Hamiltonian (\ref{eq:ising2})
is isomorphic to the original graph $G$: every edge $(i,j) \in E$
gives rise to a ZZ interaction $J_{ij} z_i z_j$ between qubits $i$ and
$j$, which must be implemented as a two-qubit gate in the variational
circuit. When vertex indices are assigned arbitrarily to qubit positions,
structurally adjacent vertices may be mapped to qubits far apart in
the qubit register, requiring long SWAP chains on hardware with limited
connectivity, or equivalently adding many layers of entangling gates
even in simulation.

The goal of vertex reordering is to find a permutation $\pi$ of the vertex set such that, after relabeling $i \mapsto \pi(i)$, strongly coupled qubit pairs have small index distance $|\pi(i) - \pi(j)|$. The benefit of reordering is not restricted to QAOA. Although the EfficientSU2 ansatz used for VQE entangles qubits only through a fixed nearest-neighbour CNOT chain (so reordering does not change the circuit structure), it does change the \emph{semantic alignment} between the circuit's entanglement structure and the problem's coupling structure. When strongly coupled qubits are adjacent after reordering, the nearest-neighbour CNOT gates directly entangle the most energetically important pairs, maximizing the leverage each variational parameter has over the energy landscape. Reordering therefore improves the effective expressibility per layer of EfficientSU2, allowing the same solution quality to be reached with fewer layers $p$.

Three reordering strategies were implemented and compared.

\textbf{Spectral reordering} uses the Fiedler vector \cite{fiedler1973algebraic}. The weighted graph Laplacian $L_{ij} = -|J_{ij}|$ for $i \neq j$ and $L_{ii} = \sum_{j \neq i} |J_{ij}|$ is constructed, and vertices are sorted by their coordinates in the eigenvector $\mathbf{v}_2$ corresponding to the second-smallest eigenvalue $\lambda_2$ (the algebraic connectivity). This one-dimensional spectral embedding places vertices on the real line so that strongly connected clusters map to contiguous intervals, globally minimizing the weighted sum of squared index distances $\sum_{(i,j)\in E} |J_{ij}| (\pi(i)-\pi(j))^2$.

\textbf{Coupling-weighted BFS} starts from the highest-degree vertex and performs a breadth-first traversal, visiting neighbours in descending order of a composite coupling score $\mathrm{score}(i,j) = |J_{ij}| \cdot \sqrt{d_i d_j}$ that combines interaction strength with the geometric mean of the endpoint degrees. This ensures that strongly and multiply-connected pairs are placed contiguously, but relies on local neighbourhood structure rather than global graph topology.

\textbf{Reverse Cuthill-McKee (RCM)} \cite{cuthill1969reducing} is the classical algorithm for sparse matrix bandwidth minimization. A BFS is initiated from the peripheral node (lowest degree), visiting neighbours in ascending degree order at each level, and the resulting ordering is then reversed. RCM directly minimizes the maximum index distance $\max_{(i,j)\in E}|\pi(i)-\pi(j)|$ (the matrix bandwidth).

Figure~\ref{fig:reorder} illustrates the effect of reordering on a small example. Table ~\ref{tab:reorder_compare} and the analysis in Section~\ref{subsec:reorder_compare} compare the three strategies quantitatively across all benchmark instances.

\begin{figure}[!htbp]
  \centering
  \includegraphics[width=0.9\linewidth]{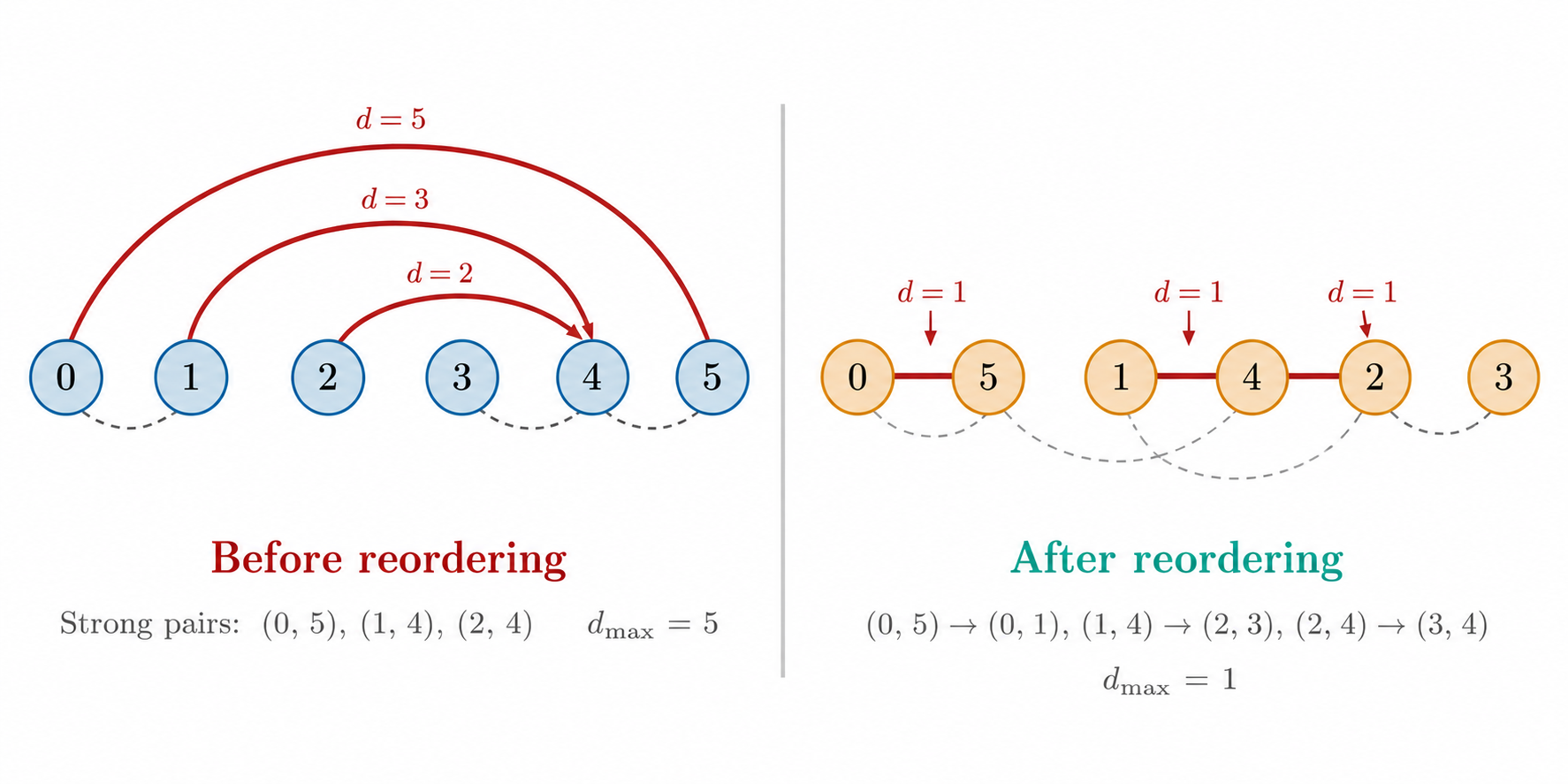}
  \caption{Effect of vertex reordering on a 6-node example. \textbf{Left:} Strong pairs span distances 2, 3, and 5. \textbf{Right:} After Fiedler reordering, all become nearest neighbours ($d=1$). Spectral reordering is consistently superior on dense instances (Table~\ref{tab:reorder_compare}).}
  \label{fig:reorder}
\end{figure}

\subsection{Distance-Based Sparsification}
\label{subsec:sparse}

An important structural distinction between VQE and QAOA determines whether sparsification is necessary. The EfficientSU2 ansatz used for VQE entangles qubits only through a fixed nearest-neighbour CNOT chain, regardless of the problem graph; its entanglement structure is independent of the Ising coupling matrix. Sparsification of the Hamiltonian, therefore, has no effect on VQE circuit depth. QAOA, by contrast, implements the cost unitary $e^{-i\gamma H_C}$ by inserting one RZZ gate per Ising coupling: every edge $(i,j) \in E$ with $|i-j| > 1$ adds a long-range two-qubit gate that inflates circuit depth. For QAOA, sparsification is therefore essential to keep the circuit depth within the MPS simulator and hardware limits.

After spectral reordering, the index distance $|i - j|$ between two coupled qubits correlates with the structural importance of their interaction. We exploit this through \emph{distance-based sparsification}: a bandwidth parameter $k$ is chosen, and only interactions with $|i - j| \leq k$ are used in the QAOA cost unitary, while the complete Hamiltonian is retained for energy evaluation throughout the optimization. Formally, the circuit coupling set is $J^{(k)}_{\mathrm{sparse}} = \{J_{ij} : |i-j| \leq k\}$, and the energy objective is evaluated using the full $J_{\mathrm{full}}$. This asymmetry between circuit construction and energy evaluation is the key design principle. The QAOA circuit operates in a reduced entanglement structure that keeps circuit depth tractable, while the classical optimizer receives accurate energy feedback from the complete Hamiltonian.

\subsection{Circuit Depth Analysis}
\label{subsec:depth}

Figure~\ref{fig:depth} shows two-qubit circuit depths for the 64-node instance. On the MPS simulator, vanilla QAOA at $p=1$ requires two-qubit depth of 60; after reordering and sparsification to $k=2$, this falls to 10 (6$\times$ reduction). On FakeSherbrooke (IBM's 127-qubit heavy-hex topology), the reduction is from 562 to 30 due to eliminated SWAP routing.

The MPS simulator's bond dimension grows exponentially with $k$ and $p$, limiting experiments to $k \leq 3$ for the 64-node instance. $k=4$
was explored separately for hardware parameter transfer (see Section~\ref{subsec:hw_setup}). For the denser 99- and 180-node instances, only $k=1$ is tractable at which point QAOA reduces to nearest-neighbour gates and offers no advantage over VQE. QAOA was therefore used only for the 64-node instance.

Figure~\ref{fig:coverage} shows edge coverage vs.\ bandwidth $k$. At $k=3$, the 64-node instance retains 21\% of edges; the 99- and 180-node instances retain only 2.3\% and 1.6\% at $k=1$, explaining their poor QAOA diversity.

\begin{figure}[!htbp]
  \centering
  \includegraphics[width=\linewidth]{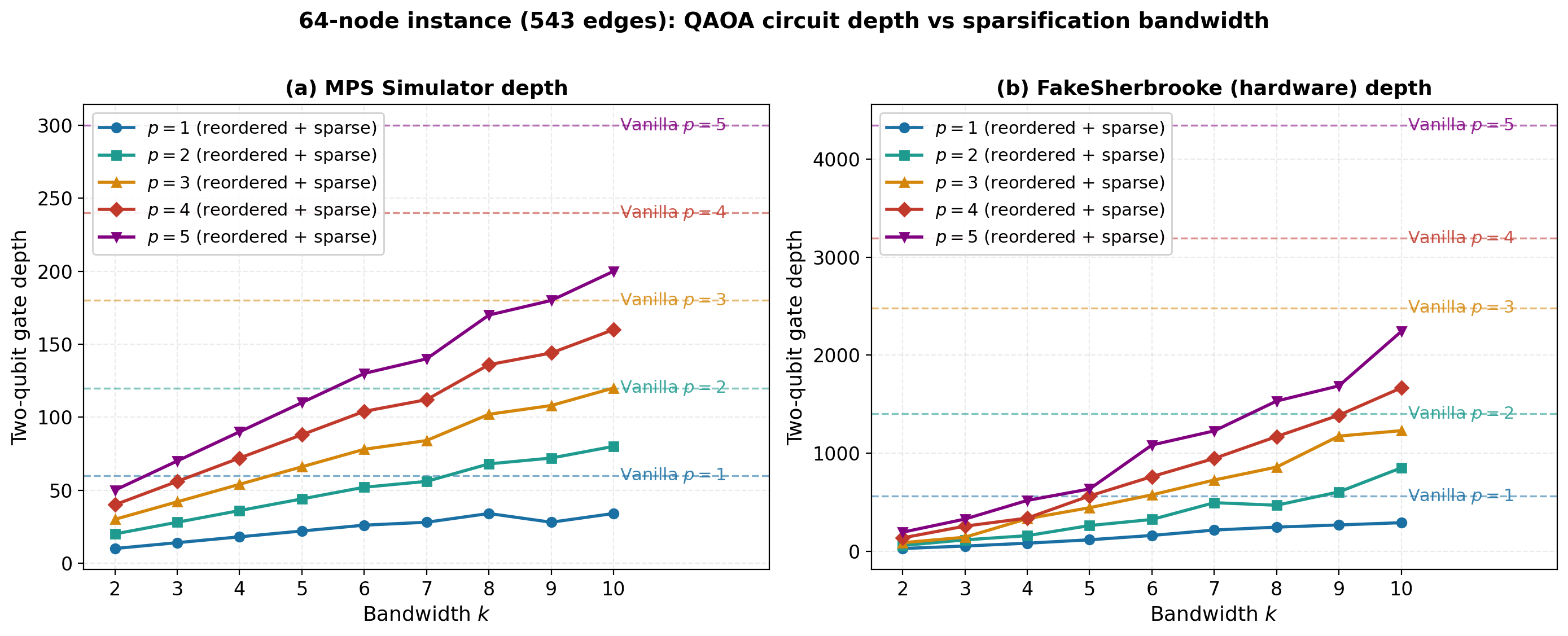}
  \caption{QAOA circuit depth analysis for the 64-node instance (543 edges).
    \textbf{(a)} MPS simulator two-qubit gate depth as a function of sparsification bandwidth $k$, for $p = 1$--$5$ (solid curves). Dashed horizontal lines show the corresponding vanilla (no reordering, no sparsification) depths at $p=1$--$5$, from bottom to top.
    \textbf{(b)} Compiled depth on FakeSherbrooke (IBM Sherbrooke 127-qubit heavy-hex hardware), including full SWAP routing and basis-gate decomposition. The hardware depth is far more sensitive to $k$ because every long-range gate requires a SWAP chain on the heavy-hex topology.
    VQE with EfficientSU2 is not shown because its depth is independent of sparsification. The ansatz entangles only adjacent qubits regardless of the problem graph.}
  \label{fig:depth}
\end{figure}

\begin{figure}[!htbp]
  \centering
  \includegraphics[width=\linewidth]{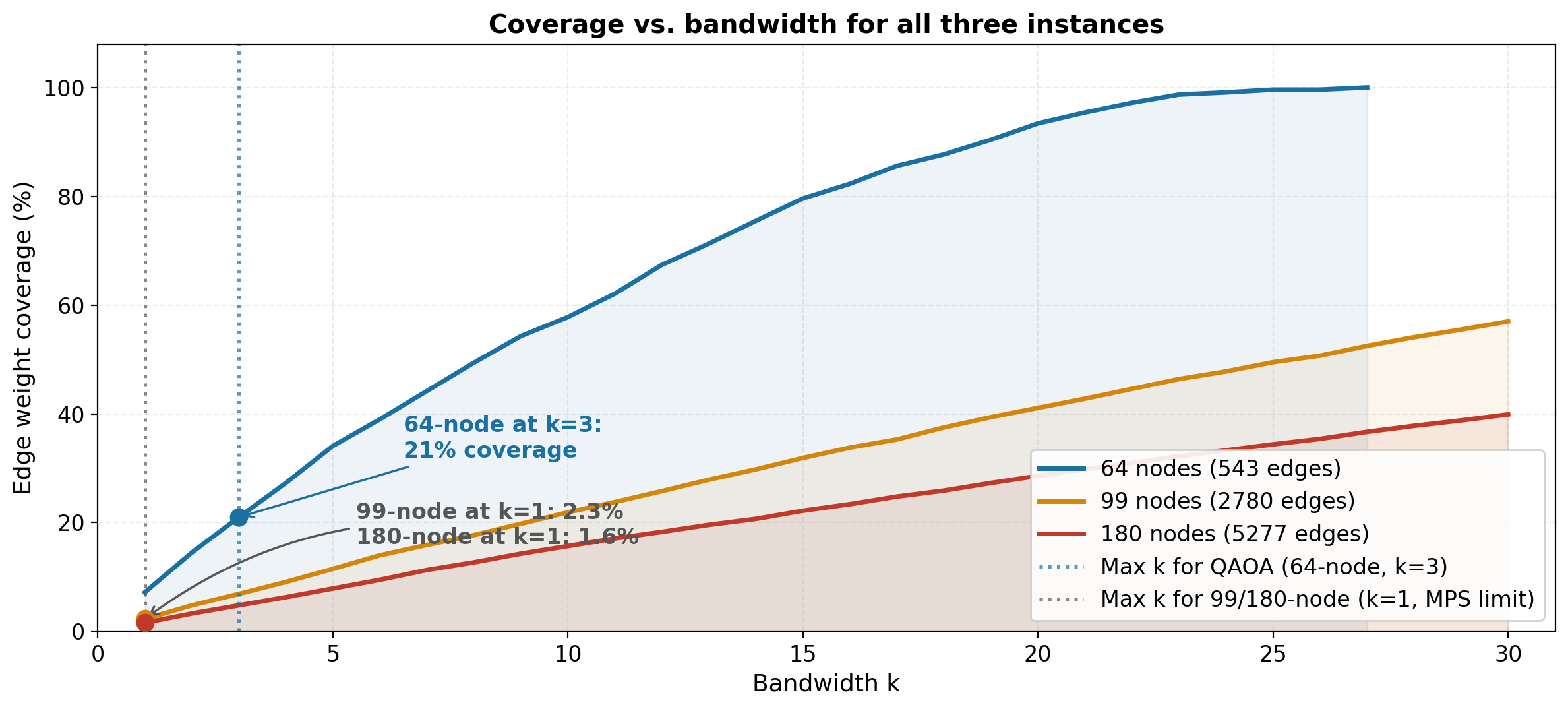}
  \caption{Edge weight coverage as a function of bandwidth $k$ for all three instances. Coverage measures what fraction of the total Ising coupling weight is retained in the QAOA circuit after sparsification. The 64-node instance (blue, 543 edges) has the highest density relative to its size and reaches 21\% coverage at $k=3$, the maximum bandwidth explored in QAOA experiments. The 99-node (amber, 2780 edges) and 180-node (red, 5277 edges) instances are far denser, and their coverage is only 2.3\% and 1.6\%, respectively, at $k=1$, the maximum the MPS simulator could handle. The dotted vertical lines mark these experimental limits. At $k=1$, the QAOA cost unitary contains only nearest-neighbour gates and is structurally equivalent to VQE, removing the motivation to pursue QAOA for the two larger instances.}
  \label{fig:coverage}
\end{figure}

\subsection{Reordering Strategy Comparison}
\label{subsec:reorder_compare}

Table~\ref{tab:reorder_compare} shows the number of edges retained at selected bandwidths for all three reordering strategies across the three benchmark instances. The results reveal a density-dependent pattern that motivates the choice of spectral reordering for the larger instances.

\begin{table}[!htbp]
\centering
\caption{Edges retained after reordering and sparsification for three reordering strategies across all three benchmark instances. Bold indicates the highest number of edges retained for each instance and bandwidth. A higher edge count at the same $k$ means better coverage of the Ising coupling structure in the QAOA circuit.}
\label{tab:reorder_compare}
\begin{tabular}{@{}lrrrrrrrrrrrr@{}}
\toprule
& \multicolumn{4}{c}{64-node (543 edges)}
& \multicolumn{4}{c}{99-node (2780 edges)}
& \multicolumn{4}{c}{180-node (5277 edges)} \\
\cmidrule(lr){2-5}\cmidrule(lr){6-9}\cmidrule(lr){10-13}
Method & $k{=}2$ & $k{=}5$ & $k{=}7$ & $k{=}10$
       & $k{=}2$ & $k{=}5$ & $k{=}7$ & $k{=}10$
       & $k{=}2$ & $k{=}5$ & $k{=}7$ & $k{=}10$ \\ \midrule
Spectral  & 79  & 185          & 240          & \textbf{314}
          & \textbf{134} & \textbf{319} & \textbf{442} & \textbf{609}
          & \textbf{175} & \textbf{416} & \textbf{594} & \textbf{831} \\
BFS       & 72  & 145          & 184          & 230
          & 107          & 271          & 386          & 544
          & 131          & 313          & 428          & 593 \\
RCM       & \textbf{99} & \textbf{192} & \textbf{242} & 309
          & 109          & 287          & 405          & 568
          & 125          & 308          & 427          & 631 \\ \bottomrule
\end{tabular}
\end{table}

RCM is competitive on the sparse 64-node instance but spectral reordering dominates on the denser 99- and 180-node instances, retaining 23--34\% more edges at tight bandwidths. The Fiedler vector's global optimization captures long-range connectivity that local methods (RCM, BFS) miss as density increases. We adopt spectral reordering throughout.

\subsection{Variational Quantum Eigensolver}
\label{subsec:vqe}

VQE \cite{peruzzo2014variational} is a hybrid quantum-classical algorithm that minimizes the energy expectation value $E(\boldsymbol{\theta}) = \langle\psi(\boldsymbol{\theta})|H|\psi(\boldsymbol{\theta})\rangle$ over the parameters $\boldsymbol{\theta}$ of a parameterized quantum state $|\psi(\boldsymbol{\theta})\rangle$ prepared by the ansatz circuit. The classical optimizer receives the expectation value from quantum measurements at each iteration and proposes updated parameters. We employ the \emph{Efficient SU(2)} ansatz \cite{kandala2017hardware}: each of $p$ repetition layers applies single-qubit $R_Y(\theta)R_X(\phi)$ rotations to every qubit, followed by a layer of CNOT gates connecting adjacent qubits in a linear chain. The total parameter count is $2n(p+1)$, growing linearly in system size for fixed $p$.

Two classical optimizers were evaluated. COBYLA (Constrained Optimization BY Linear Approximation) \cite{powell1994direct} is a derivative-free sequential
linear programming method that constructs local linear approximations to the objective. It converges reliably on small parameter spaces but tends to stall in high-dimensional landscapes beyond a few hundred parameters. SPSA (Simultaneous Perturbation Stochastic Approximation) \cite{spall1992multivariate}
estimates the gradient using two objective evaluations at simultaneously perturbed parameter values, with perturbation directions drawn randomly. This reduces the gradient estimation cost from $O(n)$ to $O(1)$ function evaluations per step and is particularly well-suited to shot-noise-dominated quantum objectives, where individual energy measurements carry substantial statistical uncertainty. In our experiments, SPSA consistently outperformed COBYLA on larger instances, where the parameter count reaches $2 \times 180 \times 3 = 1080$.

For all instances except Phase~3 of the 180-node pipeline, we replace the standard expectation value $\langle H \rangle$ with the \emph{Conditional Value at Risk} (CVaR) objective \cite{barkoutsos2020improving} with $\alpha \in [0.1, 0.2]$. Rather than minimizing the average energy over all measurement outcomes, CVaR minimizes the average energy of the lowest-energy $\alpha$-fraction of samples, concentrating the gradient signal on the best bitstrings and reducing the influence of high-energy outliers. This is particularly beneficial when the measurement distribution is broad and the MIS configurations are rare: CVaR effectively focuses the optimizer on the tail of the distribution where good solutions reside. For the excitation-preserving VQE phase of the 180-node pipeline
(Phase~3, described in Section~\ref{subsec:ancilla}), the ancilla superposition initialization already places the circuit near good solutions, so the distribution is naturally concentrated and the standard expectation value is used instead.

\subsection{Quantum Approximate Optimization Algorithm}
\label{subsec:qaoa}

QAOA \cite{farhi2014quantum} takes the equal superposition $\vert+\rangle^{\otimes n}$ as its initial state and applies $p$ alternating layers of cost and mixer unitaries:
\begin{equation}
  |\psi(\boldsymbol{\gamma},\boldsymbol{\beta})\rangle
  = \prod_{\ell=1}^{p} e^{-i\beta_\ell H_M} e^{-i\gamma_\ell H_C}\,|+\rangle^{\otimes n},
  \label{eq:qaoa}
\end{equation}
where $H_C = H_{\mathrm{MIS}}$ is the cost Hamiltonian and $H_M = \sum_i X_i$ is the standard transverse-field mixer. We use the standard, unconstrained transverse-field mixer rather than a feasibility-preserving mixer from the Quantum Alternating
Operator Ansatz framework~\cite{hadfield2019quantum}. Constrained mixers for MIS in this framework typically use multi-controlled gates over each vertex's full neighbourhood to restrict mixer moves to the feasible subspace~\cite{tomesh2024tradeoffs}. For
dense graphs such as the 99- and 180-node instances studied here (average degree 56 and 59 respectively), the resulting gate count
per mixer layer scales as $O(|E|)$ rather than $O(n)$, and the wide-control multi-qubit gates induce entanglement far exceeding
the MPS-tractable regime characterised in Section~\ref{subsec:theory_reorder}. We therefore use the standard unconstrained mixer as the QAOA baseline; the excitation-preserving ansatz of Section~\ref{subsec:ancilla} separately achieves a weaker but computationally tractable form of feasibility bias (Hamming-weight conservation rather than exact edge-level feasibility) using only nearest-neighbour two-qubit gates.

Each cost layer implements $e^{-i\gamma H_C}$ as a sequence of $\text{RZZ}(2\gamma J_{ij})$ two-qubit gates for each coupled pair and $\text{RZ}(2\gamma h_i)$ single-qubit rotations for each vertex field; the mixer layer applies $\text{RX}(2\beta)$ to every qubit. We use the same sparsified coupling set $J^{(k)}_\mathrm{sparse}$ for circuit construction as in the VQE case. Gates are sorted in ascending order of index distance $|i-j|$ within each cost layer to maximally exploit parallelism on linear-connectivity hardware.

\subsection{Bitstring Heuristics}
\label{subsec:heuristics}

The measurement distribution produced by the optimized variational circuit concentrates probability weight on low-energy Ising configurations, but raw quantum outputs alone are insufficient to identify a maximum independent set: bitstrings may still violate the independence constraint, or represent feasible but non-maximal independent sets whose size falls short of the optimum. The post-processing pipeline described in this section is one of the most important components of the overall framework. It converts the statistical output of a finite-shot quantum experiment into a rich population of high-quality independent sets by combining a history-guided suppression mechanism with greedy correction and stepwise extension to maximality.

Classical post-processing of quantum outputs is well-established: multi-start local search improves QAOA solutions \cite{shaydulin2019multistart}, and greedy repair addresses infeasibility \cite{bravyi2020obstacles,shaydulin2019multistart}. Our approach differs in three ways: (i) EMA-based history accumulation across sampling rounds rather than independent restarts, (ii) two-sided suppression masks addressing both over- and under-selection, and (iii) a maximality augmentation step that none of the prior works include. A related technique in quantum chemistry \cite{robledo2025chemistry} uses per-qubit statistics to restore particle-number symmetry in noisy bitstrings; we apply the same principle to combinatorial solution quality in noiseless simulation.

The pipeline operates in two phases. During variational optimization, $n_{\mathrm{shots}}$ measurement shots per iteration (1000--4000 in our experiments) are used to estimate the energy expectation value. After convergence, the optimized circuit parameters are fixed and the circuit is sampled repeatedly in successive blocks. After each block, the top-$M$ bitstrings are extracted, the suppression weights are
updated, and the greedy correction pipeline is applied. The specific
sampling protocol is described in Section~\ref{subsubsec:modes} (Mode~B).  This separation between optimization shots and post-optimization sampling shots is deliberate. The optimization requires only enough shots to estimate gradient directions reliably, while the post-processing benefits from a much larger and more diverse sample.

\subsubsection{History-Guided Suppression Weights}
\label{subsubsec:ema}

The heuristic maintains a \emph{suppression weight} $w_i \in [0,1]$ for each qubit $i$, initialized to $0.5$ (no prior information). The weight $w_i$ tracks how frequently qubit $i$ is in the $|0\rangle$ state (not selected) among the best observed bitstrings: a high value of $w_i$ means qubit $i$ is almost always absent from good independent sets and is therefore a candidate for suppression.

After each batch of measurements, the top-$M$ bitstrings are selected according to a quality score that rewards large independent set size and penalizes violated independence constraints. For each qubit $i$, the \emph{zero-frequency} over the top-$M$ pool $T$ is
\begin{equation}
  z_i^{(t)} = \frac{1}{|T|}\sum_{s \in T} \mathbf{1}[s_i = 0],
  \label{eq:zfreq}
\end{equation}
where $s_i$ denotes the value of bit $i$ in bitstring $s$ (in qubit-index space, not raw string position; see the endianness remark below). The suppression weight is then updated via an exponential moving average (EMA) \cite{hunter1986exponentially}:
\begin{equation}
  w_i \;\leftarrow\; \alpha\, z_i^{(t)} + (1-\alpha)\, w_i,
  \label{eq:ema}
\end{equation}
where $\alpha \in (0,1)$ controls the adaptation rate. The weights are interpreted through a two-sided threshold scheme with an upper
limit $\theta^+$ and a lower limit $\theta^-$, with $0 < \theta^- < \theta^+ < 1$
(typical values $\theta^- = 0.25$, $\theta^+ = 0.75$). Qubits with $w_i > \theta^+$
are placed in the \emph{upper suppression mask} $\mathcal{M}^+$: they are almost
always absent from good solutions and are candidates for removal from a sampled
set. Qubits with $w_i < \theta^-$ are placed in the \emph{lower suppression mask}
$\mathcal{M}^-$: they are almost always present in good solutions and are candidates
for addition to the set, provided doing so does not violate independence. Qubits
with $\theta^- \leq w_i \leq \theta^+$ are in an uncertain regime and are left
unchanged. This two-sided scheme addresses both failure modes of the sampler:
over-selection (too many $|1\rangle$ states causing independence violations) and
under-selection (too few $|1\rangle$ states leaving the set smaller than it could be).

\medskip
\noindent\textbf{Remark on the Qiskit bitstring convention.}
Qiskit returns measurement results in \emph{little-endian} order, where the rightmost character of the bitstring corresponds to qubit 0 and the leftmost to qubit $n-1$. All heuristic logic operates in qubit-index space. The conversion between the raw Qiskit string and qubit-index representation is handled exclusively at the boundary by helper functions.

\subsubsection{Sampling Modes}
\label{subsubsec:modes}

The history-guided heuristic can be deployed in two distinct operational modes, depending on how the QAOA or VQE circuit interacts with the correction loop.

\textbf{Mode A (warm-started iterative optimization).} The full optimize, sample, and correct cycle is repeated multiple times. After each round, the converged parameters $\gamma^*, \beta^*$ are carried forward as the starting point for the next round's optimizer (a warm start~\cite{egger2021warm,tate2023warm}), rather than restarting from a random initial point. The suppression weights $w_i$ are updated from the sampled bitstrings of the current round and inform the correction applied before the next round begins. Corrected bitstrings are \emph{not} fed back into the quantum circuit; only the classical parameters are transferred. Figure~\ref{fig:modeA} illustrates this pipeline.

\begin{figure}[!htbp]
\centering
\begin{tikzpicture}[
  blk/.style={rectangle, rounded corners=4pt,
              draw=#1, fill=#1!10, text=#1,
              minimum width=2.9cm, minimum height=0.8cm,
              font=\footnotesize\bfseries, align=center},
  arr/.style={-{Stealth[length=7pt,width=5pt]}, line width=1.2pt, #1},
  lbl/.style={font=\tiny, gray},
]
\node[blk=accentblue]  (A1) at (0,   0)    {Optimise\\$(\boldsymbol{\gamma},\boldsymbol{\beta})$};
\node[blk=accentblue]  (A2) at (3.6, 0)    {Sample\\bitstrings};
\node[blk=accentteal]  (A3) at (7.2, 0)    {Score \& rank\\top-$M$};
\node[blk=accentteal]  (A4) at (7.2,-2.0)  {Update EMA\\weights $w_i$};
\node[blk=accentamber] (A5) at (3.6,-2.0)  {Greedy\\correction};
\node[blk=accentamber] (A6) at (0,  -2.0)  {Warm-start\\$\gamma^*\!\to\!\gamma_0$};
\draw[arr=accentblue] (A1.east)--(A2.west);
\node[lbl] at (1.8,-0.55) {$n_{\mathrm{shots}}$};
\draw[arr=accentblue] (A2.east)--(A3.west);
\node[lbl] at (5.4,-0.55) {quality score};
\draw[arr=accentteal] (A3.south)--(A4.north);
\node[lbl,anchor=west] at (7.65,-1.0) {EMA~\eqref{eq:ema}};
\draw[arr=accentteal] (A4.west)--(A5.east);
\node[lbl] at (5.4,-1.45) {masks $\mathcal{M}^+\!,\mathcal{M}^-$};
\draw[arr=accentamber] (A5.west)--(A6.east);
\node[lbl] at (1.8,-1.45) {$\gamma^*,\beta^*$};
\draw[arr=accentamber] (A6.west)--++(-0.65,0) |- (A1.west);
\node[lbl,anchor=east] at (-0.7,-1.0) {repeat};
\end{tikzpicture}
\caption{Mode A: warm-started iterative optimization. After each round the
  converged parameters initialize the next optimizer call.
  Only classical parameters are transferred between rounds; corrected
  bitstrings are not fed back into the quantum circuit.
  The parameter labels $\gamma^*, \beta^*$ follow QAOA notation; for VQE
  with the EfficientSU2 ansatz the same diagram applies with $\boldsymbol{\theta}^*$
  denoting the rotation angles $\{R_Y(\theta), R_X(\phi)\}$ of each layer.}
\label{fig:modeA}
\end{figure}

\textbf{Mode B (single optimization, block-wise sampling).} The variational circuit is optimized \emph{once} to convergence. The converged parameters $\gamma^*, \beta^*$ are then held fixed, and the circuit is sampled repeatedly in successive \emph{blocks} of $B$ shots each. After each block, the top-$M$ bitstrings are extracted, the zero-frequency statistics (\ref{eq:zfreq}) are computed, and the suppression weights are updated via the EMA (\ref{eq:ema}). The greedy correction is applied to all bitstrings in the block using the current weights. This continues until the weights stabilize or a fixed number of blocks has been processed. Figure~\ref{fig:modeB} illustrates this pipeline.

\begin{figure}[!htbp]
\centering
\begin{tikzpicture}[
  blk/.style={rectangle, rounded corners=4pt,
              draw=#1, fill=#1!10, text=#1,
              minimum width=2.9cm, minimum height=0.8cm,
              font=\footnotesize\bfseries, align=center},
  arr/.style={-{Stealth[length=7pt,width=5pt]}, line width=1.2pt, #1},
  darr/.style={-{Stealth[length=7pt,width=5pt]}, line width=1.0pt, #1, dashed},
  lbl/.style={font=\tiny, gray},
]
\node[blk=navyblue, minimum width=2.4cm]
      (B0) at (0,0) {Optimise\\(once only)};
\draw[arr=navyblue] (B0.east)--++(1.1,0);
\node[lbl,anchor=north] at (1.55,-0.55) {$\gamma^*\!,\beta^*$ fixed};
\node[blk=accentblue]  (B1) at (4.5, 0)    {Sample block\\$B$ shots};
\node[blk=accentteal]  (B2) at (8.3, 0)    {Score \& rank\\top-$M$};
\node[blk=accentteal]  (B3) at (8.3,-2.0)  {Update EMA\\weights $w_i$};
\node[blk=accentamber] (B4) at (4.5,-2.0)  {Greedy\\correction};
\draw[arr=accentblue] (B1.east)--(B2.west);
\node[lbl] at (6.4,-0.55) {quality score};
\draw[arr=accentteal] (B2.south)--(B3.north);
\node[lbl,anchor=west] at (8.75,-1.0) {EMA~\eqref{eq:ema}};
\draw[arr=accentteal] (B3.west)--(B4.east);
\node[lbl] at (6.4,-1.45) {masks $\mathcal{M}^+\!,\mathcal{M}^-$};
\draw[arr=accentblue] (B4.west)--++(-0.65,0) |- (B1.west);
\node[lbl,anchor=east] at (3.8,-1.0) {next block};
\draw[darr=gray] (B4.south)--++(0,-0.65)
      node[below,lbl,align=center]{converged $w_i \Rightarrow$ stop};
\end{tikzpicture}
\caption{Mode B: single optimization with block-wise sampling. The circuit is optimized once and parameters are fixed. Successive blocks of $B$ shots are drawn; after each block the EMA suppression weights are updated and greedy correction is applied. No re-optimization takes place.
As with Mode A (Figure~\ref{fig:modeA}), the parameter notation $\gamma^*, \beta^*$ follows QAOA convention; for VQE with EfficientSU2, the parameters are the layer rotation angles $\boldsymbol{\theta}^*$, and the same pipeline applies without modification.}
\label{fig:modeB}
\end{figure}

Mode A is preferable when the quantum runtime budget allows repeated circuit optimization, since the parameters adapt to the evolving suppression history. Mode B is more suitable when quantum runtime is constrained: a single optimization pass is performed, and the classical correction loop extracts as much information as possible from repeated sampling at fixed parameters. In the experiments reported here, we use Mode B for all three benchmark instances, with the $50000$-shot post-optimization sample divided into blocks.

\subsubsection{Greedy Bitstring Correction}
\label{subsubsec:greedy}

Once the suppression masks have been formed, each sampled bitstring is
corrected in two complementary passes. Greedy repair of infeasible solutions
is widely used in combinatorial optimization \cite{festa2002randomized} and
has been applied to quantum bitstring decoding in several recent works
\cite{shaydulin2019multistart,festa2002randomized}. The key distinction of
our approach is that the two-sided mask structure addresses both over- and
under-selection, and the correction order is determined by the learned weights
$w_i$ rather than by a fixed rule such as degree or random order.

The selected qubit set for a bitstring $s$ is $S = \{i : s_i = 1\}$. In
\textbf{Pass~1} (upper mask), qubits in $\mathcal{M}^+ \cap S$ that are
currently selected are sorted in descending order of $w_i$ and removed from
$S$ one at a time, stopping as soon as $S$ is a valid independent set; qubits
with high $w_i$ that do not cause any violation are left in place, preserving
set size. In \textbf{Pass~2} (lower mask), qubits in $\mathcal{M}^- \setminus S$
that are not yet selected are sorted in ascending order of $w_i$ and
tentatively added to $S$, but only if doing so creates no edge between the
new qubit and the current members of $S$; this grows the independent set
toward what the historical evidence suggests is correct.

\begin{algorithm}[!htbp]
\caption{Two-threshold greedy bitstring correction}
\label{alg:greedy}
\begin{algorithmic}[1]
\Require Selected set $S$, weights $\{w_i\}$, adjacency $\{N(i)\}$,
         thresholds $\theta^- < \theta^+$
\Ensure $S$ is an independent set, aligned with historical evidence
\State \textit{// Pass 1: remove over-selected qubits}
\State $\mathcal{M}^+ \gets \{i : w_i > \theta^+\}$
\State $\mathrm{order}^+ \gets \mathrm{sort}(\mathcal{M}^+ \cap S,\;
       \text{key} = w_i,\; \text{descending})$
\For{each $i$ in $\mathrm{order}^+$}
    \If{$S$ is independent} \textbf{break} \EndIf
    \State $S \gets S \setminus \{i\}$
\EndFor
\State \textit{// Pass 2: add under-selected qubits}
\State $\mathcal{M}^- \gets \{i : w_i < \theta^-\}$
\State $\mathrm{order}^- \gets \mathrm{sort}(\mathcal{M}^- \setminus S,\;
       \text{key} = w_i,\; \text{ascending})$
\For{each $i$ in $\mathrm{order}^-$}
    \If{$N(i) \cap S = \emptyset$}
        \State $S \gets S \cup \{i\}$
    \EndIf
\EndFor
\State \Return $S$
\end{algorithmic}
\end{algorithm}

Pass~1 ensures feasibility by removing the most historically absent qubits
until no independence constraint is violated. Pass~2 then grows the set by
including qubits that are historically almost always selected, provided they
are compatible with the current set. Together the two passes produce a
feasible independent set that is more consistent with the accumulated evidence
than the raw quantum output.

\subsubsection{Stepwise Extension to a Maximal Independent Set}
\label{subsubsec:stepwise}

The corrected bitstring is feasible but not necessarily \emph{maximal}. We extend it by repeatedly adding vertices $v$ with $N(v) \cap S = \emptyset$ until no such vertex exists. The selection rule is minimum degree \cite{halldorsson1997greedy}, which tends to produce larger maximal sets. The complete pipeline—greedy correction followed by stepwise extension—recovers 2--3 additional vertices over raw quantum outputs across all benchmark instances.
At each step, the \emph{candidate set} of vertices compatible with the current independent set $S$ is
\begin{equation}
  \mathcal{C}(S) = \bigl\{v \notin S \;\big|\; N(v) \cap S = \emptyset\bigr\},
  \label{eq:candidates}
\end{equation}
where $N(v)$ is the neighborhood of $v$. A single vertex is selected from $\mathcal{C}(S)$ and added to $S$; the process repeats until $\mathcal{C}(S) = \emptyset$, at which point $S$ is maximal.

\begin{algorithm}[H]
\caption{Stepwise extension to a maximal independent set}
\label{alg:stepwise}
\begin{algorithmic}[1]
\Require Feasible independent set $S$, adjacency $\{N(v)\}$,
         selection rule \textsc{Select}
\Ensure Maximal independent set $S$
\While{$\mathcal{C}(S) \neq \emptyset$}
    \State $v^* \gets \textsc{Select}\bigl(\mathcal{C}(S), S\bigr)$
    \State $S \gets S \cup \{v^*\}$
\EndWhile
\State \Return $S$
\end{algorithmic}
\end{algorithm}

\subsection{Excitation-Preserving Ansatz with Ancilla Superposition}
\label{subsec:ancilla}

For the 180-node instance, the combined effect of standard VQE optimization and bitstring
post-processing yields multiple independent sets of size 14 but cannot break through to size
15, the known MIS cardinality. The fundamental difficulty as we see is that configurations of
size 14 and size 15 differ by a single vertex, and the energy gap between them is small
relative to the landscape features accessible to a general-purpose variational ansatz
initialized from the zero state.

The technique we introduce exploits the quantum superposition principle to address this
limitation directly. Suppose the classical post-processing has identified $m$ independent
sets $s^{(1)}, \ldots, s^{(m)}$ of size $k$ (in our case $k = 14$, $m = 4$). Instead of
initializing the variational circuit at any single one of these solutions, we prepare the
uniform superposition
\begin{equation}
  |\Psi_0\rangle = \frac{1}{\sqrt{m}} \sum_{j=1}^{m} |s^{(j)}\rangle,
  \label{eq:superpos}
\end{equation}
where each $|s^{(j)}\rangle$ is the computational basis state encoding solution $s^{(j)}$.

\textbf{Why ancilla qubits are needed.}
There are practical difficulties in preparing the superposition (\ref{eq:superpos}) directly
on the $n$ data qubits. The primary difficulty is that preparing an arbitrary superposition of $m$
computational basis states requires a mechanism to ``address'' each state individually.
A Hadamard gate on all qubits produces an equal superposition over \emph{all} $2^n$ basis
states, not just the $m$ desired seeds. Ancilla qubits solve this problem by acting as an
\emph{index register}: each ancilla configuration $|j\rangle$ labels one seed state, and
controlled operations write the corresponding seed onto the data register only when the
ancilla holds that label. The result is a superposition over exactly the $m$ seeds, with
the ancilla register serving as a ``which-seed'' tag entangled with the data.

\textbf{Construction overview.}
The circuit uses $a = \lceil\log_2 m\rceil$ ancilla qubits (2 ancillas for $m=4$ seeds). The preparation proceeds in three conceptual steps:
\begin{enumerate}[noitemsep]
  \item \textbf{Create ancilla superposition:} Apply Hadamard gates to all $a$ ancilla qubits, producing the uniform superposition $(1/\sqrt{m})\sum_{j=1}^{m}|j\rangle_{\mathrm{anc}}$ over all index values.
  \item \textbf{Controlled seed loading:} For each seed index $j$, apply multi-controlled X gates (generalized Toffoli gates) that flip the data qubits corresponding to the vertices in $s^{(j)}$, \emph{conditioned on} the ancilla register being in state $|j\rangle$. This writes seed $s^{(j)}$ onto the data register only in the branch where the ancilla holds index $j$.
  \item \textbf{Final entangled state:} After all seeds are loaded, the joint state is $(1/\sqrt{m})\sum_{j=1}^{m}|s^{(j)}\rangle_{\mathrm{data}} \otimes |j\rangle_{\mathrm{anc}}$. The ancilla and data registers are now entangled: measuring the ancilla would collapse the data register to the corresponding seed, but we do not measure the ancilla during optimization.
\end{enumerate}

\textbf{Implementing the controlled loading.}
To condition on ancilla state $|j\rangle$ with binary representation $j = b_{a-1}\cdots b_1 b_0$, we temporarily flip ancilla qubits where $b_i = 0$ (using X gates), apply the multi-controlled X gates with all-ones control, then unflip the same ancilla qubits. This ``control pattern conversion'' ensures each seed is written only when the ancilla holds its designated index. After all $m$ seeds have been loaded, the ancilla register is left entangled with the data register. During variational optimization, measurements are taken only on the data qubits; the ancilla register is either traced out (ignored) or measured separately for diagnostics. The effective state on the data register alone is the desired superposition $|\Psi_0\rangle$.

Figure~\ref{fig:ancilla_circuit} shows a small illustrative instance of this initialization circuit for $n = 5$ data qubits, $m = 4$ seed states, and $a = 2$ ancilla qubits. The Hadamard gates on qubits $q_5$ and $q_6$ create the equal superposition over the four ancilla basis states; each subsequent barrier-delimited block applies the multi-controlled X gates that load one of the four data basis states.

\begin{figure}[!htbp]
  \centering
  \includegraphics[width=\linewidth]{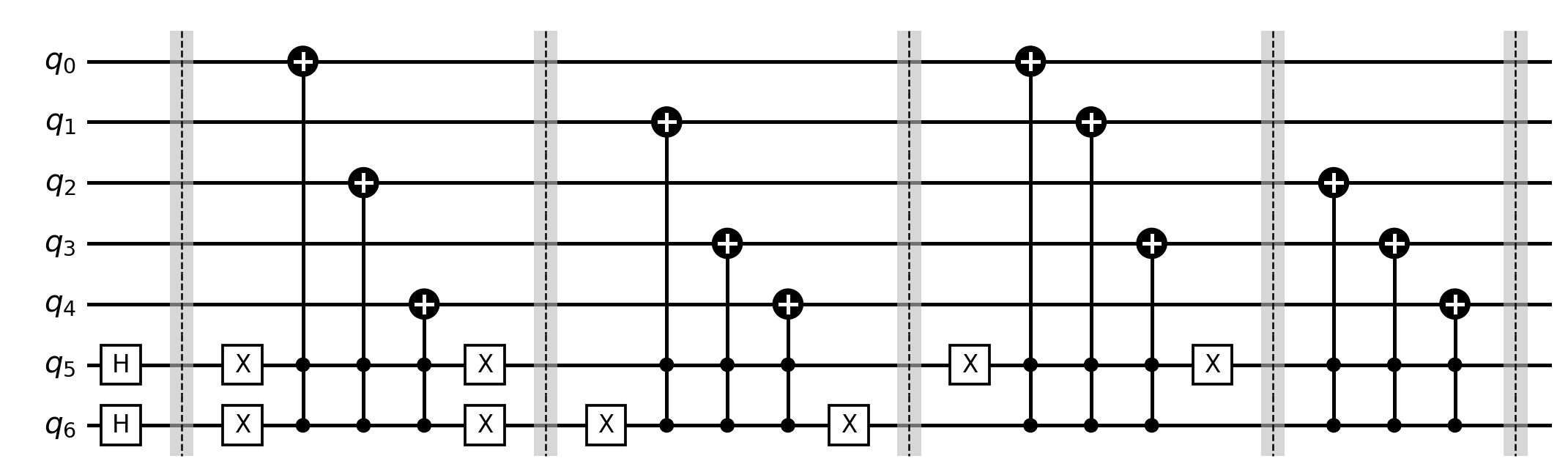}
  \caption{Ancilla superposition initialization circuit for a 5-qubit data register ($q_0$--$q_4$) with $m = 4$ seed basis states encoded using $a = 2$ ancilla qubits ($q_5$, $q_6$). Hadamard gates on the ancilla register create a uniform superposition over all four computational basis states. Each of the four barrier-delimited blocks then coherently writes one seed state onto the data register via multi-controlled X (Toffoli) gates conditioned on the corresponding ancilla state. For the 180-node instance, $q_0$--$q_{179}$ serve as the data register and two ancilla qubits encode the four near-optimal size-14 seed solutions.}
  \label{fig:ancilla_circuit}
\end{figure}

After initialization, the data register which is now in the superposition state $|\Psi_0\rangle$, is evolved under an \emph{excitation-preserving} variational ansatz \cite{gard2020efficient}. For the 180-node instance with $p$ layers, the ansatz applies $p(n-1)$ parametrized $fSim$ gates in a linear nearest-neighbour topology. The ancilla preparation circuit contributes an additional depth of $O(m \cdot k)$ multi-controlled X gates, where $m=4$ is the number of seeds and $k=14$ is the seed cardinality; in practice this amounts to approximately 60 additional two-qubit gate layers after decomposition into native gates. This circuit family is built from parametrized $fSim$ (fermionic simulation) gates of the form
\begin{equation}
  fSim(\theta, \varphi) = \begin{pmatrix} 1 & 0 & 0 & 0 \\ 0 & \cos\theta & i\sin\theta & 0 \\ 0 & i\sin\theta & \cos\theta & 0 \\ 0 & 0 & 0 & e^{-i\varphi} \end{pmatrix},
  \label{eq:fsim}
\end{equation}
which swap excitations between two qubits while preserving the total excitation count. The gate has two parameters: $\theta$ controls the amplitude of the excitation hop between the $|01\rangle$ and $|10\rangle$ subspace (the iSWAP-like term), while $\varphi$ is a \emph{conditional phase} applied to the $|11\rangle$ component, representing a ZZ-type interaction between the two qubits when both are simultaneously excited. This interaction term is the crucial element that separates the $fSim$ gate from a simple excitation-hopping gate. For example, at $\theta = \pi/2$, $\varphi = 0$, the gate reduces to iSWAP, which is Clifford and efficiently simulable. For generic $(\theta, \varphi)$ with $\varphi \neq 0$, the gate is non-Clifford and generates entanglement patterns that grow with circuit depth in a way that is classically intractable. By Proposition~\ref{prop:bond_dim}, the MPS bond dimension of the $p$-layer circuit satisfies $\chi_{\max} \leq m \cdot 2^p$, growing exponentially in the number of layers. For the $m=4$, $p=2$ configuration used here, $\chi_{\max} = 16$, placing the circuit within the exact MPS-simulable regime. Beyond $p \geq 4$ (where $\chi_{\max} \geq 64$), exact simulation via MPS requires exponentially growing memory, and approximate simulation via truncated MPS introduces errors that compromise the optimization landscape. This is the point at which hardware execution becomes preferable: the quantum device computes the full $fSim(\theta,\varphi)$ unitary exactly without MPS truncation, enabling exploration of variational landscapes inaccessible to classical simulators. Because all seed states have exactly $k = 14$ qubits in the $|1\rangle$ state (representing inclusion in the independent set), and the $fSim$ gates conserve Hamming weight, the variational state remains in the $\binom{n}{k}$-dimensional subspace of $n$-qubit states with exactly $k$ ones throughout the entire optimization. This has two important consequences. First, the circuit automatically respects the cardinality constraint; no penalty term is needed to prevent the ansatz from exploring configurations of the wrong size.
Second, because the landscape is restricted to a much smaller subspace ($\binom{180}{14} \approx 5 \times 10^{18}$ rather than $2^{180}$), the optimizer can make finer distinctions between near-optimal solutions.

The seed states $s^{(1)}, \ldots, s^{(4)}$ are distinct size-14 independent sets that
cover different parts of the 180-node graph. Their superposition $|\Psi_0\rangle$
encodes all four seeds in a single pure quantum state. The excitation-preserving
unitary $U(\boldsymbol{\theta})$ acts on the data register alone, evolving all four
branches simultaneously under the same variational parameters. The energy optimized
by SPSA is $\langle H\rangle = \frac{1}{m}\sum_{j=1}^m\langle s^{(j)}|U^\dagger H U|s^{(j)}\rangle$,
the average energy across all $m$ seeds; the cross terms vanish because the ancilla
basis states are orthogonal. The optimizer therefore finds parameters that are optimal
across the full seed population in a single circuit execution,
rather than optimal for any individual seed. This quantum-parallel search over
multiple carefully chosen initializations is not accessible to any single-seed
approach and distinguishes the method fundamentally from warm-starting with a
single classical guess.

The full pipeline for the 180-node instance consists of four phases. In the first phase, standard VQE with the EfficientSU(2) ansatz ($p=2$, SPSA optimizer, CVaR $\alpha \in [0.1,0.2]$, 2000--4000 shots per evaluation, 4000--6000 iterations) is run from the zero initial state. Bitstring post-processing applied to the resulting measurement distribution yields multiple independent sets of size 14, from which a set of diverse solutions are selected as seeds. The diversity criterion is based on Jaccard distance: starting from the lowest-energy size-14 solution, each subsequent seed is chosen to maximize the minimum Jaccard distance to all previously selected seeds, ensuring that the seed population covers distinct regions of the solution space rather than clustering around a single local optimum. The four selected seeds have pairwise Jaccard distances ranging from approximately 0.88 to 1.0 (where 1.0 indicates disjoint vertex sets), confirming broad coverage:
\begin{align*}
  s^{(1)} &= \{2,22,32,38,42,51,95,100,113,121,134,154,163,177\}, \\
  s^{(2)} &= \{8,13,30,48,56,74,95,100,112,122,126,140,146,178\}, \\
  s^{(3)} &= \{17,24,42,89,95,101,104,111,132,140,149,154,168,178\}, \\
  s^{(4)} &= \{20,35,36,46,49,73,93,111,115,135,148,167,173,179\}.
\end{align*}
In the second phase, the ancilla superposition circuit loads these four seed states using two ancilla qubits. In the third phase, VQE with the excitation-preserving ansatz ($p=2$, SPSA, 1000--2000 shots, 2000--4000 iterations) is run from this prepared initial state over the combined $(180 + 2)$-qubit register. In the fourth phase, bitstring post-processing is applied to the data-register measurement distribution. The four phases are illustrated in Figure~\ref{fig:ancilla_sup}.
\begin{figure}[!htbp]
  \centering
  \includegraphics[width=\linewidth]{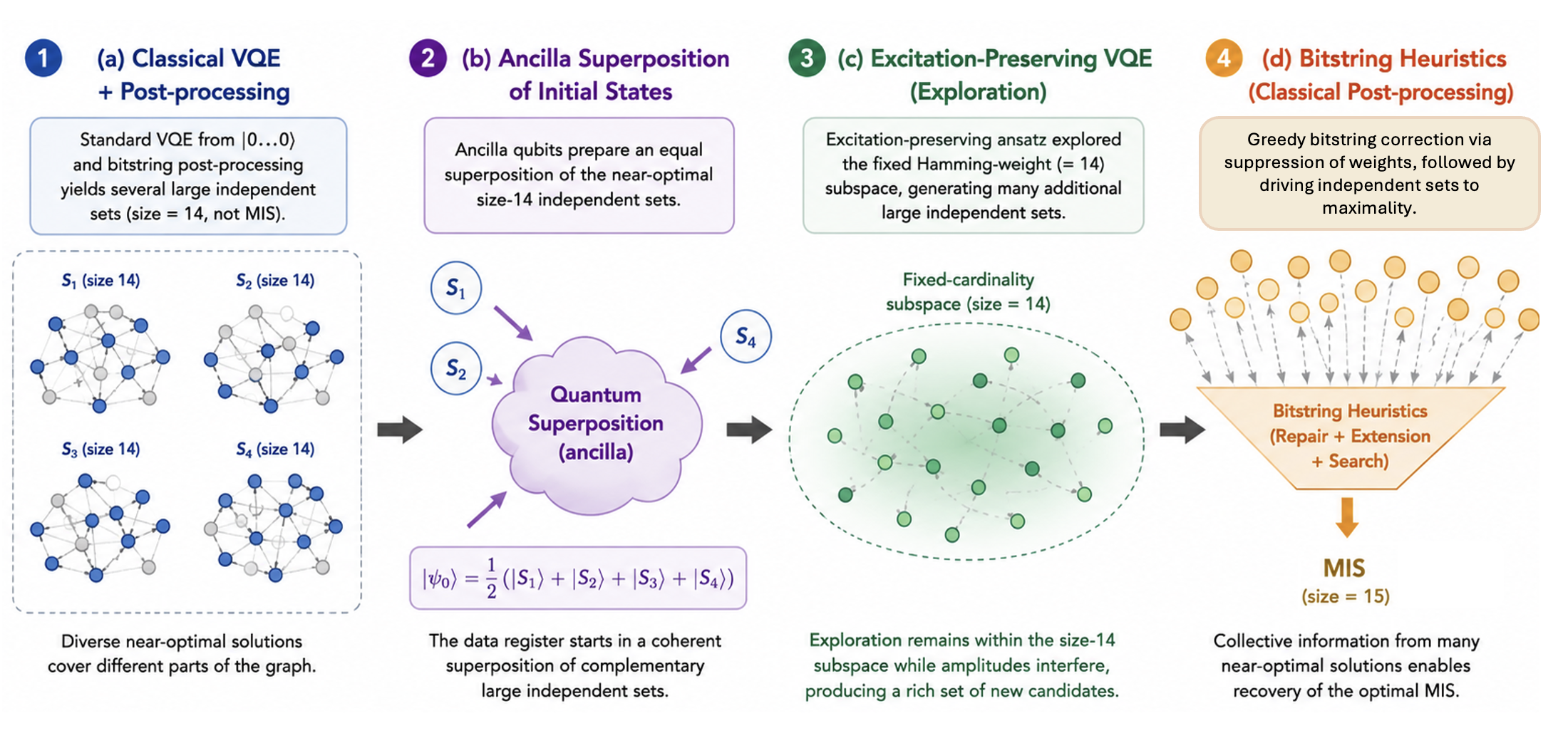}
  \caption{Schematic overview of the four-phase pipeline for the 180-node instance.
  \textbf{(a) Classical VQE and post-processing:} Standard VQE from $|0\ldots 0\rangle$ yields several large independent sets of size 14 (not the MIS). The four graph panels show the four seed solutions $S_1$--$S_4$, each covering different parts of the 180-node graph.
  \textbf{(b) Ancilla superposition of initial states:} Ancilla qubits prepare the data register in the uniform superposition $|\psi_0\rangle = \frac{1}{\sqrt{m}}\sum_{j=1}^m|s^{(j)}\rangle$.
  \textbf{(c) Excitation-preserving VQE (exploration):} The excitation-preserving ansatz evolves $|\psi_0\rangle$ within the fixed Hamming-weight ($=14$) subspace, generating a rich population of new candidate independent sets through quantum-parallel variational optimization over all four seed branches under shared parameters.
  \textbf{(d) Bitstring heuristics (classical post-processing):} Greedy correction via suppression weights, followed by stepwise extension to maximality, distils the measurement distribution into the MIS of size 15. Multiple near-optimal solutions collectively contain information about the optimum; quantum-parallel search over four seed states uncovers more such candidates, and classical heuristics assemble that information into the MIS.
  }
  \label{fig:ancilla_sup}
\end{figure}

As shown in Proposition~\ref{prop:bond_dim}, the bond dimension of the excitation-preserving
circuit obeys $\chi_{\max} \leq m \cdot 2^p$. For the Phase~3 configuration used
here ($p=2$, $m=4$), this gives $\chi_{\max} = 16$, confirmed numerically with no
truncation at threshold $\varepsilon = 10^{-16}$. The state is therefore efficiently
simulable classically. The ancilla superposition technique enforces cardinality constraints
without penalty terms, finds parameters across all seeds in one run, and scales
with ancilla count. The technique breaks barriers inaccessible to classical heuristics
and provides a foundation for hardware implementations where deeper ansätze
($p \geq 4$, $\chi_{\max} \geq 64$) exceed the MPS exact-simulation range and require
hardware execution.

\textbf{Classical tractability note.}
The ancilla superposition initialization itself, which prepares
$(1/\sqrt{m})\sum_j|s^{(j)}\rangle|j\rangle_\mathrm{anc}$ via Hadamard plus
multi-controlled X gates and then measures only the data register, is
\emph{classically tractable}. The effective state on the data register is the mixed
state $\rho = \frac{1}{m}\sum_j |s^{(j)}\rangle\langle s^{(j)}|$, which is a
classical stochastic mixture that could in principle be replicated by running $m$
independent single-seed circuits and pooling results. We make no claim of quantum
advantage for this component. The $fSim(\theta, \varphi)$ ansatz with $\varphi \neq 0$
is the component that introduces genuine quantum correlations
(see Section~\ref{subsec:interference}), and deeper circuits
($p \geq 4$, $\chi_{\max} \geq 64$) exceed the classical MPS simulation regime.

\subsection{Ancilla Interference}
\label{subsec:interference}

The ancilla superposition construction of Section~\ref{subsec:ancilla} evolves $K$ seed branches simultaneously under shared parameters $\boldsymbol{\theta}$ (with $K = m$, the number of seeds from Section~\ref{subsec:ancilla}). When the data register alone is measured, the result is a mixture of the $K$ individual seed distributions, with no cross-seed correlations. The \emph{ancilla interference} technique makes these cross-seed correlations physically observable by projecting the ancilla register onto a specific superposition state after a final Hadamard layer.

\begin{figure}[!htbp]
\centering
\begin{tikzpicture}[scale=1.0, every node/.style={font=\normalsize}]
  %------------------------------------------------------------
  % Wire y-positions
  %------------------------------------------------------------
  \def\yA{3.2}   % data qubit q_0
  \def\yB{2.2}   % data qubit q_1
  \def\yC{1.2}   % data qubit q_n  (dots between yB and yC)
  \def\yAnc{0.0} % ancilla register

  %------------------------------------------------------------
  % Column x-positions
  %------------------------------------------------------------
  \def\xS{0.0}    % start / labels
  \def\xH{1.0}    % H on ancilla (init superposition)
  \def\xMCX{2.4}  % MCX seed-prep column
  \def\xU{4.2}    % U(theta) evolve
  \def\xCRZ{6.2}  % CRZ phase layer
  \def\xHf{7.8}   % H on ancilla (final)
  \def\xM{9.2}    % measurements
  \def\xEnd{10.0}

  %------------------------------------------------------------
  % Horizontal wires
  %------------------------------------------------------------
  \foreach \y in {\yA,\yB,\yC} {
    \draw[thick] (\xS,\y) -- (\xEnd,\y);
  }
  % ancilla double-wire
  \draw[thick] (\xS,\yAnc) -- (\xEnd,\yAnc);

  % Wire labels
  \node[left=4pt] at (\xS,\yA)   {$q_0$};
  \node[left=4pt] at (\xS,\yB)   {$q_1$};
  \node[left=4pt] at (\xS,\yC)   {$q_n$};
  \node[left=4pt] at (\xS,\yAnc) {$|0\rangle^{\!\otimes a}$};
  % vertical dots between q_1 and q_n
  \pgfmathsetmacro{\xMid}{\xS+0.5*(\xEnd-\xS)/2}
  \pgfmathsetmacro{\yMid}{0.5*\yB+0.5*\yC}
  \node at (\xMid,\yMid) {$\vdots$};

  %------------------------------------------------------------
  % STAGE 1: H on ancilla
  %------------------------------------------------------------
  \node[draw,fill=yellow!40,minimum width=0.7cm,minimum height=0.55cm,
        rounded corners=2pt] at (\xH,\yAnc) {$H^{\otimes a}$};
  \node[above=10pt,gray,font=\small] at (\xH,\yA) {init};

  %------------------------------------------------------------
  % STAGE 2: MCX seed-prep
  % Ancilla = control (filled dot), data qubits = targets (X circle)
  %------------------------------------------------------------
  % vertical control line from ancilla to top data qubit
  \draw[thick] (\xMCX,\yAnc) -- (\xMCX,\yA);
  % control dot on ancilla wire
  \fill (\xMCX,\yAnc) circle (4pt);
  % X-target circles on each data wire
  \foreach \y in {\yA,\yB,\yC} {
    \draw[thick] (\xMCX,\y) circle (7pt);
    \draw[thick] (\xMCX-0.18,\y) -- (\xMCX+0.18,\y);
    \draw[thick] (\xMCX,\y-0.18) -- (\xMCX,\y+0.18);
  }
  \node[above=10pt,gray,font=\small] at (\xMCX,\yA) {seed prep};

  %------------------------------------------------------------
  % STAGE 3: U(theta) — large box spanning data register
  %------------------------------------------------------------
  \node[draw,fill=blue!15,minimum width=1.4cm,minimum height=3.8cm,
        rounded corners=4pt,font=\normalsize]
    at (\xU, 0.5*\yA+0.5*\yC) {$U(\boldsymbol{\theta})$};
  \node[above=10pt,gray,font=\small] at (\xU,\yA) {evolve};

  %------------------------------------------------------------
  % STAGE 4: CRZ — control on ancilla, RZ target on data qubits
  %------------------------------------------------------------
  % vertical line from ancilla up to top data qubit
  \draw[thick] (\xCRZ,\yAnc) -- (\xCRZ,\yA);
  % filled control dot ON the ancilla wire
  \fill (\xCRZ,\yAnc) circle (4pt);
  % RZ gate box on each data wire
  \foreach \y in {\yA,\yB,\yC} {
    \node[draw,fill=orange!30,minimum width=0.7cm,minimum height=0.55cm,
          rounded corners=2pt,font=\small] at (\xCRZ,\y) {$R_Z(\phi)$};
  }
  \node[above=10pt,gray,font=\small] at (\xCRZ,\yA) {CRZ phase};

  %------------------------------------------------------------
  % STAGE 5: H on ancilla (final)
  %------------------------------------------------------------
  \node[draw,fill=yellow!40,minimum width=0.7cm,minimum height=0.55cm,
        rounded corners=2pt] at (\xHf,\yAnc) {$H^{\otimes a}$};
  \node[above=10pt,gray,font=\small] at (\xHf,\yA) {interfere};

  %------------------------------------------------------------
  % STAGE 6: Measurements
  % Data: meter symbol; Ancilla: post-select box
  %------------------------------------------------------------
  \foreach \y in {\yA,\yB,\yC} {
    \node[draw,fill=gray!20,minimum width=0.7cm,minimum height=0.55cm,
          rounded corners=2pt,font=\small] at (\xM,\y) {$\mathcal{M}$};
  }
  \node[draw,fill=red!20,minimum width=0.85cm,minimum height=0.55cm,
        rounded corners=2pt,font=\small] at (\xM,\yAnc) {$|0\rangle$};
  \node[below=4pt,font=\scriptsize] at (\xM,\yAnc-0.3) {post-sel.};
  \node[above=10pt,gray,font=\small] at (\xM,\yA) {measure};

\end{tikzpicture}
\caption{Circuit diagram for the ancilla interference technique ($K$ seeds,
  $a$ ancilla qubits). \textbf{Init:} $H^{\otimes a}$ creates a uniform
  superposition over $K$ ancilla basis states. \textbf{Seed prep:} MCX
  gates (ancilla as control, data qubits as targets) coherently write each
  seed $|s^{(k)}\rangle$ onto the data register conditioned on the ancilla
  state $|k\rangle$. \textbf{Evolve:} The shared variational unitary
  $U(\boldsymbol{\theta})$ acts on the data register, evolving all $K$
  branches simultaneously. \textbf{CRZ phase:} Controlled-$R_Z$ gates
  (control on ancilla wire, $R_Z$ target on data qubits) apply
  seed-specific phases. \textbf{Interfere:} A final $H^{\otimes a}$ on
  the ancilla mixes the seed branches, creating cross-term interference.
  \textbf{Measure:} Data qubits are measured; ancilla is post-selected on
  $|0\rangle^{\otimes a}$, retaining only the constructively interfering
  component.}
\label{fig:interf_circuit}
\end{figure}

\textbf{Construction.}
After the variational evolution $U(\boldsymbol{\theta})$, an optional layer of controlled-RZ (CRZ) phase gates applies seed-specific phases to selected data qubits. For seed $k$, a phase $\phi_{kj}$ is applied to data qubit $j$ conditioned on the ancilla being in state $|k\rangle$:
\begin{equation}
  |\Phi'\rangle = \frac{1}{\sqrt{K}}\sum_{k=0}^{K-1} e^{i\hat{\Phi}_k} U(\boldsymbol{\theta})|s^{(k)}\rangle \otimes |k\rangle_\mathrm{anc},
\end{equation}
where $\hat{\Phi}_k = \sum_j \phi_{kj}\hat{n}_j$ is a diagonal phase operator and $\hat{n}_j$ is the occupation number of qubit $j$. The CRZ gate is Hamming-weight preserving: conditioned on the ancilla state $|k\rangle$, it applies a phase $e^{i\phi_{kj}}$ to data qubit $j$ only when that qubit is in $|1\rangle$, leaving $|0\rangle$ unchanged and flipping no qubit, so the excitation-preserving structure is maintained throughout.

Hadamard gates are then applied to all $a = \lceil\log_2 K\rceil$ ancilla qubits. Post-selecting on the ancilla measuring the all-zeros state $|0\rangle^{\otimes a}$, the data-register state is
\begin{equation}
  |\Psi_\mathrm{interf}\rangle \;\propto\; \sum_{k=0}^{K-1} e^{i\hat{\Phi}_k} U(\boldsymbol{\theta})|s^{(k)}\rangle.
  \label{eq:psi_interf}
\end{equation}
The probability of measuring data-register outcome $\mathbf{x}$ in this post-selected branch is
\begin{equation}
  P_\mathrm{interf}(\mathbf{x}) = \frac{1}{K}\left|\sum_{k=0}^{K-1} A_k(\mathbf{x})\right|^2
  = \left[\frac{1}{K}\sum_k |A_k(\mathbf{x})|^2 + \frac{2}{K}\sum_{k<k'}\mathrm{Re}\!\left[A_k(\mathbf{x})A_{k'}^*(\mathbf{x})\right]\right],
  \label{eq:p_interf}
\end{equation}
where $A_k(\mathbf{x}) = \langle\mathbf{x}|U(\boldsymbol{\theta})|s^{(k)}\rangle$ is the amplitude of bitstring $\mathbf{x}$ from seed $k$ (with $\phi_{kj}=0$ for clarity). The first term is the classical mixture contribution; the second term contains the \emph{cross-terms} $\mathrm{Re}[A_k A_{k'}^*]$ that are absent in the stochastic mixture. Configurations $\mathbf{x}$ for which all seed amplitudes are in phase contribute constructively; those for which seeds partially cancel contribute destructively. The theoretical analysis of when good independent sets benefit from constructive interference is given in Section~\ref{subsec:theory_interference}.

\textbf{Practical implementation.}
CRZ phase gates are added selectively: a phase gate is applied to data qubit $j$ for seed $k$ only if qubit $j$ appears in at least one other seed. Qubits unique to a single seed would contribute only a global phase to that branch, producing no cross-term interference; restricting to shared qubits preserves all dominant interference channels while keeping the circuit compact. Each shared qubit receives one phase parameter per seed branch in which it participates, so the total number of additional phase parameters equals the sum over seeds of the number of vertices that seed shares with at least one other seed. For the 180-node instance with $K=4$ seeds of size 14, the inter-seed vertex overlap is small, resulting in a modest number of additional phase parameters relative to the variational $\theta$ parameters of the main ansatz.

The post-selection rate (probability that the ancilla measures $|0\rangle^{\otimes a}$) is exactly $1/K$ for uniform initial amplitudes and $\phi=0$, as shown in Proposition~\ref{prop:postsel_rate}. For $K=4$, this means approximately 25\% of shots survive post-selection. To maintain effective sample statistics comparable to the non-interference runs, the total shot count is scaled by $K$.

\textbf{Bond dimension with ancilla interference.}
For $K=2$ ($a=1$ ancilla qubit, 2 seeds), each CRZ is a standard 2-qubit controlled-phase gate, diagonal in the computational basis, and adds no bond dimension growth; the ancilla Hadamard acts only on the single ancilla qubit. The bond dimension bound $\chi_{\max} \leq K \cdot 2^p$ from Proposition~\ref{prop:bond_dim} (where the proposition uses $m = K$) therefore applies unchanged, confirmed empirically with $\chi_{\max} = 4$ at $p=1$. For $K=3$ ($a=2$ ancilla qubits, 3 seeds), the formula gives $\chi_{\max} \leq 3 \cdot 2^p$; the two ancilla qubits require Toffoli-type decompositions for multi-bit conditioning, but the bond dimension remains tractable for the 400-node system. For $K=4$ ($a=2$ ancilla qubits, 4 seeds), the formula gives $\chi_{\max} \leq 4 \cdot 2^p$, confirmed empirically as $\chi_{\max} = 16$ at $p=2$ for the 180-node system. The additional Schmidt channel ($K=4$ vs $K=3$) pushes the bond dimension beyond the tractability threshold for the 400-qubit circuit, making $K \leq 3$ the practical limit for classical simulation of 400-qubit interference circuits.

%=============================================================================
\section{Theoretical Analysis}
\label{sec:theory}
%=============================================================================

This section develops analytical results that underpin the methodological
contributions of this work: the Ising formulation, vertex reordering and
sparsification, the bitstring heuristics, and the ancilla superposition
initialization. The results are not polynomial-time approximation guarantees,
which are unavailable for NP-hard problems in general, but they provide
rigorous characterizations of why each component works and what it contributes
to solution quality.

\subsection{Penalty Parameter Sufficiency}
\label{subsec:theory_penalty}

The MIS Ising Hamiltonian (\ref{eq:ising2}) encodes the independence
constraint through penalty terms weighted by $P$. We establish that
the choice $P = 2$ is provably sufficient for all instances studied
(penalty parameter; not to be confused with circuit depth, which varies by instance),
given the structure of the QUBO-to-Ising conversion.

\begin{proposition}[Penalty sufficiency]
\label{prop:penalty}
For the MIS Hamiltonian with single-qubit fields $h_i = 0.5 - P \cdot d_i/4$,
two-qubit couplings $J_{ij} = P/4$ for all edges $(i,j) \in E$, and
penalty parameter $P \geq 2$, any feasible independent set has strictly
lower energy than any infeasible configuration obtained by adding a
vertex that creates at least one violation. The energy gap is
$\Delta E = cP - 1 \geq P - 1 \geq 1$, where $c \geq 1$ is the
number of violations created. The gap is independent of vertex degree.
\end{proposition}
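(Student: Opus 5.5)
The plan is to avoid expanding the spin form~(\ref{eq:ising2}) directly and instead work in the binary variables. Under the substitution $z_i = 1-2x_i$ from~(\ref{eq:mapping}), the Hamiltonian with $h_i = \tfrac12 - Pd_i/4$, $J_{ij}=P/4$ and constant $C$ becomes the QUBO energy. I will first verify this identity by a short collection of terms. For the linear part, the coefficient of $x_i$ collects $-2h_i$ from the field and $-2\cdot\tfrac{P}{4}d_i$ from the couplings, giving $-1$. For the quadratic part, each edge contributes $4\cdot\tfrac{P}{4}x_ix_j = Px_ix_j$. The constants cancel against $C$. The resulting identity is
\begin{equation*}
E(x) \;=\; -\sum_{i\in V} x_i \;+\; P\sum_{(i,j)\in E} x_i x_j .
\end{equation*}
Any discrepancy in the constant only shifts all energies equally, so it does not affect energy gaps.

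Next, I compare a feasible independent set $S$ with the configuration $S' = S\cup\{v\}$ for some $v\notin S$. Let $c = |N(v)\cap S|$ be the number of edges violated by adding $v$, and suppose $c\ge 1$. Since $S$ is independent, the only occupied edges in $S'$ are the $c$ edges joining $v$ to its neighbours in $S$. The linear term therefore drops by exactly $1$ and the penalty term rises by exactly $cP$. This gives
\begin{equation*}
E(S') - E(S) = cP - 1 .
\end{equation*}
The change depends only on $c$, not on $d_v$: the degree-dependent field $h_v$ is exactly compensated by the $d_v$ couplings, of which only those to occupied neighbours survive. This is the claimed degree independence. Using $c\ge1$ and $P\ge2$ then yields $\Delta E \ge P-1\ge 1>0$.

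The main point to be careful about is the scope of the claim. ``Any infeasible configuration obtained by adding a vertex'' must be read as a single-vertex addition to the given feasible set. I will state this explicitly rather than prove the stronger global claim that every feasible set beats every infeasible one. That stronger claim needs a separate, standard argument: starting from an infeasible $x$, repeatedly delete an endpoint of a violated edge. Each deletion changes the energy by $1 - c'P \le 1-P <0$, where $c'\ge1$ is the number of violated edges at that vertex. So each step strictly lowers the energy, and the process ends at a feasible set of lower energy.

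I would add this as a remark showing that ground states are always feasible. However, I would not assert that an arbitrary feasible set beats an arbitrary infeasible one, since a tiny independent set can have higher energy than a large, slightly infeasible configuration. The identity check in the first step is routine. The only real obstacle is keeping the comparison local, that is, fixing $S$ and the added vertex $v$, so that the gap is exactly $cP-1$.
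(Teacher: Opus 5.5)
Your proof is correct, but it reaches the gap by a different route than the paper. The paper stays in spin variables and tracks three separate contributions when $z_{v^*}$ flips from $+1$ to $-1$:
\begin{itemize}
\item the field term gives $-2h_{v^*} = -1 + Pd_{v^*}/2$;
\item the $c$ couplings to selected neighbours give $+cP/2$;
\item the $d_{v^*}-c$ couplings to unselected neighbours give $-(d_{v^*}-c)P/2$.
\end{itemize}
The degree terms then cancel in the sum.

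You instead first check that, with $C = P|E|/4 - n/2$, the spin Hamiltonian is identically the QUBO objective $-\sum_i x_i + P\sum_{(i,j)\in E} x_ix_j$. Your coefficient check is right, and the constant offset is exactly zero. After that, $cP-1$ can be read off in one line.

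Your route makes the degree-independence transparent. It is not a separate cancellation to be discovered. Energy differences are invariant under the substitution $z_i = 1-2x_i$, and the QUBO form never involves degrees. The paper's Ising-level bookkeeping is the same computation seen through that substitution. Its modest advantage is that it shows explicitly how the degree-dependent field and the couplings to unselected neighbours offset each other in the form actually implemented in the circuit.

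Your treatment of scope also goes beyond the paper. The paper proves only the local comparison ($S$ versus $S\cup\{v^*\}$), which is the reading under which the statement holds. You correctly note that the literal global claim fails: the empty set, with energy $0$, loses to a size-$k$ set with a single violated edge, with energy $P-k$, once $k>P$. Your deletion argument supplies what the encoding actually needs. Each step changes the energy by $1-c'P<0$ whenever $P>1$, so every ground state is feasible. Since independent sets have energy $-|S|$, every ground state is therefore a maximum independent set.
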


\begin{proof}
The Ising Hamiltonian arises from the QUBO objective
$-\sum_i x_i + P\sum_{(i,j)\in E} x_ix_j$ via the substitution
$x_i = (1-z_i)/2$. Expanding the penalty term:
\begin{equation}
  P\sum_{(i,j)\in E} x_ix_j
  = \frac{P}{4}\sum_{(i,j)\in E}(1 - z_i - z_j + z_iz_j),
\end{equation}
which contributes two-qubit couplings $J_{ij} = P/4$ and, since
vertex $i$ appears in $d_i$ edges, single-qubit fields $-Pd_i/4$.
Together with the reward term $-x_i = -\tfrac{1}{2} + \tfrac{z_i}{2}$,
the single-qubit field for vertex $i$ is
$h_i = \tfrac{1}{2} - \tfrac{Pd_i}{4}$, containing contributions from
both the vertex reward and the edge penalty.

Let $S$ be a feasible independent set and let $v^*$ be a vertex with
$c = |N(v^*) \cap S| \geq 1$ violated neighbours in $S$ and
$d_{v^*} - c$ non-selected neighbours. Adding $v^*$ to $S$ flips
qubit $v^*$ from $z_{v^*} = +1$ to $z_{v^*} = -1$, affecting three
types of Hamiltonian terms:

\textbf{Field term.} The contribution $h_{v^*}z_{v^*}$ changes from
$h_{v^*}(+1)$ to $h_{v^*}(-1)$, giving a change of
\begin{equation}
  \Delta E_{\text{field}} = -2h_{v^*}
  = -2\!\left(\tfrac{1}{2} - \frac{Pd_{v^*}}{4}\right)
  = -1 + \frac{Pd_{v^*}}{2}.
  \label{eq:field_change}
\end{equation}

\textbf{Violated coupling terms.} For each violated neighbour
$u \in N(v^*) \cap S$ (where $z_u = -1$), the coupling
$J_{v^*u}z_{v^*}z_u$ changes from $-J_{v^*u}$ to $+J_{v^*u}$,
contributing $+2J_{v^*u} = +P/2$ per violated neighbour:
\begin{equation}
  \Delta E_{\text{violated}} = +\frac{cP}{2}.
  \label{eq:violated_change}
\end{equation}

\textbf{Non-selected coupling terms.} For each non-selected neighbour
$w \notin S$ (where $z_w = +1$), the coupling $J_{v^*w}z_{v^*}z_w$
changes from $+J_{v^*w}$ to $-J_{v^*w}$, contributing $-P/2$ per
non-selected neighbour:
\begin{equation}
  \Delta E_{\text{non-selected}} = -\frac{(d_{v^*}-c)P}{2}.
  \label{eq:nonselected_change}
\end{equation}

Combining and simplifying:
\begin{align}
  \Delta E &= -1 + \frac{Pd_{v^*}}{2} + \frac{cP}{2}
               - \frac{(d_{v^*}-c)P}{2} \nonumber \\
           &= -1 + \frac{P}{2}\bigl[d_{v^*} + c - d_{v^*} + c\bigr]
               \nonumber \\
           &= cP - 1.
  \label{eq:delta_e_final}
\end{align}
The vertex degree $d_{v^*}$ cancels exactly. For $c \geq 1$ and
$P \geq 2$: $\Delta E = cP - 1 \geq P - 1 \geq 1 > 0$. The infeasible
configuration is therefore strictly higher energy than the feasible
set for any $P \geq 2$, regardless of vertex degree. At $P = 2$:
$\Delta E = 2c-1$, giving gaps of $1, 3, 5, \ldots$ for
$c = 1, 2, 3, \ldots$ violations respectively.
\end{proof}

\begin{remark}
The standard MIS QUBO formulation $-\sum_i x_i + P\sum_{(i,j)\in E} x_ix_j$~\cite{lucas2014ising} uses the informal argument that the penalty per violated edge ($P$) must exceed the reward per vertex ($1$), giving $P > 1$ as the sufficiency condition at the QUBO level. After QUBO-to-Ising conversion, the Hamiltonian fields
$h_i = 0.5 - Pd_i/4$ contain contributions from both the vertex reward and the edge penalty, so the QUBO-level argument does not directly carry over to the Ising picture. The exact Ising-level analysis, accounting for all three types of energy change when
adding vertex $v^*$ (the field term, the violated coupling terms, and the non-selected coupling terms) yields $\Delta E = cP - 1$, where $c \geq 1$ is the number of
violations created. This expression is independent of vertex degree, due to an exact cancellation of the degree-dependent terms across the field and non-selected coupling contributions. The true sufficiency condition is therefore $P > 1$ (equivalently,
$P > 1/c$ for the worst case $c = 1$), confirming the QUBO-level intuition while making it exact. The choice of penalty $P = 2$ adopted in this work is a practical convention, giving a guaranteed energy gap of $\Delta E = 2c - 1 \geq 1$ for any number of violations $c \geq 1$ and providing a comfortable margin above the minimum threshold. To our knowledge, the exact expression $\Delta E = cP - 1$ and its independence from vertex
degree have not been previously derived in the literature.
\end{remark}

\subsection{Reordering and Sparsification}
\label{subsec:theory_reorder}

\textbf{Circuit depth bound.}
Let $\pi: V \to \{0,\ldots,n-1\}$ denote the permutation of vertices to qubit indices after reordering, and let $B = \max_{(i,j)\in E}|\pi(i)-\pi(j)|$ denote the resulting bandwidth of the coupling matrix. After spectral reordering to bandwidth $B$, the QAOA cost unitary can be implemented in at most $B$ layers of parallel two-qubit gates, giving a total circuit depth of $O(B \cdot p)$ rather than $O(|E| \cdot p)$ for the unsparsified circuit, where $p$ is the number of QAOA layers. Since the Fiedler vector minimizes the weighted sum of squared index distances $\sum_{(i,j)\in E} |J_{ij}|(\pi(i)-\pi(j))^2$, it provides a good polynomial-time approximation to this
objective, correlating strongly with bandwidth reduction on dense graphs, justifying its adoption where $B \ll |E|$.

\textbf{Sparsification gradient bias.}
Let $J^{(k)}_{\mathrm{sparse}} = \{J_{ij} : |\pi(i)-\pi(j)| \leq k\}$ denote the set of couplings retained at bandwidth $k$, and let $J_{\mathrm{full}}$ denote the complete coupling set. When the circuit uses $J^{(k)}_{\mathrm{sparse}}$ but energy is evaluated on $J_{\mathrm{full}}$, the gradient bias is bounded by $(1 - \mathrm{coverage}(k)) \cdot |E| \cdot P/4$ times the maximum $ZZ$ correlation gradient norm, where
\begin{equation}
  \mathrm{coverage}(k) = \frac{\sum_{|\pi(i)-\pi(j)|\leq k}|J_{ij}|}
  {\sum_{(i,j)\in E}|J_{ij}|}
\end{equation}
is the fraction of total coupling weight retained at bandwidth $k$. For the $64$-node instance at $k=3$, coverage is 21\% (Table~\ref{tab:reorder_compare}), meaning 79\% of coupling weight
is absent from the circuit. The gradient bias is proportional to this dropped weight; however, the asymmetric evaluation strategy ($J_{\mathrm{sparse}}$ for circuit and $J_{\mathrm{full}}$ for energy evaluation) ensures the optimizer still receives accurate energy feedback from the full Hamiltonian. Convergence to $\epsilon$-optimal on the sparse landscape yields at most $\epsilon + 2\delta$ error on the full landscape, by a standard perturbation argument, where $\delta$ is the gradient bias magnitude.

\textbf{MPS tractability.}
The MPS bond dimension $\chi$ required for an exact simulation grows as
$\chi \sim 2^{S/2}$, where $S$ is the von Neumann entanglement entropy of the
quantum state across a bipartition of the qubit chain. After reordering to
bandwidth $k$, the causal cone of any qubit in the QAOA circuit has width
$O(k \cdot p)$, bounding the entanglement entropy to $S \leq O(k \cdot p)$.
The MPS bond dimension is therefore bounded as $\chi \leq 2^{O(k \cdot p)}$,
growing exponentially in $k$ and $p$ but remaining polynomial in system size $n$ for fixed $k$ and
$p$. This directly explains the observed MPS simulator limits: for the 99- and
180-node instances, even $k=2$ strains the MPS representation due to the high
base entanglement from the dense coupling structure, and only $k=1$
(nearest-neighbour gates only) is tractable. For the excitation-preserving
circuit with $m$ ancilla seed states and $p$ ansatz layers, the bond dimension
satisfies $\chi_{\max} \leq m \cdot 2^p$ (refer Proposition~\ref{prop:bond_dim}
for derivation and empirical confirmation).

\subsection{Bitstring Heuristics}
\label{subsec:theory_heuristics}

\textbf{Maximality lower bound.}
The stepwise extension to maximality (Algorithm~\ref{alg:stepwise}) produces a maximal independent set $S$ from any feasible starting set $S_0 \subseteq V$. Any maximal independent set satisfies the uniform lower bound $|S| \geq n/(d_{\max}+1)$~\cite{pardalos1994maximum}, where $d_{\max}$ is the maximum vertex degree. The minimum-degree selection rule yields a stronger instance-specific bound: by the result of Halld\'{o}rsson and Radhakrishnan ~\cite{halldorsson1997greedy}, the minimum-degree greedy strategy produces a maximal independent set of size at least
\begin{equation}
  |S| \;\geq\; \sum_{v \in V} \frac{1}{d(v)+1},
  \label{eq:hrd_bound}
\end{equation}
where $d(v)$ is the degree of vertex $v$. This sum strictly exceeds $n/(d_{\max}+1)$ for irregular graphs. For the 64-node instance with $d_{\min}=6$ and $d_{\max}=24$, bound~(\ref{eq:hrd_bound}) gives a provably tighter guarantee than the uniform estimate.

\textbf{Extension success condition.}
A key question is: given a size-$(k-1)$ independent set from the quantum circuit,
under what conditions can stepwise extension reach a size-$k$ maximum independent
set?

\begin{proposition}[Extension success]
\label{prop:extension}
Let $S_0$ be a feasible independent set of size $|S_0| = k-1$, and let $S^*$ be a
maximum independent set of size $k$. If there exists a vertex $v^* \in S^* \setminus S_0$
such that $N(v^*) \cap S_0 = \emptyset$, where $N(v^*)$ denotes the neighbourhood
of $v^*$, then there exists a selection ordering in the stepwise extension from
$S_0$ that reaches a maximal set of size at least $k$.
\end{proposition}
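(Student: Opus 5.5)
The plan is to exhibit an explicit selection ordering for Algorithm~\ref{alg:stepwise} whose first choice is the vertex $v^*$ from the hypothesis. The extension loop only ever adds vertices and never removes any, so the size of the final maximal set is bounded below by the size of the set after the first step. That reduces the problem to a single step.

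First, we show that $v^*$ is an admissible first choice. By definition~(\ref{eq:candidates}), the candidate set is $\mathcal{C}(S_0)=\{v\notin S_0 : N(v)\cap S_0=\emptyset\}$. The hypothesis gives $v^*\in S^*\setminus S_0$, so $v^*\notin S_0$, and it also gives $N(v^*)\cap S_0=\emptyset$. Hence $v^*\in\mathcal{C}(S_0)$. We choose the rule \textsc{Select} so that it returns $v^*$ on the first iteration; afterwards it may be any rule, for example the paper's minimum-degree rule. This is legitimate because the statement only asks for the existence of some selection ordering.

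Second, we check that $S_1=S_0\cup\{v^*\}$ is independent and has size $k$. Edges inside $S_0$ are excluded because $S_0$ is feasible. Edges between $v^*$ and $S_0$ are excluded by the hypothesis $N(v^*)\cap S_0=\emptyset$. Since $v^*\notin S_0$, we get $|S_1|=|S_0|+1=k$.

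Third, each later iteration of the while-loop adds one vertex of $\mathcal{C}(S)$ to $S$. By the definition of $\mathcal{C}$, this preserves independence and increases $|S|$ by one. The loop terminates after at most $n-k$ further steps, since $|S|\le n$. At termination $\mathcal{C}(S)=\emptyset$, so $S$ is maximal, it contains $S_1$, and therefore $|S|\ge k$.

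As a remark, because $|S^*|=k$ is maximum, the inequality $|S|\ge k$ forces $|S|=k$. The output is therefore itself a maximum independent set, in fact $S_1$ already is one and the loop stops immediately. I expect no real obstacle. The only point needing care is the quantifier: the proposition asserts the existence of an ordering, not that the minimum-degree rule succeeds. I would make this explicit by noting that the minimum-degree rule selects $v^*$ only when $v^*$ has minimum degree in $\mathcal{C}(S_0)$, so that case is not claimed.
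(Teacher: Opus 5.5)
Your proof is correct and is the argument the paper intends. The paper gives no formal proof, only the remark after the statement that the hypothesis makes $v^*$ compatible with $S_0$, so $v^*\in\mathcal{C}(S_0)$ and selecting it first yields an independent set of size $k$; it adds the same caveat you raise, that the minimum-degree rule need not pick $v^*$. Your extra observation, that $S_0\cup\{v^*\}$ is already maximal because $|S^*|=k$ is the maximum, is a correct sharpening the paper does not state.
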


The condition $N(v^*) \cap S_0 = \emptyset$ states that the vertex present in $S^*$ but absent from $S_0$ must be compatible with $S_0$, meaning none of its neighbours are currently selected. The proposition guarantees existence of a path to size $k$; the minimum-degree selection rule of Algorithm~\ref{alg:stepwise} provides a practical heuristic that tends to find such paths in practice, though it does not guarantee selecting $v^*$ specifically at each step. For the 180-node instance, this explains why diverse seeds are necessary. Each seed blocks a different subset of vertices, and the ancilla superposition simultaneously searches over all seed neighbourhoods. If a MIS-completing vertex is blocked by one seed but compatible with another, the parallel search finds it where a single-seed approach cannot.

\textbf{EMA concentration.}
The EMA suppression weights $w_i$ converge to the true zero-frequency
$\mu_i = P(\text{qubit } i \text{ in } |0\rangle \text{ in top-}M \text{ samples})$
of the stationary measurement distribution. The convergence has two components.
First, the initialization bias: starting from $w_i = 0.5$, the influence of the initial value decays as $(1-\alpha)^T$ after $T$ sampling iterations, where $\alpha \in (0,1)$ is the EMA adaptation rate controlling how rapidly weights respond to new observations. Second, the sampling noise: each iteration's zero-frequency estimate $z_i^{(t)}$ is computed from $B$
shots, introducing noise bounded by Hoeffding's inequality as $P(|z_i^{(t)} - \mu_i| > \varepsilon) \leq 2e^{-2B\varepsilon^2}$. For the post-optimization sampling protocol used in this work
($\alpha = 0.4$, $T=10$ iterations of Mode~B with $B=50{,}000$ shots per iteration in noiseless simulation and $B=20{,}000$ shots per iteration on hardware), the initialization bias decays to
$(0.6)^{10} \approx 0.6\%$ and the sampling noise term is negligible for $\varepsilon \geq 0.01$ at either shot count. The suppression weights are therefore reliable to within approximately 0.6\% of
their stationary values after the full 10-iteration protocol, in both the noiseless and hardware settings.

This concentration result applies strictly when the measurement distribution is stationary, i.e., when the variational parameters are fixed. It therefore holds exactly during the post-optimization sampling phase (Mode~B after convergence) and approximately during the final stages of optimization when the parameters have largely stabilized. During active SPSA optimization the distribution shifts with each parameter update, and the weights track a moving target. The practical effectiveness of the masks in this regime is supported empirically rather than by the concentration bound.

\subsection{Solution Diversity: VQE vs.\ QAOA}
\label{subsec:theory_diversity}

VQE recovers hundreds to thousands of distinct independent sets per post-optimization sampling iteration ($B=50{,}000$ shots per iteration, $T=10$ iterations, Section~\ref{sec:setup}) (up to 700--800 for the 64-node instance and up to $10$ distinct MIS for the 99-node instance with CVaR); QAOA recovers only 0--5. The difference is structural and stems from how many distinct bitstrings can appear in each algorithm's measurement distribution after optimization.

Let $|\mathrm{supp}|$ denote the \emph{support size} of the output distribution: the number of distinct $n$-bit strings that have nonzero probability of being measured from the optimized circuit. A small support means measurements repeatedly return the same few bitstrings; a large support means many different bitstrings can appear.

QAOA starts from the uniform superposition $|+\rangle^{\otimes n}$, which has support $2^n$ (all bitstrings equally likely). As the cost and mixer unitaries concentrate probability onto low-energy states, the support \emph{shrinks}, converging to $|\mathrm{supp}| = O(\mathrm{poly}(n))$ for the
shallow, sparsified configurations studied here ($k \leq 3$, $p \leq 5$). VQE starts from $|0\rangle^{\otimes n}$, which has support 1 (only the all-zeros bitstring). The parameterized rotations then \emph{spread} probability across the Hilbert space, yielding $|\mathrm{supp}| = O(\exp(n))$ for the optimized circuit (exponential in $n$, though with a small constant in the exponent for shallow circuits).

When drawing $N$ measurement shots from a distribution with support size $|\mathrm{supp}|$, the expected number of distinct outcomes $D$ follows the coupon-collector formula \cite{erdos1961coupon}:
\begin{equation}
D(N) \approx |\mathrm{supp}| \left(1 - e^{-N/|\mathrm{supp}|}\right).
\label{eq:coupon}
\end{equation}
For VQE with large support, $N \ll |\mathrm{supp}|$, so $D \approx N$ (almost every shot yields a new bitstring). For QAOA with small support, $N \gg |\mathrm{supp}|$, so $D \approx |\mathrm{supp}|$ (the distribution is sampled to saturation, yielding few distinct outcomes).

This explains why VQE outperforms QAOA for MIS. The maximality extension strategy requires diverse starting points, and VQE's broad distribution provides them while QAOA's concentrated one does not. The probability that at least one starting point leads to the MIS increases monotonically with the number of diverse candidates, which is why the broad VQE distribution is essential. We note that formula~(\ref{eq:coupon}) assumes uniform sampling over the support, which is an approximation, since the true distribution is non-uniform and $D(N) \leq |\mathrm{supp}|(1 - e^{-N/|\mathrm{supp}|})$ in general. The argument is therefore a
structural explanation of the VQE--QAOA diversity gap rather than a tight quantitative bound.

The use of CVaR further amplifies this diversity advantage. By focusing the
gradient signal on the best-performing tail of the measurement distribution, CVaR
steers the optimizer toward parameter basins where the distribution is both
low-energy and diverse, recovering more distinct near-optimal solutions per run
than the standard expectation value objective.

\subsection{Ancilla Superposition Initialization}
\label{subsec:theory_ancilla}

The ancilla superposition pipeline (Figure~\ref{fig:ancilla_sup}) has two analytically
concrete advantages over single-state initialization.

\textbf{Initial energy advantage.}
The superposition state $|\Psi_0\rangle = \frac{1}{\sqrt{m}}\sum_{j=1}^m |s^{(j)}\rangle|j\rangle_{\mathrm{anc}}$ has energy expectation value
\begin{equation}
  E_0 = \langle\Psi_0|H_{\mathrm{MIS}}|\Psi_0\rangle
      = \frac{1}{m}\sum_{j=1}^m E(s^{(j)}),
\end{equation}
the arithmetic mean of the seed state energies $E(s^{(j)})$. Since the seeds are near-optimal independent sets, their energies are far below that of the zero state $|0\rangle^{\otimes n}$ used in standard VQE, which encodes no structural information about the  problem. The superposition initialisation, therefore, places the optimizer in a productive region of the landscape from the first iteration, requiring far fewer optimization steps to reach configurations where the MIS is recoverable from the measurement distribution. In iterative enumeration runs where seeds of mixed cardinality are used, $E_0$ is the arithmetic mean of the seed energies, regardless of cardinality, and the same structural advantage applies.

\textbf{Reachability and landscape structure.}
The excitation-preserving ansatz generates unitaries within the $\binom{n}{k}$-dimensional
Hamming-weight-$k$ subspace $\mathcal{H}_k$, where $k$ denotes the number of qubits in the
$|1\rangle$ state (representing selected vertices). For sufficient depth $p$, this
ansatz is universal within $\mathcal{H}_k$ \cite{arrazola2022universal}, guaranteeing
that all size-$k$  independent sets are reachable by the circuit. The size-15 MIS has
Hamming weight 15 and therefore lies strictly outside $\mathcal{H}_{14}$: the
$fSim$ gates conserve Hamming weight exactly, so the circuit itself cannot reach
weight-15 states. The MIS is instead discovered by the stepwise extension step of
the bitstring heuristic pipeline (Section~\ref{subsubsec:stepwise}), which adds
one compatible vertex to a recovered size-14 independent set. The role of the
circuit is therefore to concentrate probability mass on size-14 configurations
that have a compatible MIS-completing vertex available; the classical post-processing
then performs the final extension. When seeds of size 15 are used directly (as in
the iterative enumeration experiments of Section ~\ref{subsec:phase3_sensitivity}),
the circuit operates within $\mathcal{H}_{15}$ and every measurement outcome is
already a size-15 candidate; the stepwise extension step is not required.

It is also important to note that the expectation value energy is not a direct indicator of MIS presence in the measurement distribution. The energy reflects the average quality of all sampled bitstrings; a low energy means the distribution is concentrated in a productive region of the landscape where MIS-adjacent configurations are likely, but individual MIS solutions are recovered from the discrete measurement outcomes by the bitstring heuristic pipeline rather than by reading off the energy directly. This is consistent with the observed Phase~3 results where the minimum expectation value energy continues to decrease beyond the iteration count at which the MIS is found, since further optimization narrows the distribution and reduces diversity even as it lowers the average energy.

\begin{proposition}[Constrained landscape smoothness]
\label{prop:landscape}
The restriction of $H_{\mathrm{MIS}}$ to $\mathcal{H}_k$ eliminates
the dominant $O(n^2)$ variance arising from Hamming-weight variation
in the full Hilbert space. The residual within-subspace variance is
$O(k \cdot \Delta d + k^2)$, where $\Delta d = d_{\max} - d_{\min}$
is the degree spread of the graph. For instances where $k \ll n$
and $\Delta d$ is small (as in the near-regular 99- and 180-node
instances), this represents a substantial reduction in energy
variance, sharpening the gradient signal for fine-grained
combinatorial optimization.
\end{proposition}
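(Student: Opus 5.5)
The plan is to decompose $H_{\mathrm{MIS}}$ into pieces that depend on the computational basis state $\mathbf{x}$ only through its Hamming weight, plus a residual, and then bound the residual's spread over $\mathcal{H}_k$. Work in the $x$-variables, where the diagonal energy of a basis state with support $S$ is $E(S) = -|S| + P\,e(S)$ and $e(S)$ is the number of edges induced by $S$. By the QUBO-to-Ising expansion used in the proof of Proposition~\ref{prop:penalty}, this agrees with (\ref{eq:ising2}). Because $H_{\mathrm{MIS}}$ is diagonal, the ``variance'' in question is the variance of $E(S)$ over basis states, either uniformly over $\{0,1\}^n$ or uniformly over the $\binom{n}{k}$ supports of size $k$. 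The same quantity controls the spread of $\langle\psi|H|\psi\rangle$ for states in the respective subspaces.

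First I will treat the full space. For uniform $S\subseteq V$, the reward term $-|S|$ has variance $n/4$. The penalty $P\,e(S)$ has mean $P|E|/4$, and its variance is $O(P^2(|E| + \sum_i d_i^2))$, since edge indicators are correlated only when the edges share a vertex. For the dense instances, $|E| = \Theta(n^2)$ and $d_i = \Theta(n)$, so this gives an $O(n^2)$-scale contribution. I will identify this with the Hamming-weight fluctuation. Writing $w = |S|$, the conditional mean is $\mathbb{E}[e(S)\mid w] = |E|\,w(w-1)/(n(n-1))$, which varies by $\Theta(|E|)$ across typical $w\in n/2 \pm O(\sqrt n)$. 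By the law of total variance, $\mathrm{Var}(E) = \mathrm{Var}(\mathbb{E}[E\mid w]) + \mathbb{E}[\mathrm{Var}(E\mid w)]$, and the first term captures the weight-driven part. Restricting to $\mathcal{H}_k$ freezes $w=k$ and deletes this term exactly, together with the $-|S|$ term, which becomes the constant $-k$.

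Next I will bound the within-subspace variance $\mathrm{Var}(e(S))$ for uniform $|S|=k$. The plan is to write $e(S) = \tfrac12\sum_{i\in S} d_S(i)$ and compare it with the degree-based proxy $\tfrac{1}{2}\cdot\tfrac{k-1}{n-1}\sum_{i\in S} d_i$. The proxy's fluctuation comes only from which vertices are chosen, so its spread is at most $\tfrac{k-1}{n-1}\cdot\tfrac{k}{2}\,\Delta d = O(k\,\Delta d)$. The remainder is the hypergeometric-type fluctuation of the $\binom{k}{2}$ pair indicators around their degree-corrected means; a direct second-moment computation over pairs and pairs-of-pairs bounds it by $O(k^2)$. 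Adding the two pieces and multiplying by $P^2/4$ with $P=2$ gives $O(k\,\Delta d + k^2)$. For near-regular graphs the first piece essentially vanishes.

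I expect the main obstacle to be making this last step rigorous as stated. The cross-covariances between pair indicators that share a vertex give terms of order $k^3\cdot\mathrm{Var}(d_i)/n^2$ rather than cleanly $k\,\Delta d$. The first thing I would try is to bound $\mathrm{Var}(d_i)\le(\Delta d)^2$ and use $k\ll n$ to absorb the result. Failing that, I would state the bound as the leading-order behaviour in $k/n$, which is what the comparison with the $O(n^2)$ full-space variance needs.
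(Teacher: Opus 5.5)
There is a genuine gap. It comes from reading ``variance'' as the statistical variance of the diagonal energy under the uniform measure on basis states. Under that reading, neither number in the statement comes out in general.

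In the full space, $d_i=\Theta(n)$ gives $\sum_i d_i^2=\Theta(n^3)$, not $O(n^2)$. The weight-driven term is $\mathrm{Var}_w\bigl(\mathbb{E}[E\mid w]\bigr)\approx P^2|E|^2/(4n)=\Theta(n^3)$. Over typical $w=n/2\pm O(\sqrt{n})$ the conditional mean moves by $\Theta(|E|/\sqrt{n})$, not $\Theta(|E|)$.

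Inside $\mathcal{H}_k$, the obstacle you flag is real and cannot be absorbed.
\begin{itemize}
\item Up to a constant, the projection of $e(S)$ onto linear functions of the occupations is $\frac{k-1}{n-2}\sum_{i\in S}d_i$. This is about twice your proxy, so your remainder is not a pure $O(k^2)$ pair fluctuation.
\item That linear part has variance $\approx\frac{k^3\sigma_d^2}{n^2}\bigl(1-\frac{k}{n}\bigr)$, where $\sigma_d^2$ is the empirical degree variance. Everything else is $O(k^2)$ once the leading $\bar d^{\,2}$ contributions cancel.
\item Bounding $\sigma_d^2\le(\Delta d)^2/4$ gives $O(k\Delta d)$ only if $k^2\Delta d=O(n^2)$, which $k\ll n$ does not imply.
\item The failure is not an artifact of that bound. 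Take a clique on $n/2$ vertices plus $n/2$ isolated vertices. Then $\sigma_d^2\approx n^2/16$ and $\mathrm{Var}\,e(S)\approx\frac{k^3}{16}\bigl(1-\frac{k}{n}\bigr)$, which exceeds $k\Delta d+k^2=\Theta(kn+k^2)$ whenever $\sqrt{n}\ll k\ll n$.
\end{itemize}
Your proxy step also adds a range, $\frac{k-1}{n-1}\cdot\frac{k}{2}\,\Delta d$, to a variance, and the two are not commensurable. So neither of your fallbacks proves the proposition as stated.

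The missing idea is that the proposition is about energy \emph{spreads} over basis states. That is how the paper argues it; it gives no formal proof, only the discussion following the statement.
\begin{itemize}
\item Across Hamming weights the energy changes by up to $O(n)$ per unit of weight, so on dense graphs it spans $\Theta(|E|)=\Theta(n^2)$.
\item On $\mathcal{H}_k$, $\sum_i z_i=n-2k$ is fixed, so the weight-dependent parts of (\ref{eq:ising2}) are constants.
\item What can still vary is the degree sum $\sum_{i\in S}d_i$, which enters with coefficient $\pm\frac{P}{2}$ and lies in an interval of width at most $k\Delta d$.
\item The only other varying quantity is the number of violated edges, $e(S)\le\binom{k}{2}$.
\end{itemize}
That gives $O(k\Delta d+k^2)$ immediately.

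Your own $x$-form makes it a one-liner. On $\mathcal{H}_k$, $E(S)=-k+P\,e(S)$ with $0\le e(S)\le\binom{k}{2}$, so the spread is at most $P\binom{k}{2}$. The degree sums from the fields and couplings cancel exactly, as in Proposition~\ref{prop:penalty}, so the $k\Delta d$ term is only an artifact of bounding the Ising terms separately.

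This range is also what actually governs $\langle\psi|H|\psi\rangle$ on $\mathcal{H}_k$, since for diagonal $H$ that expectation is a convex combination of the $E(S)$ with $|S|=k$. The uniform-measure variance does not govern it, contrary to your opening remark. If you want a genuine variance version, you need an extra hypothesis such as $k^2\Delta d=O(n^2)$, and the full-space figure becomes $\Theta(n^3)$ rather than $O(n^2)$.
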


This variance reduction sharpens the gradient signal for
combinatorial optimization. In full-space VQE, gradients are
dominated by the large energy differences between states of
different Hamming weights ($O(n)$ per unit of weight change),
which causes the optimizer to spend most of its budget navigating
toward the correct cardinality rather than distinguishing between
size-$k$ configurations. Within $\mathcal{H}_k$, all Hamming-weight
terms are constant and the gradient reflects only the fine-grained
differences between size-$k$ configurations: which vertices are
selected, and whether their selection creates independence
violations. For near-regular graphs such as the 99- and 180-node
instances (where $\Delta d = d_{\max} - d_{\min}$ is small), the
degree-sum variation is further suppressed, leaving the violation
penalty as the dominant source of energy differences between
configurations. The excitation-preserving constraint thus creates
a landscape where the optimizer is directly guided toward feasible,
high-quality independent sets rather than distracted by
cardinality-level energy variation.

\textbf{Seed diversity and collective optimization.}
Diverse seeds cover complementary regions of $\mathcal{H}_k$. The superposition explores at least as much of the landscape as the best single seed and strictly more when the seeds are mutually distant in Hamming space, as is the case for the 180-node instance, where the seeds cover structurally different regions of the graph (confirmed by the recovery of MIS solutions with low vertex overlap to any individual seed).

\begin{proposition}[Collective optimality]
\label{prop:collective}
Let $E_j(\boldsymbol{\theta}) = \langle s^{(j)}|  U^\dagger(\boldsymbol{\theta}) H_{\mathrm{MIS}}U(\boldsymbol{\theta})|s^{(j)}\rangle$ denote the energy of seed $s^{(j)}$ under the variational parameters $\boldsymbol{\theta}$, where $U(\boldsymbol{\theta})$ is the
excitation-preserving unitary. The optimal parameters $\boldsymbol{\theta}^* = \arg\min_{\boldsymbol{\theta}} \frac{1}{m}\sum_{j=1}^m E_j(\boldsymbol{\theta})$ minimize the average energy across all $m$ seeds simultaneously. These parameters favour solutions that perform well across the full seed population, penalizing parameter choices that are optimal for a single seed's narrow local basin but poor for the others.
\end{proposition}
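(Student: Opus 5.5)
The plan is to make the proposition precise before proving it, since as stated it is mostly a definition plus an interpretive claim. First I will fix the objective. Set $F(\boldsymbol{\theta}) = \frac{1}{m}\sum_{j=1}^m E_j(\boldsymbol{\theta})$ and show that this is exactly the quantity optimized by the ancilla circuit. Write the joint state as $|\Psi(\boldsymbol{\theta})\rangle = \frac{1}{\sqrt{m}}\sum_j U(\boldsymbol{\theta})|s^{(j)}\rangle\otimes|j\rangle_{\mathrm{anc}}$, with $H_{\mathrm{MIS}}$ acting on the data register only. Expanding $\langle\Psi|H_{\mathrm{MIS}}\otimes I|\Psi\rangle$ gives a double sum over $j,j'$. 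Orthogonality $\langle j|j'\rangle=\delta_{jj'}$ of the ancilla basis kills every cross term, which leaves $F(\boldsymbol{\theta})$. This matches the earlier computation of $E_0$ at $\boldsymbol{\theta}=0$, and it justifies calling $\boldsymbol{\theta}^*$ the minimizer of the measured energy.

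Second, I will formalize "minimize the average energy across all $m$ seeds simultaneously." By definition of $\arg\min$, for every $\boldsymbol{\theta}$ we have $F(\boldsymbol{\theta}^*)\le F(\boldsymbol{\theta})$. In particular, let $\boldsymbol{\theta}_\ell^*$ be the minimizer of a single seed's energy $E_\ell$. Then $\frac1m\sum_j E_j(\boldsymbol{\theta}^*)\le \frac1m\sum_j E_j(\boldsymbol{\theta}_\ell^*)$. This is the rigorous content of "favouring the population over a single seed's basin."

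For the penalization claim, I will give a quantitative version. Suppose $\boldsymbol{\theta}_\ell^*$ achieves a gain $g = E_\ell(\boldsymbol{\theta}^*)-E_\ell(\boldsymbol{\theta}_\ell^*)\ge 0$ on seed $\ell$. Rearranging the inequality above then gives
\[
\sum_{j\neq \ell}\bigl[E_j(\boldsymbol{\theta}_\ell^*)-E_j(\boldsymbol{\theta}^*)\bigr]\ \ge\ g .
\]
That is, any single-seed-optimal choice must lose on the other seeds at least as much as it gains on its own seed. This is exactly the sense in which narrow single-seed basins are penalized.

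Third, I need existence of $\boldsymbol{\theta}^*$. Each $E_j$ is a trigonometric polynomial in the fSim and phase angles, so it is continuous and periodic. The parameter domain can therefore be taken to be a compact torus, and the minimum is attained. I will note that $\boldsymbol{\theta}^*$ need not be unique, and state the result for any minimizer.

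The main obstacle is that "favour solutions that perform well across the population" is not a theorem-level statement. The honest content is the inequality above together with the cross-term cancellation. I would not attempt to prove that $\boldsymbol{\theta}^*$ improves every seed individually, which is false in general. Instead I will record the averaged bound and the trade-off inequality as the precise meaning of the proposition.
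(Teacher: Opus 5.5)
Your proposal is correct. Its first two steps match what the paper actually relies on. The paper gives no separate proof of Proposition~\ref{prop:collective}: it treats the statement as definitional once the measured energy is identified with $\frac{1}{m}\sum_j E_j(\boldsymbol{\theta})$ via orthogonality of the ancilla labels (the remark in Section~\ref{subsec:ancilla}), which is exactly your cross-term computation.

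The difference is in how the interpretive half of the statement is supported. The paper offers only the informal ``regularizer'' reading and cites the single-seed ablation of Section~\ref{subsec:ablation} as empirical confirmation. You instead extract an actual inequality, $\sum_{j\neq\ell}[E_j(\boldsymbol{\theta}_\ell^*)-E_j(\boldsymbol{\theta}^*)]\ge E_\ell(\boldsymbol{\theta}^*)-E_\ell(\boldsymbol{\theta}_\ell^*)$, and add attainment of the minimum by compactness; neither appears in the paper. Your route gives the claim a precise, provable meaning. The paper's empirical route addresses the local optima SPSA actually reaches, about which a statement on global minimizers says nothing.

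Two small refinements are worth making.
\begin{enumerate}
\item Your rearrangement uses optimality of $\boldsymbol{\theta}_\ell^*$ only to get $g\ge 0$. The trade-off inequality therefore holds with $\boldsymbol{\theta}_\ell^*$ replaced by any $\boldsymbol{\theta}$, in particular any local minimizer of $E_\ell$. That fits the statement's ``narrow local basin'' wording more directly.
\item Identifying the circuit objective with $F$ presumes the ancilla carries exactly $m$ equally weighted labels, which $H^{\otimes a}$ gives only when $m=2^a$. For $m=3$, the unused label leaves the data register in $|0\cdots0\rangle$, which $U(\boldsymbol{\theta})$ fixes up to a phase. That branch therefore only rescales $F$ and adds a constant, and the argmin is unchanged.
\end{enumerate}
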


This collective optimization acts as a regularizer: rather than overfitting to one seed's local basin, it finds parameters that perform well across all seeds. The ablation study (Section~\ref{subsec:ablation}) confirms this mechanism empirically: applying the excitation-preserving circuit from each seed individually never finds the size-15 MIS, while the ancilla superposition over multiple seeds succeeds, consistent with the collective optimization escaping the individual local basins.

\textbf{MPS bond dimension.}
\begin{proposition}[Bond dimension of the excitation-preserving circuit]
\label{prop:bond_dim}
Let $|\Psi_0\rangle = \frac{1}{\sqrt{m}}\sum_{j=1}^{m}|s^{(j)}\rangle$ be the ancilla superposition of $m$ computational-basis seed states, evolved by $p$ layers of nearest-neighbour $fSim$ gates on a linear qubit chain. The MPS bond dimension satisfies $\chi_{\max} \leq m \cdot 2^p$.
\end{proposition}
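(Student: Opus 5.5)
The plan is to bound the Schmidt rank across every bipartition of the linear chain, since for an exact MPS the bond dimension at a cut equals the Schmidt rank there. Set $\chi_\ell$ to be the maximal Schmidt rank over all cuts after $\ell$ ansatz layers. I will prove $\chi_\ell \le m\cdot 2^\ell$ by induction on $\ell$. The $fSim$ gates act on the data register only, so the ancilla register can be placed at one end of the chain (or traced over); I treat it as part of whichever side of the cut contains it.

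\emph{Base case.} Before any layer, the joint state is $\frac{1}{\sqrt m}\sum_j |s^{(j)}\rangle|j\rangle_{\mathrm{anc}}$. Each summand is a product state across any cut, because $|s^{(j)}\rangle$ is a computational basis state. A sum of $m$ product states has Schmidt rank at most $m$, so $\chi_0\le m$.

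\emph{Inductive step.} Consider one layer of nearest-neighbour $fSim$ gates. In either brickwork or staircase arrangement, each cut between sites $i$ and $i+1$ is crossed by exactly one gate, $fSim_{i,i+1}$. Gates not straddling the cut act locally on one side and do not change the Schmidt rank at that cut. It therefore suffices to show that a single $fSim(\theta,\varphi)$ across the cut at most doubles the Schmidt rank, rather than multiplying it by the generic operator-Schmidt rank $4$.

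I will exploit the structure of the gate from (\ref{eq:fsim}). It acts as a phase on $|00\rangle$ and $|11\rangle$ and mixes only the two-dimensional block $\{|01\rangle,|10\rangle\}$. I will write the gate as
\begin{equation*}
fSim=\Pi_{00}+e^{-i\varphi}\Pi_{11}+\cos\theta\,(\Pi_{01}+\Pi_{10})+i\sin\theta\,(\sigma^+_i\sigma^-_{i+1}+\sigma^-_i\sigma^+_{i+1}).
\end{equation*}
I will then group its terms by the Hamming-weight transfer across the cut. The first group contains the diagonal ($U(1)$-charge-preserving) part. The second contains the hopping part, which moves exactly one excitation across the cut in one direction or the other.

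Next I decompose the incoming state into sectors of fixed left-block Hamming weight $N_L$, which is legitimate because the total weight $k$ is conserved. The diagonal part keeps each sector. The hopping part sends sector $N_L$ into sectors $N_L\pm1$ while acting on it through a rank-one operator on each side. The goal is to show that the resulting Schmidt components can be paired so that each incoming component produces at most two outgoing ones, which gives $\chi_{\ell+1}\le 2\chi_\ell$.

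The main obstacle is this doubling claim. Naive counting with the operator Schmidt decomposition of a generic $fSim$ gives rank $4$ and hence only $\chi_{\max}\le m\cdot 4^p$. I would first verify the factor $2$ directly on a single basis-state branch: there, $|01\rangle\mapsto\cos\theta|01\rangle+i\sin\theta|10\rangle$ produces exactly two left-charge values, while $|00\rangle$ and $|11\rangle$ produce one. I would then try to lift this to superpositions via the charge-sector bookkeeping above. If the sector argument only yields the weaker $4^p$ growth in general, I would state the $m\cdot 2^p$ bound under this branch-wise counting and support it with the numerical confirmation reported in the paper ($\chi_{\max}=16$ at $m=4$, $p=2$, with no truncation at $\varepsilon=10^{-16}$).
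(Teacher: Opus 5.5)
Your base case and overall skeleton (rank at most $m$ for a sum of $m$ product states, then a factor $2$ per layer) match the paper's proof, which asserts without justification that one layer of nearest-neighbour gates at most doubles the Schmidt rank at any bond. There is a genuine gap, though: the step you reduce everything to is false. A single $fSim$ across the cut $(i,i+1)$ can quadruple the Schmidt rank at that cut, and no $U(1)$ bookkeeping can prevent this.

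For a counterexample, take $\tfrac12(|01\rangle+|10\rangle)_{i-1,i}\otimes(|01\rangle+|10\rangle)_{i+1,i+2}$. It has rank $1$ across the cut and a single sector $N_L=1$. Applying $fSim(\pi/2,0)$ to $(i,i+1)$ gives $\tfrac12\bigl(i|00\rangle|11\rangle+|01\rangle|10\rangle+|10\rangle|01\rangle+i|11\rangle|00\rangle\bigr)$, which has rank $4$. The same happens whenever $\sin\theta\neq0$ and $e^{-i\varphi}\neq\cos^2\theta$.

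Your own decomposition shows why. The diagonal part already has operator Schmidt rank $2$, since its coefficient matrix $\bigl(\begin{smallmatrix}1&\cos\theta\\ \cos\theta&e^{-i\varphi}\end{smallmatrix}\bigr)$ is generically nonsingular. Each hopping term adds one more, so sector $N_L$ feeds sectors $N_L-1,N_L,N_L+1$ with multiplicities $1,2,1$.

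Your remark that the argument covers brickwork layers is also wrong, and this is decisive. Take $m=1$ and one brickwork layer (even bonds, then odd bonds). The even gates turn a seed with local pattern $0101$ on $(i-1,i,i+1,i+2)$ into a product of two pairs $a|01\rangle+b|10\rangle$ with $ab\neq0$. The odd gate then gives rank $4>1\cdot 2^1$ at that cut. So the proposition is false for brickwork, and any proof must use the gate order. Your fallback, branch-wise counting plus numerics, is not a proof: after one gate a branch is no longer a basis state.

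The missing idea is the staircase order of the linear entangler: gates $(0,1),(1,2),\dots,(n-2,n-1)$ applied in sequence, which is the paper's $p(n-1)$-gate ansatz. It needs no property of $fSim$ at all. Write $\chi_\ell(i)$ for the Schmidt rank at cut $(i,i+1)$ after $\ell$ layers, and compare cut $(i,i+1)$ after layer $\ell+1$ with cut $(i+1,i+2)$ after layer $\ell$.

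The gates of layer $\ell+1$ up to and including $U_{i,i+1}$ act only on qubits $0,\dots,i+1$. They therefore preserve the rank $r=\chi_\ell(i+1)$ at cut $(i+1,i+2)$, and the state has the form $\sum_{a=1}^{r}|A_a\rangle_{[0,i+1]}\otimes|B_a\rangle$. Expanding $|A_a\rangle=\sum_{b\in\{0,1\}}|A_{a,b}\rangle_{[0,i]}|b\rangle_{i+1}$ shows the rank at $(i,i+1)$ is at most $2r$. The remaining gates of the layer act to the right of that cut and do not change it.

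Hence $\chi_{\ell+1}(i)\le 2\chi_\ell(i+1)$. The maximum over cuts therefore does satisfy $\chi_{\ell+1}\le 2\chi_\ell$, giving $\chi_p\le 2^p\chi_0\le m\,2^p$, even though the per-cut doubling you aimed for fails. Equivalently, for each product branch you can cut the circuit's tensor network along a diagonal that crosses one qubit wire per layer, and then add the $m$ branches by linearity.

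The paper's own argument also only asserts the per-bond doubling, backed by a causal-cone remark. You located the real crux correctly, but it is closed by this shifted-cut argument, not by Hamming-weight conservation.
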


\begin{proof}
The bound has two independent contributions.

\textbf{Initial bond dimension.} Each seed $|s^{(j)}\rangle$ is a computational basis state and therefore a product state with bond dimension~1. The superposition $|\Psi_0\rangle$ is a sum of $m$ product states. For any bipartition at bond $i$, the reduced density matrix on the left is $\rho_L = \frac{1}{m}\sum_{j=1}^{m}|s_L^{(j)}\rangle\langle s_L^{(j)}|$, a sum of $m$ rank-1 projectors. The Schmidt rank across bond $i$ is therefore at most $m$, giving $\chi_0 \leq m$. When the left-projections $\{|s_L^{(j)}\rangle\}$ are linearly independent, which occurs when the $m$ seeds differ on at least one qubit on each side of the bipartition, the bound is tight: $\chi_0 = m$.

\textbf{Bond dimension growth per layer.} A single layer of nearest-neighbour two-qubit gates on a $d=2$ qubit chain can at most double the Schmidt rank at any bond. After $p$ layers, the causal cone of any bond extends at most $p$ sites in each direction, bounding the entanglement entropy to $S \leq p\log 2$ and hence $\chi_{\mathrm{circuit}} \leq 2^p$ from the circuit alone.

\textbf{Combined bound.} Starting from $\chi_0 \leq m$ and applying $p$ layers each multiplying the bond dimension by at most 2:
\begin{equation}
  \chi_{\max} \;\leq\; \chi_0 \cdot 2^p \;\leq\; m \cdot 2^p.
  \label{eq:bond_dim_bound}
\end{equation}
\end{proof}

\begin{remark}
The bound~\eqref{eq:bond_dim_bound} is nearly tight in practice. For $m=4$ seeds and $p=2$, the theoretical maximum is $\chi_{\max}=16$; numerical measurement at truncation threshold $\varepsilon=10^{-16}$ confirms $\chi=16$ exactly. At $p=3$, the theoretical maximum is $\chi_{\max}=32$; measurement gives $\chi=31$ ($97\%$ of the ceiling), confirming that the optimized $fSim$ angles generate close-to-maximal entanglement and the four seed branches contribute nearly independent Schmidt components. Beyond $p=3$, the circuit operates at the limit of exact MPS simulation and at $p \geq 4$ ($\chi_{\max} \geq 64$) they enter the truncation regime making hardware execution preferable.
\end{remark}

\textbf{Empirical evidence for the $p$-scaling threshold.}
The bond dimension bound has direct empirical consequences for solution quality. For the 180-node instance, $p=1$ yields no MIS; $p=2$ recovers the MIS with post-processing. For the 400-node system (brock400-1, certified MIS~27), $p=1$ finds independent sets of size at most 24; $p=2$ also finds size-24 sets but with 5--7$\times$ more distinct solutions; $p=3$ is required to recover the first size-25 solutions. This monotonic improvement with $p$ is consistent with the bond dimension analysis: each additional layer increases $\chi$ by a factor of~2, expanding the reachable region of the Hamming-weight-$k$ subspace and enabling the circuit to distinguish configurations that are energetically indistinguishable at shallower depth. Post-optimisation sample sizes were held constant across the $p$-scaling comparison (same total shots per trial, same number of trials) to ensure a fair comparison.

\textbf{Why increasing $p$ helps for the excitation-preserving ansatz but not for EfficientSU2.}
A standard EfficientSU2 ansatz uses single-qubit rotations interleaved with CNOT entangling layers. Adding layers expands expressivity across the full $2^n$-dimensional Hilbert space, but this is precisely the problem: the MIS solution occupies an exponentially small corner of that space, and deeper EfficientSU2 circuits suffer increasingly severe barren plateaus and gradient magnitudes decay exponentially with depth, making optimization harder rather than easier. The circuit becomes harder to train as it becomes more expressive.

The excitation-preserving ansatz behaves differently in two respects. First,
it is initialized inside the good region: the ancilla superposition places
all amplitude within the Hamming-weight-$k$ subspace spanned by near-optimal
seeds, so additional layers increase expressivity within the relevant near-MIS
region rather than spreading it across the full $2^n$-dimensional space. Second,
the barren plateau argument does not apply with the same force: gradients
are computed over a $\binom{n}{k}$-dimensional space rather than a $2^n$-dimensional
one, and initialization from high-quality seeds keeps the optimization away from
the flat plateaus that afflict random initialization in the full space.
Increasing $p$ for the excitation-preserving ansatz is therefore a
justified move. Each extra layer allows excitations to propagate further
across the graph, building correlations that distinguish configurations which are
energetically indistinguishable at shallower depth, while the constrained
initialization ensures the additional expressivity is directed toward the relevant
part of the solution space.

\subsection{Ancilla Interference: Theory}
\label{subsec:theory_interference}

The ancilla interference circuit (Section~\ref{subsec:interference}) produces the post-selected data-register probability
\begin{equation}
  P_\mathrm{interf}(\mathbf{x}) = \frac{1}{K}\left|\sum_{k=0}^{K-1} A_k(\mathbf{x})\right|^2,
  \qquad A_k(\mathbf{x}) = \langle\mathbf{x}|\,e^{i\sum_j \phi_{kj}\hat{n}_j}\,U(\boldsymbol{\theta})\,|s^{(k)}\rangle,
  \label{eq:p_interf_theory}
\end{equation}
where $\hat{n}_j$ is the occupation number operator for data qubit $j$. This differs from the stochastic-mixture distribution $P_\mathrm{stoch}(\mathbf{x}) = \frac{1}{K}\sum_k|A_k(\mathbf{x})|^2$ by the cross-terms:
\begin{equation}
  P_\mathrm{interf}(\mathbf{x}) - P_\mathrm{stoch}(\mathbf{x}) = \frac{2}{K}\sum_{k < k'}\mathrm{Re}\!\left[A_k(\mathbf{x})A_{k'}^*(\mathbf{x})\right].
  \label{eq:interf_vs_stoch}
\end{equation}
A configuration $\mathbf{x}$ is \emph{constructively amplified} when all cross-terms are positive, i.e., when all pairs of seed amplitudes $(A_k(\mathbf{x}), A_{k'}(\mathbf{x}))$ are in phase. Conversely, configurations where the seed amplitudes partially cancel are \emph{destructively suppressed}. This mechanism can selectively amplify high-quality independent sets if the optimization places all seeds' amplitudes in phase on the same good configurations.

\begin{proposition}[Post-selection rate]
\label{prop:postsel_rate}
For the uniform-phase interference circuit ($\phi_{kj}=0$ for all $k,j$), the probability of the ancilla measuring $|0\rangle^{\otimes a}$ is exactly $1/K$ when $U(\boldsymbol{\theta})$ is unitary and the seeds $\{|s^{(k)}\rangle\}$ are orthonormal.
\end{proposition}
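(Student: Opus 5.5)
The plan is a direct amplitude computation on the joint data--ancilla state, tracking the ancilla register through the final Hadamard layer. I will assume $K = 2^a$, so that the initial $H^{\otimes a}$ produces exactly the uniform superposition $\frac{1}{\sqrt{K}}\sum_{k=0}^{K-1}|k\rangle_{\mathrm{anc}}$ assumed in Section~\ref{subsec:interference}. I will also assume the seed-loading gates map $|0\rangle^{\otimes n}|k\rangle$ to $|s^{(k)}\rangle|k\rangle$.

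First, with $\phi_{kj}=0$ the CRZ layer is the identity. The state just before the final ancilla Hadamards is therefore
\[
|\Phi\rangle=\frac{1}{\sqrt{K}}\sum_{k=0}^{K-1}U(\boldsymbol{\theta})|s^{(k)}\rangle\otimes|k\rangle_{\mathrm{anc}}.
\]
Second, I apply $I\otimes H^{\otimes a}$ and project the ancilla onto $|0\rangle^{\otimes a}$. The only fact needed is $\langle 0|^{\otimes a}H^{\otimes a}|k\rangle = 2^{-a/2} = 1/\sqrt{K}$ for every basis label $k$. This uniformity holds because the all-zeros row of the Walsh--Hadamard matrix has all entries $+1$. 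The unnormalized post-selected data state is then
\[
|\tilde\Psi\rangle=\frac{1}{K}\,U(\boldsymbol{\theta})\sum_{k=0}^{K-1}|s^{(k)}\rangle,
\]
which agrees with \eqref{eq:psi_interf}.

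Third, the post-selection probability is $\langle\tilde\Psi|\tilde\Psi\rangle$. Since $U(\boldsymbol{\theta})$ is unitary, it drops out of the norm, leaving
\[
\frac{1}{K^2}\Bigl\|\sum_{k}|s^{(k)}\rangle\Bigr\|^2=\frac{1}{K^2}\sum_{k,k'}\langle s^{(k)}|s^{(k')}\rangle.
\]
By orthonormality only the $K$ diagonal terms survive, which gives $1/K$. For a physical reading, I will also note that summing \eqref{eq:p_interf} over $\mathbf{x}$ and using $\sum_{\mathbf{x}}A_k(\mathbf{x})A_{k'}^*(\mathbf{x})=\langle s^{(k')}|U^\dagger U|s^{(k)}\rangle=\delta_{kk'}$ shows that all cross-terms integrate to zero. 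So interference redistributes probability among outcomes $\mathbf{x}$ but does not change the total surviving weight.

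I expect no step to be a genuine obstacle; the main care is in the hypotheses. Orthonormality of the seeds holds automatically because the seeds are distinct computational basis states. The power-of-two assumption $K=2^a$ is essential and is where the claim could fail: if $K<2^a$, the unused ancilla labels carry amplitude in the initial superposition, and the rate changes. In that case I would either restrict the statement to $K=2^a$, or replace the initial $H^{\otimes a}$ by a uniform $K$-state preparation $V$ and post-select on $V|0\rangle$ rather than on $|0\rangle^{\otimes a}$. Under the stated uniform-amplitude assumption the same computation then goes through verbatim.
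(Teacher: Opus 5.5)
Your proof is correct and matches the paper's argument essentially step for step: you write the joint state after the final $H^{\otimes a}$, read off the $|0\rangle^{\otimes a}$ component $\frac{1}{K}\sum_k U(\boldsymbol{\theta})|s^{(k)}\rangle$, and use unitarity plus orthonormality to obtain $1/K$. Your explicit $K=2^a$ hypothesis is a genuine sharpening rather than a detour, because the paper's step $\langle 0|^{\otimes a}H^{\otimes a}|k\rangle=1/\sqrt{K}$ silently requires it; for $K=3$, $a=2$ the rate would be $1/4$ rather than $1/3$.
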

\begin{proof}
The ancilla register is initialized to the uniform superposition $(1/\sqrt{K})\sum_k|k\rangle$ and evolved by $U(\boldsymbol{\theta})$ on the data register only. After the final Hadamard $H^{\otimes a}$ on the ancilla, the joint state is
\begin{equation}
  |\Phi_\mathrm{final}\rangle = \frac{1}{\sqrt{K}}\sum_k U(\boldsymbol{\theta})|s^{(k)}\rangle \otimes H^{\otimes a}|k\rangle.
\end{equation}
The amplitude in the ancilla $|0\rangle^{\otimes a}$ component is
$\frac{1}{K}\sum_k U(\boldsymbol{\theta})|s^{(k)}\rangle$, with squared norm
\begin{equation*}
  \frac{1}{K^2}\sum_{k,k'}\langle s^{(k')}|U^\dagger U|s^{(k)}\rangle
  = \frac{1}{K^2}\sum_{k,k'}\delta_{kk'} = \frac{1}{K},
\end{equation*}
where the last step uses unitarity of $U$ and orthonormality of the seeds.
\end{proof}

The post-selection rate of exactly $1/K$ is verified empirically: for $K=4$ seeds, the ancilla measures $|00\rangle$ on $24.97\%$ of shots (expected: $25.0\%$) across the 180-node interference experiments, confirming the orthogonality condition is satisfied and the circuit is correctly prepared.

\begin{proposition}[Constructive interference for jointly good configurations]
\label{prop:constructive}
Let $\mathbf{x}^*$ be a bitstring of interest and let $P_\mathrm{interf}$ denote the post-selected probability (conditioned on the ancilla measuring $|0\rangle^{\otimes a}$). If the amplitudes $A_k(\mathbf{x}^*)$ are all in phase, then
\begin{equation}
  P_\mathrm{stoch}(\mathbf{x}^*) \;\leq\; P_\mathrm{interf}(\mathbf{x}^*) \;\leq\; K \cdot P_\mathrm{stoch}(\mathbf{x}^*),
\end{equation}
with the upper bound $P_\mathrm{interf}(\mathbf{x}^*) = K \cdot P_\mathrm{stoch}(\mathbf{x}^*)$ achieved when all $|A_k(\mathbf{x}^*)|$ are additionally equal in magnitude. Conversely, if $\sum_k A_k(\mathbf{x}^*) = 0$, then $P_\mathrm{interf}(\mathbf{x}^*) = 0$.
\end{proposition}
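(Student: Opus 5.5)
The plan is to work directly from the closed forms in \eqref{eq:p_interf_theory} and \eqref{eq:interf_vs_stoch}. Write $a_k = |A_k(\mathbf{x}^*)|$. The hypothesis ``all in phase'' means there is a common phase $\alpha$ with $A_k(\mathbf{x}^*) = a_k e^{i\alpha}$ for every $k$. Then $\bigl|\sum_k A_k(\mathbf{x}^*)\bigr| = \sum_k a_k$, and the statement reduces to two real inequalities:
\[
  \frac{1}{K}\sum_k a_k^2 \;\leq\; \frac{1}{K}\Bigl(\sum_k a_k\Bigr)^2 \;\leq\; \sum_k a_k^2 .
\]
The middle quantity is $P_\mathrm{interf}(\mathbf{x}^*)$ and the right-hand side is $K\cdot P_\mathrm{stoch}(\mathbf{x}^*)$.

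\textbf{Lower bound.} By \eqref{eq:interf_vs_stoch}, the difference $P_\mathrm{interf}-P_\mathrm{stoch}$ equals $\frac{2}{K}\sum_{k<k'}\mathrm{Re}[A_kA_{k'}^*]$. Under the common-phase assumption each term is $a_ka_{k'}\geq 0$, so the difference is nonnegative. Equivalently, $\bigl(\sum_k a_k\bigr)^2 \geq \sum_k a_k^2$ because all the $a_k$ are nonnegative.

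\textbf{Upper bound.} Apply Cauchy--Schwarz to the vectors $(1,\dots,1)$ and $(a_0,\dots,a_{K-1})$ to get $\bigl(\sum_k a_k\bigr)^2 \leq K\sum_k a_k^2$. Dividing by $K$ gives $P_\mathrm{interf}(\mathbf{x}^*)\leq K\,P_\mathrm{stoch}(\mathbf{x}^*)$. Equality in Cauchy--Schwarz holds exactly when the vectors are parallel, i.e.\ when all $a_k$ are equal. This gives the stated equality case. Note that the upper bound actually holds without the phase assumption, since $\bigl|\sum_k A_k\bigr|\leq \sum_k a_k$.

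\textbf{Converse.} If $\sum_k A_k(\mathbf{x}^*)=0$, then \eqref{eq:p_interf_theory} gives $P_\mathrm{interf}(\mathbf{x}^*)=\frac{1}{K}\cdot 0=0$ directly.

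\textbf{Main obstacle.} The only delicate point is normalization: the meaning of ``post-selected probability'' here. Conditioning on the ancilla outcome divides the joint probability $\frac{1}{K^2}\bigl|\sum_k A_k\bigr|^2$ by the success rate. That rate is exactly $1/K$ only in the uniform-phase case of Proposition~\ref{prop:postsel_rate}, which yields the $\frac{1}{K}$ prefactor in \eqref{eq:p_interf_theory}. I will adopt \eqref{eq:p_interf_theory} as the definition of $P_\mathrm{interf}$ and remark that it coincides with the true conditional probability when $\phi_{kj}=0$. Once that convention is fixed, the remaining steps are elementary.
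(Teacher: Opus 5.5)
Your proof is correct and follows the paper's argument step for step: under the common phase you write $P_\mathrm{interf}(\mathbf{x}^*) = \frac{1}{K}(\sum_k |A_k(\mathbf{x}^*)|)^2$, get the lower bound from nonnegativity of the cross terms, get the upper bound and its equality case from Cauchy--Schwarz, and handle the destructive case by direct substitution. Your two extra remarks are correct refinements that the paper does not state: the upper bound needs no phase assumption, and the $\frac{1}{K}$ prefactor equals the true conditional probability only when $\phi_{kj}=0$.
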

\begin{proof}
When all $A_k(\mathbf{x}^*)$ have the same argument $\psi^*$, writing $A_k(\mathbf{x}^*) = |A_k(\mathbf{x}^*)|e^{i\psi^*}$:
\begin{align}
  P_\mathrm{interf}(\mathbf{x}^*) &= \frac{1}{K}\left(\sum_k |A_k(\mathbf{x}^*)|\right)^2.
\end{align}
\textbf{Lower bound.} Since $\left(\sum_k|A_k|\right)^2 = \sum_k|A_k|^2 + 2\sum_{k<k'}|A_k||A_{k'}| \geq \sum_k|A_k|^2$, we have $P_\mathrm{interf} \geq \frac{1}{K}\sum_k|A_k|^2 = P_\mathrm{stoch}$.

\textbf{Upper bound.} By Cauchy-Schwarz, $\left(\sum_k|A_k|\right)^2 \leq K\sum_k|A_k|^2$, so $P_\mathrm{interf} \leq \frac{1}{K}\cdot K\sum_k|A_k|^2 = K\cdot P_\mathrm{stoch}$, with equality when all $|A_k|$ are equal.

\textbf{Fully destructive case.} If $\sum_k A_k(\mathbf{x}^*) = 0$ then $P_\mathrm{interf}(\mathbf{x}^*) = 0$.
\end{proof}

\begin{remark}[Interference mechanism and diagnostics]
The phase alignment condition of Proposition~\ref{prop:constructive} is not guaranteed
for an arbitrary circuit $U(\boldsymbol{\theta})$. However, the CRZ phase parameters
$\boldsymbol{\phi}$ in the interference circuit provide additional degrees of freedom
that can be optimized to align the phases of the seed amplitudes on good configurations. In the experiments reported here, the phase parameters were either kept at zero (stochastic mixture baseline) or optimized with SPSA from a zero initialization, which corresponds to the minimum-perturbation starting point relative to the baseline. The phase-sensitivity test (Test~2 of the interference diagnostic) sweeps a scalar offset $\delta$ across the CRZ parameters of one seed and measures how the mean IS size of the post-selected output varies. The $\delta$-dependence appears in the quality distribution within the post-selected subspace, and the observed variation of $\Delta = 0.385$ in mean IS size over the sweep provides direct evidence that $\mathrm{Re}[A_k A_{k'}^*] \neq 0$ and that the cross-term contributions are genuine.

The post-selection probability $P(\text{ancilla}=|0\rangle^{\otimes a})$ is determined entirely by
unitarity. Expanding the squared norm of $\sum_k U(\boldsymbol{\theta})|s^{(k)}\rangle$:
\begin{equation}
  P(\text{ancilla}=|0\rangle^{\otimes a})
  = \frac{1}{K^2}\left[K + \sum_{k \neq k'}\langle s^{(k')}|U^\dagger U|s^{(k)}\rangle\right]
  = \frac{1}{K^2}\left[K + \sum_{k \neq k'}\langle s^{(k')}|s^{(k)}\rangle\right]
  = \frac{1}{K},
\end{equation}
where the last step uses orthonormality of the seed states and unitarity of $U$.
The cross-terms vanish \emph{exactly} for $\phi_{kj}=0$, as the proof uses only
unitarity of $U$ and orthonormality of the seeds. For non-zero seed-specific CRZ
phases $\phi_{kj}$, the relevant cross-terms become
$\sum_{\mathbf{x}} e^{i\sum_j(\phi_{kj}-\phi_{k'j})x_j} A_k(\mathbf{x})A_{k'}^*(\mathbf{x})$,
which is a phase-weighted inner product that does not reduce to $\langle s^{(k')}|s^{(k)}\rangle$
and is not generally zero. The post-selection rate therefore stays exactly $1/K$ only at
$\phi_{kj}=0$; under a phase sweep the rate remains approximately $1/K$ as an empirical
finding, because the globally integrated phase-weighted cross-terms stay small due to the
near-orthogonality of the seeds. More importantly, this global smallness does not imply that the
cross-terms are small on every individual bitstring. On the subset of valid IS configurations,
where all seeds concentrate amplitude with consistent sign, the local cross-terms can be
significant and drive the variation in mean IS size. All interference information is therefore
encoded in the \emph{conditional} distribution over data-register bitstrings given that the
ancilla passed — which is what the mean IS size measures.

The cross-term $\mathrm{Re}[A_k(\mathbf{x})A_{k'}^*(\mathbf{x})]$ is maximised when
both seeds $k$ and $k'$ assign substantial amplitude to the same bitstring $\mathbf{x}$.
In an optimized excitation-preserving circuit, large amplitudes concentrate on
low-energy, high-quality independent sets that are jointly supported across all
seed branches. These are precisely the configurations most likely to extend to the
MIS under the stepwise heuristic. The interference mechanism therefore naturally
amplifies the bitstrings that are collectively useful across all seeds, aligning
the quantum enhancement with the classical post-processing objective.
\end{remark}

\iffalse
With non-zero seed-specific phases, the amplitude on ancilla=|00⟩ is:
$$\frac{1}{K}\sum_k e^{i\Phi_k},U|s^{(k)}\rangle$$

So $$P(ancilla=00) = \frac{1}{K^2}\sum_{k,k'} \langle s^{(k')}|U e^{-i\phi_{k'}} e^{i\phi_k} U|s^{(k)}\rangle$$

The off-diagonal $(k\neq k')$ term is:
$$\sum_{\mathbf{x}} e^{i\sum_j(\phi_{kj}-\phi_{k'j})x_j},A_{k'}^*(\mathbf{x}),A_k(\mathbf{x})$$

$$\sum_{\mathbf{x}} e^{i\phi(\mathbf{x})},A_{k'}^*(\mathbf{x}),A_k(\mathbf{x})$$
\fi

%=============================================================================
\section{Experimental Setup}
\label{sec:setup}
%=============================================================================

The experimental evaluation spans benchmark instances ranging from 64 to
400 nodes. Three core instances, a 64-node random graph, a 99-node random
graph, and the 180-node primary challenge instance, are used for algorithm
development and ablation studies and are described below. Additional instances
(the QOBLIB 200-node suite, \texttt{c125-9}, and \texttt{brock400-1}) are
introduced in Section~\ref{sec:results}. Key properties of the core instances are
summarized in Table~\ref{tab:instances}.

\begin{table}[!htbp]
\centering
\caption{Benchmark instance properties. Average degree $\bar{d} = 2|E|/|V|$
  and maximum degree $d_{\max}$ are computed from the graph edge lists.
  The 99-node instance is nearly regular ($d_{\min}=55$, $d_{\max}=58$);
  the 180-node instance is similarly dense ($d_{\min}=54$, $d_{\max}=64$).}
\label{tab:instances}
\begin{tabular}{@{}lccccccl@{}}
\toprule
Instance & $|V|$ & $|E|$ & $\bar{d}$ & $d_{\min}$ & $d_{\max}$ & MIS size & Source \\ \midrule
64-node  & 64  & 543  & 16.97 & 6  & 24 & 10 & QOBLIB benchmark \\
99-node  & 99  & 2780 & 56.16 & 55 & 58 & 7  & Hard random instance \\
180-node & 180 & 5277 & 58.63 & 54 & 64 & 15 & Custom dense instance \\ \bottomrule
\end{tabular}
\end{table}

The first, a 64-node graph, is drawn directly from the QOBLIB benchmark library ~\cite{koch2025qoblib}, a standardized repository of MIS instances designed for quantum algorithm evaluation. With $543$ edges and average degree $17.0$ it is relatively sparse, and its known MIS of size 10 represents $15.6$\% of the vertex set.

The second, a 99-node instance with $2780$ edges and an average degree $56.2$, is a hard, dense graph; its known MIS of size 7 represents only 7.1\% of vertices. It was produced by an adversarial evolutionary graph generator (designed with assistance from Claude.ai ~\cite{claudeai}). Seed graphs of a fixed vertex count $n=99$ are drawn from a mix of Erdős–Rényi derived constructions with edge probability $p \in [0.48, 0.52]$ and a planted decoy clique of size $c_1 \in [12, 18]$, random-regular graphs, and compositions of these joined at an interface. Candidates are scored on a composite hardness measure combining CPLEX optimality gap, branch-and-bound node count, runtime, LP and cut-relaxation weakness, reduction resistance, near-optimal solution density and symmetry. The pool is then evolved over $T = 8$ generations with archive size $M = 8$ through degree-preserving edge rewiring (2-switches), single edge add/remove, vertex near-cloning, shielding of the certified optimum, and a root-gap-amplifying motif insertion. Running on a single CPLEX 22.1.2.0 thread (Apple M4 Max, three seeds) certifies the optimum in 1.10–1.32s after 230–258 branch-and-bound nodes and 14.8k–15.7k simplex iterations, and ReduMIS recovers the same value 7 on all three seeds.

The third, a $180$-node instance with 5277 edges and average degree $58.6$, was constructed for the large-scale regime adversarially via a two-layer CSP and spin-glass encoding (designed with assistance from ChatGPT~\cite{chatgpt}). The conflict graph is built over $q = 30$ variables with $k = 6$ values each from a 13-regular CSP layer ($\lambda = 3$ forbidden value-pairs per constraint), an 11-regular signed spin-glass layer contributing $\mu = 2$ frustrated value-pairs per edge, and intra-variable cliques over each variable's $k$ value-vertices. An evolutionary loop with populations of $5$ and $10$ generations then mutates the encoding while scoring candidates against CPLEX, ReduMIS, and KaMIS, plus a spin-landscape ruggedness term. The intent is to inherit difficulty from frustrated Ising models and tight CSPs rather than engineer it on raw edges. Under the same single-threaded CPLEX 22.1.2.0 setup (three independent seeds), this instance requires 52.4--53.9 seconds and 34{,}100--42{,}400 branch-and-bound nodes to certify the MIS of size 15, consuming 1.4--1.8 million simplex iterations. ReduMIS recovers the optimum on all three seeds within a 60-second budget. The substantially higher certification cost compared to the 99-node instance (230--258 B\&B nodes, 1.1--1.3 seconds) reflects the intended difficulty of the two-layer construction.

The increasing density from the first to the second and third instances, reflected in both the
average degree and the lower MIS ratio means harder QAOA circuits
(more RZZ gates, lower coverage at any given bandwidth $k$) and harder bitstring decoding
(the optimal solution occupies a smaller corner of the search space). All three instances
are expressed in DIMACS graph format, and the MIS cardinalities were verified by exhaustive
search (for the smaller instances) or by tight upper bounds combined with the solutions found.

All simulations were conducted using the Qiskit framework ~\cite{qiskit2023} with the
Qiskit Aer MPS simulator backend under noiseless conditions. The MPS simulator represents
the quantum state as a one-dimensional tensor network with a tunable bond dimension
$\chi$, rather than storing the full $2^n$-dimensional state vector. A full state-vector
simulation of 180 qubits would require $2^{180} \approx 10^{54}$
complex amplitudes, which is classically intractable by many orders of magnitude.
The MPS simulator is exact for circuits with sufficiently low entanglement entropy and
provides a controlled approximation otherwise; the spectral reordering and sparsification
steps described in Sections~\ref{subsec:reorder} and ~\ref{subsec:sparse} directly reduce
circuit entanglement by concentrating interactions between neighbouring qubits, making
the MPS representation particularly effective in this setting.

For context, gate-based quantum algorithm literature on MIS
have generally been confined to instances of 10--50 qubits. The symmetry
protection analysis of Bravyi et al.~\cite{bravyi2020obstacles} considers
instances of up to 20 qubits; the iterative quantum algorithm of Brady and
Hadfield~\cite{brady2024iterative} reaches graphs of up to 26 vertices; and
the multi-controlled gate decomposition study of Tomesh et
al.~\cite{tomesh2024tradeoffs} examines QAOA-MIS tradeoffs in the range of
10--30 qubits. The HyDRA-MIS decomposition framework of Xu et
al.~\cite{xu2025qaoa} handles larger effective graphs by breaking them into
subproblems of at most 25 qubits, explicitly noting that simulating QAOA
circuits beyond 50 qubits exceeds their capabilities. Similarly, the
quantum-informed recursive optimization (QIRO) algorithm of Finžgar et
al.~\cite{finzgar2024qiro} solves MIS instances with hundreds of variables
by using fixed shallow ($p=1$) QAOA circuits only to extract local
correlations that guide a classical recursive reduction procedure, rather
than optimizing a large variational circuit directly. The quantum-classical
hybrid qReduMIS approach~\cite{schuetz2025qreducmis} employs up to 231
physical qubits on Rydberg atom-array hardware, but operates on unit-disk
graphs whose independent-set structure is physically encoded in the
blockade radius, which is a fundamentally different regime from general
graph MIS on gate-based processors. MPS-based QAOA simulations for MaxCut
have reached 512 qubits for 3-regular graphs with shallow
circuits~\cite{mps_juliQAOA2025}, but MaxCut is an unconstrained problem
that benefits from graph regularity and does not require enforcing
independent set constraints. To our knowledge, the MPS-VQE simulations presented here, which solve MIS to optimality on instances
up to 200 nodes and reach 92.6\% of the certified optimum on a 400-node instance, represent the
largest scale at which gate-based variational quantum algorithms have been applied to hard
general-graph MIS instances.

Shot budgets during the variational optimization phase were tuned to balance statistical
accuracy against per-iteration runtime, which grows with both system size and circuit
depth. For the 64- and 99-node instances shot counts of 2000--4000 per energy evaluation
were used. For the 180-node instance the budget was kept in the range 1000--4000, with
lower values used for Phase~3 to keep per-run runtimes tractable. In all cases, once the
optimizer converged, post-optimization sampling followed Mode~B (Section~\ref{subsec:heuristics},
Figure~\ref{fig:modeB}): $T=10$ iterations were run at the converged parameters, with
$B=50{,}000$ shots drawn per iteration, and the EMA suppression weights and bitstring
heuristic pipeline updated after each iteration ($\alpha=0.4$). The penalty parameter was
set to $P = 2$ for all instances up to 180 nodes, verified to be sufficient by
Proposition~\ref{prop:penalty}. For the 400-node \texttt{brock400-1} instance, the VQE+CVaR
(Phase~1) experiments required a higher penalty coefficient $P \in \{4, 6\}$ due to the
increased graph density. The excitation-preserving Phase~3 ansatz retains $P = 2$, as
Hamming-weight conservation removes the need for a strong penalty. Table~\ref{tab:configs}
summarizes all experimental configurations.

All simulations were run on an Apple M1 Pro with 32\,GB RAM. Table~\ref{tab:runtimes}
reports wall-clock runtimes at the practically effective iteration counts identified in
the experiments: 6{,}000 iterations for Phase~1 (VQE with EfficientSU2) and 2{,}000 iterations
for Phase~3 (excitation-preserving VQE with ancilla superposition).
Phase~3 converges in substantially less number of iterations than required by Phase~1 at comparable shot counts,
reflecting the much lower initial energy of the ancilla superposition state. Since the seeds are
already near-optimal independent sets, Phase~3 starts far closer to the optimum than Phase~1
which begins from a state with no structural information, and the optimizer has
far less ground to cover. Within Phase~3, runtime scales approximately linearly with shot
count up to 4{,}000 shots, then grows faster at 8{,}000 shots due to increased MPS
simulation cost per circuit evaluation. The bitstring heuristic pipeline adds a further
1.5--2.2 minutes regardless of configuration.

For context, the classical local search (CLS) post-processing step of the pipeline itself runs in sub-second wall-clock time per trial. CPLEX certifies the 180-node MIS classically in 52--54 seconds on a single thread, and ReduMIS recovers it heuristically within 60 seconds. The quantum simulation pipeline is substantially slower, reflecting the overhead of noiseless MPS simulation rather than quantum hardware execution time. The underlying Phase~3 circuit, with $p=2$ and nearest-neighbour $fSim$ gates on 182 qubits, would execute in microseconds on actual quantum hardware; the simulation cost arises from computing the full MPS state vector across thousands of SPSA iterations. The noiseless simulation establishes that the algorithmic pipeline is correct and that the MIS is recoverable from the quantum measurement distribution, as a proof of concept for future hardware implementation. Runtime competitiveness with classical solvers is not claimed.

\begin{table}[!htbp]
\centering
\caption{Summary of experimental configurations. Shots/eval refers to shots per energy evaluation during optimization. $^\dagger$CVaR objective \cite{barkoutsos2020improving} with $\alpha \in [0.1, 0.2]$ used in place of the standard expectation value. Phase~3 of the 180-node instance uses the standard expectation value (ancilla superposition initialization already concentrates the circuit near good solutions).}
\label{tab:configs}
\begin{tabular}{@{}llllc@{}}
\toprule
Instance & Algorithm & Optimizer & Layers $p$ & Shots/eval \\ \midrule
64-node  & VQE (EfficientSU2)$^\dagger$ & COBYLA & 2 & 2000--4000 \\
64-node  & VQE (EfficientSU2)$^\dagger$ & SPSA & 2, 3 & 2000--4000 \\
64-node  & QAOA$^\dagger$ ($k=1$--$3$, $p=2$--$5$) & COBYLA/SPSA & 2--5 & 2000--4000 \\
99-node  & VQE (EfficientSU2)$^\dagger$ & COBYLA & 2 & 2000--4000 \\
99-node  & VQE (EfficientSU2)$^\dagger$ & SPSA & 2 & 2000--4000 \\
180-node & VQE (EfficientSU2)$^\dagger$ & SPSA & 2 & 2000--4000 \\
180-node & VQE (ExcitPres.) + Ancilla & SPSA & 2 & 1000--2000 \\ \bottomrule
\end{tabular}
\end{table}

\begin{table}[!htbp]
\centering
\caption{Wall-clock runtimes at the practically effective iteration counts,
  measured on an Apple M1 Pro (32\,GB RAM). Phase~1 uses VQE with
  EfficientSU2 ($p=2$, 6{,}000 iterations); Phase~3 uses excitation-preserving
  VQE with ancilla superposition ($p=2$, 2{,}000 iterations).
  Bitstring heuristics runtime is independent of configuration.}
\label{tab:runtimes}
\begin{tabular}{@{}llcr@{}}
\toprule
System & Configuration & Shots/eval & Runtime (min) \\ \midrule
\multicolumn{4}{l}{\textit{64-node instance}} \\
& VQE/SPSA, $p=2$, 6K iters     & 2{,}000 & 40.8 \\
& VQE/COBYLA, $p=2$, 3.3K iters & 2{,}000 & 23.9 \\
& QAOA/SPSA, $p=2$, 6K iters    & 2{,}000 & 73.2 \\ \midrule
\multicolumn{4}{l}{\textit{99-node instance}} \\
& VQE/SPSA, $p=2$, 6K iters   & 2{,}000 & 148.0 \\
& VQE/COBYLA, $p=2$, 6K iters & 2{,}000 & 139.0 \\ \midrule
\multicolumn{4}{l}{\textit{180-node, Phase~1 (EfficientSU2, $p=2$, 6K iters)}} \\
& VQE/SPSA & 1{,}000 & 147.9 \\
& VQE/SPSA & 2{,}000 & 303.6 \\ \midrule
\multicolumn{4}{l}{\textit{180-node, Phase~3 (excitation-preserving, $p=2$, 2K iters)}} \\
& EXCP/SPSA & 1{,}000 &  58.9 \\
& EXCP/SPSA & 2{,}000 &  73.9 \\
& EXCP/SPSA & 4{,}000 & 100.6 \\
& EXCP/SPSA & 8{,}000 & 206.4 \\ \midrule
\multicolumn{3}{l}{Bitstring heuristics (all configurations)} & 1.5--2.2 \\
\bottomrule
\end{tabular}
\end{table}

\subsection{Hardware Validation Setup}
\label{subsec:hw_setup}

To assess whether the noiseless simulation results translate to real
quantum devices, the converged VQE configurations for the 64- and
99-node instances, and three QAOA configurations for the 64-node
instance, were executed on the 156 qubit IBM Quantum hardware \texttt{ibm\_marrakesh} (fielding Heron r2 processor) via an open-instance allocation with a 10-minute job time
budget \cite{ibmquantum}. Rather than re-optimizing on hardware, we adopt a
parameter-transfer protocol: the variational parameters obtained from
the noiseless MPS-simulator optimization (Table~\ref{tab:configs}) are
transferred directly to the hardware circuit, which is then transpiled
at optimization level~3 and sampled without further optimization. Hardware circuit execution used the Qiskit Sampler primitive
\cite{qiskit2023}, drawing $B = 20{,}000$ shots per Mode~B iteration. Since no
re-optimization was performed on hardware, the Estimator primitive
was not required. This isolates the effect of hardware noise on the measurement distribution and the downstream bitstring heuristic pipeline, independent of any
hardware-specific optimization difficulties.

The 10-minute time budget constrained each Mode~B iteration to
$B=20{,}000$ shots on hardware, compared to $B=50{,}000$ shots per
iteration in the noiseless setting ($T=10$ in both cases).
Table~\ref{tab:hw_configs} summarizes the transpiled two-qubit gate
depths for each configuration. The VQE circuits, with their
nearest-neighbour EfficientSU2 entanglement structure, transpile to
depths of 65 (64-node) and 100 (99-node). This indicates that
EfficientSU2's nearest-neighbour structure, maps onto the hardware's
native connectivity without SWAP-induced overhead, so that circuit
depth scales linearly with system size even after transpilation.

The 125-node \texttt{c125-9} instance from the QOBLIB benchmark library~\cite{koch2025qoblib} (certified MIS size 34) was also executed on \texttt{ibm\_marrakesh}
using the same parameter-transfer protocol. Converged VQE/SPSA parameters
($p=2$) from noiseless simulation were transferred directly to the hardware
circuit, which transpiles to a two-qubit depth of 126. The same Mode~B
sampling protocol was applied ($T=10$ iterations, $B=20{,}000$ shots per
iteration), with approximately 8 seconds of actual QPU execution time per
iteration, consistent with the 64- and 99-node runs.

The three QAOA configurations, by contrast, transpile to depths of
205--337, reflecting the long-range $RZZ$ couplings discussed in
Section~\ref{subsec:sparse}. Increasing the sparsification bandwidth
from $k=3$ to $k=4$ at fixed $p=3$ increases the noiseless optimization
time from 79.9s to 1477.1s, an 18-fold increase for a single unit of
bandwidth. This is a striking independent confirmation, in a
hardware-targeted setting, of the MPS bond-dimension bound
$\chi \leq 2^{O(kp)}$ derived in Section~\ref{subsec:theory_reorder}:
each additional unit of entanglement bandwidth exponentially increases
the per-iteration simulation cost. A bandwidth of $k=4$ was tractable only as a one-off optimization to generate hardware-transfer parameters. The full Mode-B post-processing protocol ($T=10$ iterations of $B=50{,}000$ shots each) would have been prohibitively slow, hence its exclusion from the main noiseless sweep in Table~\ref{tab:configs}.

\begin{table}[!htbp]
\centering
\caption{Hardware configurations executed on \texttt{ibm\_marrakesh}.
  Optimization time is the noiseless MPS-simulator wall-clock time in
  minutes to obtain the transferred parameters. Total time is the
  end-to-end wall-clock time in minutes for the hardware stage,
  comprising queueing, sampling 20{,}000 shots on the QPU across
  $T=10$ Mode~B iterations, and running the bitstring heuristic
  pipeline. Of this, the IBM Quantum platform reported approximately
  8 seconds of actual QPU execution time per iteration ($\approx 80$
  seconds total across all 10 iterations); the remainder is queueing
  and classical post-processing overhead.}
\label{tab:hw_configs}
\begin{tabular}{@{}llccc@{}}
\toprule
 &  & Two-qubit & Opt.\ time & Total time \\
System & Algorithm & depth & (min) & (min) \\
\midrule
64-node  & VQE/SPSA, $p=2$            &  65 &  36.7 &  3.7 \\
99-node  & VQE/SPSA, $p=2$            & 100 & 140.3 &  3.6 \\
125-node (\texttt{c125-9}) & VQE/SPSA, $p=2$ & 126 & 18.36 & 19.70 \\
64-node & QAOA/COBYLA, $p=3$, $k=4$ & 337 & 24.6  & 11.4 \\
64-node & QAOA/COBYLA, $p=3$, $k=3$ & 205 & 1.3   & 6.4  \\
64-node & QAOA/COBYLA, $p=5$, $k=3$ & 323 & 28.2  & 37.2 \\
\bottomrule
\end{tabular}
\end{table}

%=============================================================================
\section{Results and Discussion}
\label{sec:results}
%=============================================================================

\subsection{64-Node Instance}

Figure~\ref{fig:64_results} shows the optimization convergence for the two
VQE configurations and the QAOA run on the 64-node instance.
VQE with COBYLA ($p=2$, Figure~\ref{fig:64_vqe_cobyla}) exhibits smooth
monotonic energy descent over approximately 35 iterations, converging to a
plateau around energy 160. VQE with SPSA ($p=3$, Figure~\ref{fig:64_vqe_spsa})
achieves substantially deeper energy minimization over 4000--6000 iterations. The
stochastic nature of SPSA allows it to escape shallow local minima, and the
$p=3$ circuit provides greater expressibility. QAOA with COBYLA ($p=2$,
$k=3$, Figure~\ref{fig:64_qaoa}) shows the oscillatory
convergence typical of QAOA's alternating cost-mixer structure.

The real difference between VQE and QAOA on this instance is not
solution quality per run, but the \emph{diversity} of solutions recovered from
the measurement distribution. VQE with SPSA recovers 700--800 distinct
independent sets per post-optimization sampling iteration (one Mode~B
iteration with $B=50{,}000$ shots, Section~\ref{sec:setup}), while QAOA
configurations consistently recover only 0--5 independent sets per
iteration.
This disparity has a direct impact on the ability to find the MIS.

We employed two strategies to extract the MIS from the recovered sets.
\textbf{Strategy 1} takes the best single independent set found and drives it
to maximality (single-path extension). \textbf{Strategy 2} drives every
recovered independent set to maximality and reports the largest result. With
Strategy 1, the best results were:
\begin{itemize}[noitemsep]
  \item VQE/COBYLA, $p=2$: $\{2,13,24,35,36,47,54,57\}$, size~8
  \item VQE/SPSA, $p=2$: $\{0,6,15,21,24,33,54,57,63\}$, size~9
  \item VQE/SPSA, $p=3$: $\{0,3,12,30,39,42,48,57,63\}$, size~9
  \item QAOA/COBYLA, $p=5$, $k=3$: $\{4,7,21,31,33,46,51,56\}$, size~8
  \item QAOA/COBYLA, $p=2$, $k=3$: $\{4,7,25,30,33,38,55,56\}$, size~8
  \item QAOA/SPSA, $p=2$, $k=2$: $\{0,3,10,15,24,45,51,60,63\}$, size~9
\end{itemize}
With Strategy 2, which uses the full diversity of recovered sets, VQE
configurations successfully identify multiple maximum independent sets of
size 10:
\begin{align*}
  &\{0, 3, 12, 15, 18, 45, 48, 51, 60, 63\}, \quad
   \{0, 3, 12, 15, 21, 34, 46, 51, 52, 63\}, \\
  &\{0, 7, 12, 18, 30, 33, 45, 51, 52, 63\}, \quad
   \{0, 7, 12, 17, 29, 42, 48, 51, 60, 63\}.
\end{align*}
QAOA, despite reaching size~9 with Strategy~1, cannot reach size~10 with
Strategy~2 because it recovers too few independent sets (0--5 per run) to
provide the diversity needed. With only a handful of sets to drive to
maximality, the chance of landing on a size-10 solution is negligible. This
directly illustrates the importance of solution diversity as a precondition
for the maximality extension step, and explains why VQE with the
hardware-efficient ansatz, which populates the measurement distribution much
more broadly, outperforms QAOA on this instance despite comparable convergence
profiles. No tangible benefit in solution quality was found beyond $p=2$--$3$
for VQE on this instance.

We also evaluated the Hamiltonian-Guided Leverage Embedding (HGLE) approach~\cite{mukherjee2026hgle}, a hybrid pipeline that compresses QAOA measurement samples via leverage-score subspace embedding for noise-robust parameter estimation. On the 64-node instance, HGLE substantially improves QAOA's measurement diversity: HGLE-QAOA recovers 15--150 distinct independent sets per run (versus 0--5 for standard QAOA), and HGLE-VQE with Strategy~2 recovers the same four MIS of
size~10 as the standard pipeline with 1000-1800 sets per run. Despite the improved diversity, HGLE-QAOA reaches only size~9 with Strategy~1, consistent with the structural limitations of QAOA identified above. HGLE did not converge on the denser 99- and 180-node instances, where the high-dimensional landscape appears to exceed the subspace compression assumptions of the method. HGLE was not evaluated on the 200- and 400-node benchmarks introduced in this work.

With CVaR ($\alpha \in [0.1,0.2]$, 6000 iterations, 2000 shots),
the pipeline recovers up to 6 distinct MIS of size~10 per run.
Two independent SPSA runs with different stochastic trajectories
each recovered 6 MIS, with significant overlap across runs
(sets~1, 2, 4, 5 recovered in both), confirming robustness of
the pipeline to the inherent randomness of SPSA. Run~1 recovered:
\begin{align*}
  &\{0,11,12,17,29,42,48,51,60,63\}, \quad
   \{0,3,12,15,21,34,46,51,56,63\}, \\
  &\{0,11,12,18,30,33,45,51,56,63\}, \quad
   \{0,7,12,17,29,42,48,51,60,63\}, \\
  &\{0,7,12,18,30,33,45,51,56,63\}, \quad
   \{0,3,12,15,18,45,48,51,60,63\}.
\end{align*}
Run~2 recovered the same four core sets plus two variants.
CPLEX population-pool enumeration~\cite{ibm_cplex}, independently
verified by a Bron--Kerbosch search~\cite{bron1973} on the complement
graph, certifies that the 64-node instance has exactly 9 distinct MIS,
all sharing a common 4-vertex core $\{0,12,51,63\}$. Across repeated
CVaR runs, all 9 certified MIS were recovered, confirming a 100\% enumeration
rate and demonstrating that the pipeline exhaustively samples the MIS
population for this instance.

\begin{figure}[!htbp]
\centering
\begin{subfigure}[t]{0.32\linewidth}
  \includegraphics[width=\linewidth]{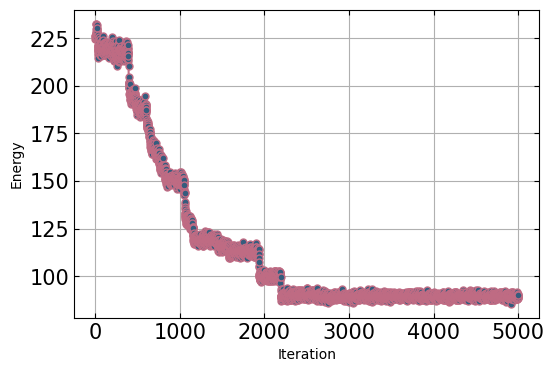}
  \caption{VQE, COBYLA, $p=2$}
  \label{fig:64_vqe_cobyla}
\end{subfigure}\hfill
\begin{subfigure}[t]{0.32\linewidth}
  \includegraphics[width=\linewidth]{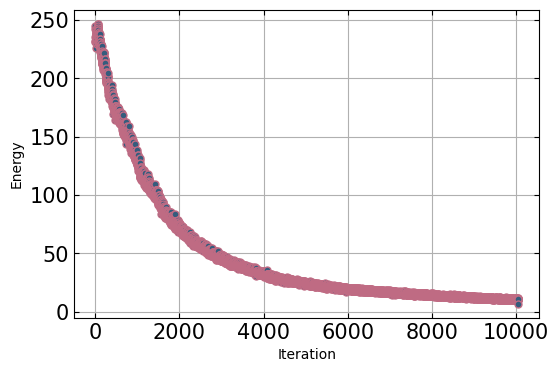}
  \caption{VQE, SPSA, $p=3$}
  \label{fig:64_vqe_spsa}
\end{subfigure}\hfill
\begin{subfigure}[t]{0.32\linewidth}
  \includegraphics[width=\linewidth]{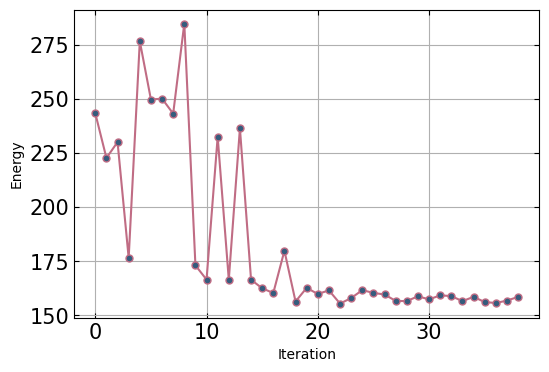}
  \caption{QAOA, COBYLA, $p=2$, $k=3$}
  \label{fig:64_qaoa}
\end{subfigure}
\caption{Optimization convergence on the 64-node instance for three
  configurations. COBYLA-based runs (left and right) plateau early due to
  limitations in high-dimensional parameter space, while SPSA (center) achieves
  continuous descent to near-zero energy over $10000$ iterations, ultimately
  yielding the maximum independent set of size 10 after post-processing.}
\label{fig:64_results}
\end{figure}

\subsection{99-Node Instance}

QAOA was not pursued for the 99-node instance. As detailed in
Section~\ref{subsec:depth} and Figure~\ref{fig:coverage}, the MPS simulator
could not advance beyond $k=1$ for this instance (2780 edges, average degree
56.2), where coverage is only 2.3\%. At $k=1$, the QAOA cost unitary is
structurally equivalent to VQE, removing any motivation to use QAOA. All
experiments therefore used VQE with the EfficientSU2 ansatz.

Figure~\ref{fig:99_results} presents convergence profiles for VQE with COBYLA
and SPSA ($p=2$, 4000 shots). The COBYLA run (Figure~\ref{fig:99_cobyla})
exhibits a characteristic staircase pattern with distinct plateau-and-drop
events, decreasing from $\sim$1330 to $\sim$340 but stalling there, and
recovering on average 60--100 distinct independent sets per
50{,}000-shot sampling iteration.
SPSA (Figure~\ref{fig:99_spsa}) achieves a qualitatively different trajectory:
continuous descent throughout 4000–6000 iterations to approximately 90, and
recovers 150--200 distinct independent sets per sample, roughly twice as many
as COBYLA. As with the 64-node instance, we used both Strategy~1 (best single
set driven to maximality) and Strategy~2 (all sets driven to maximality).

With Strategy~1, the best immediate results were size~6 for both optimizers:
VQE/COBYLA recovers $\{4,11,28,39,61,99\}$ and VQE/SPSA recovers
$\{8,20,25,35,57,97\}$. With Strategy~2, exploiting the full diversity of
recovered sets, both optimizers identify MIS of size~7.
VQE/COBYLA finds two distinct MIS:
\begin{align*}
\{28, 33, 44, 51, 56, 61, 95\}, \quad
\{12, 21, 31, 41, 48, 53, 63\}.
\end{align*}
VQE/SPSA finds four distinct MIS, including two not found by COBYLA:
\begin{align*}
\{28, 33, 44, 51, 56, 61, 95\}, \quad
\{12, 21, 31, 41, 48, 53, 63\}, \\
\{25, 29, 35, 57, 62, 65, 89\}, \quad
\{12, 21, 41, 48, 53, 55, 63\}.
\end{align*}
The broader measurement distribution produced by SPSA (150--200 distinct
independent sets versus 60--100 for COBYLA) directly enables wider coverage
of the MIS solution space. No tangible improvement in solution quality was
found by going beyond $p=2$ for VQE on this instance.

With CVaR ($\alpha \in [0.1, 0.2]$, 6000 iterations, 2000 shots per
evaluation), the diversity of the measurement distribution improves
substantially. A single SPSA run recovers up to 10 distinct MIS of size~7:
\begin{align*}
&\{4,6,10,43,58,72,92\}, \quad \{19,26,38,61,69,71,85\}, \\
&\{11,14,58,73,91,92,93\}, \quad \{44,45,56,74,78,79,95\}, \\
&\{12,15,60,66,73,88,92\}, \quad \{1,14,44,74,78,93,95\}, \\
&\{12,15,22,49,60,73,88\}, \quad \{1,2,44,78,93,94,95\}, \\
&\{28,33,44,51,56,61,95\}, \quad \{1,14,44,48,74,78,93\}.
\end{align*}
These 10 solutions span structurally diverse regions of the graph, with
many sets sharing no vertices at all. A single run thus samples broadly
from the MIS population; repeated runs with different SPSA trajectories
collectively enumerate a larger fraction of all MIS. This simultaneous recovery of multiple distinct MIS solutions is a
qualitative advantage of the sampling-based quantum approach over
classical exact solvers, which typically return a single optimal
solution. The same enumeration procedure certifies that the 99-node
instance has exactly 56 distinct MIS~\cite{ibm_cplex,bron1973}.
With approximately 10 MIS recovered per run, the pipeline samples
a meaningful fraction of this population in each run, with different
runs covering different subsets.

\begin{figure}[!htbp]
\centering
\begin{subfigure}[t]{0.48\linewidth}
  \includegraphics[width=\linewidth]{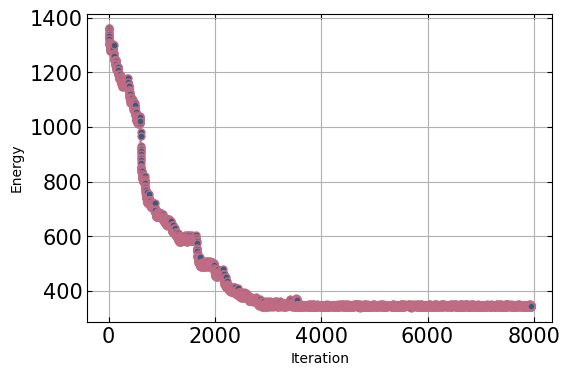}
  \caption{VQE, COBYLA, $p=2$, 4k shots}
  \label{fig:99_cobyla}
\end{subfigure}\hfill
\begin{subfigure}[t]{0.48\linewidth}
  \includegraphics[width=\linewidth]{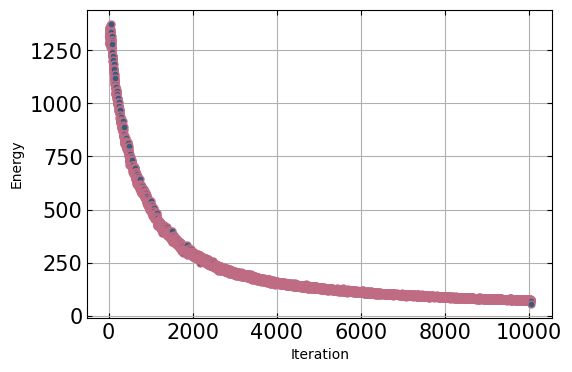}
  \caption{VQE, SPSA, $p=2$, 4k shots}
  \label{fig:99_spsa}
\end{subfigure}
\caption{Optimization convergence on the 99-node instance. COBYLA (left)
  shows plateau-and-drop dynamics and stalls at energy $\approx 340$ over
  $\sim$35 function evaluations. SPSA (right) achieves sustained descent to energy $\approx 90$ over
  4000--6000 iterations, recovering up to 10 distinct maximum independent
  sets of size~7 per run with CVaR post-processing.}
\label{fig:99_results}
\end{figure}

\subsection{180-Node Instance}

The 180-node instance is the central challenge of this study. Figure~\ref{fig:180_vqe}
shows the convergence of the Phase~1 standard VQE run. Starting from an initial
energy of $2632.1$, SPSA reduces the energy substantially over 4000–6000 iterations,
with Figure~\ref{fig:180_vqe} showing the full convergence profile over $10000$ iterations.
Nevertheless, when measurement bitstrings are decoded and post-processed, the best independent
sets recovered plateaus at size 14. The quantum state has converged to a basin associated with size-14
solutions, and neither additional iterations nor perturbation of the parameters can escape
it. The energy landscape in the EfficientSU2 parameter space does not present a gradient path from size-14 to
size-15 solutions.

\begin{figure}[!htbp]
\centering
\begin{subfigure}[t]{0.48\linewidth}
  \includegraphics[width=\linewidth]{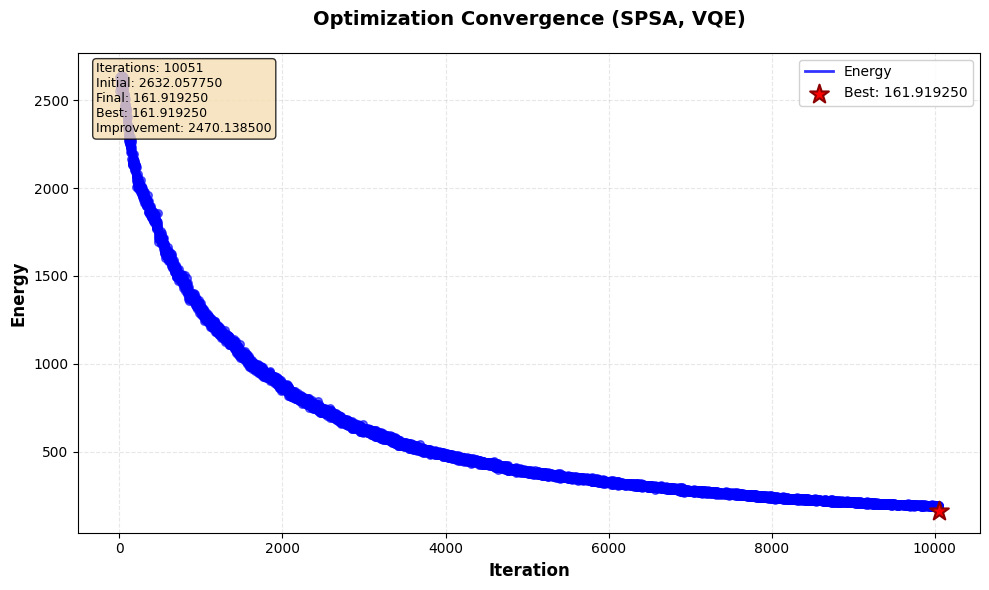}
  \caption{Phase 1: Standard VQE, SPSA, $p=2$}
  \label{fig:180_vqe}
\end{subfigure}\hfill
\begin{subfigure}[t]{0.48\linewidth}
  \includegraphics[width=\linewidth]{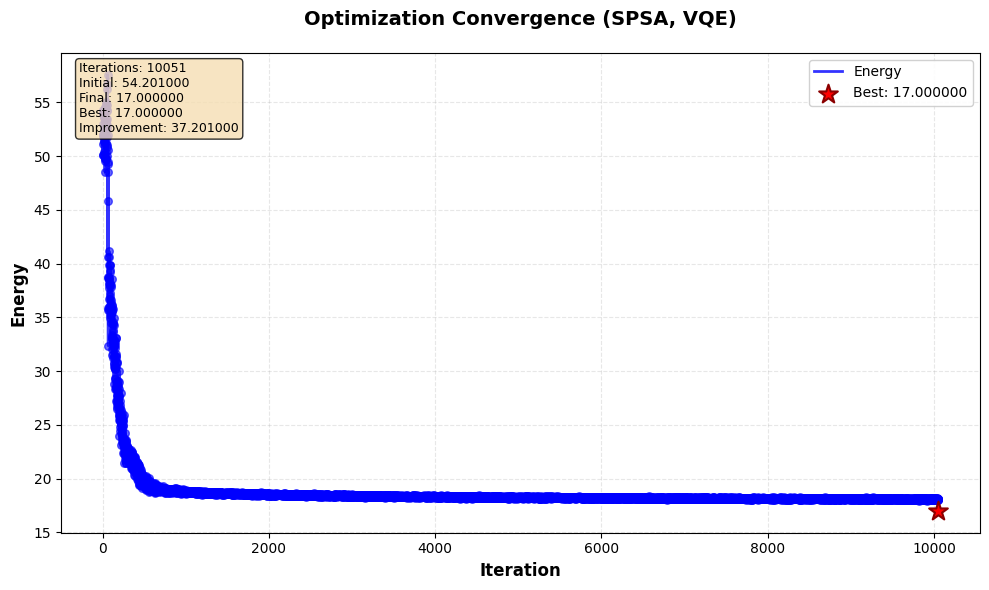}
  \caption{Phase 3: Excitation-preserving VQE with ancilla superposition ($p=2$)}
  \label{fig:180_excp}
\end{subfigure}
\caption{Optimization convergence on the 180-node instance.
  \textbf{Left (Phase 1):} Standard VQE achieves a 94\% energy reduction from
  2632.1 to 161.9 over $10000$ iterations but stalls at independent set size 14.
  \textbf{Right (Phase 3):} The excitation-preserving ansatz initialized from
  the ancilla superposition of four size-14 seed solutions (initial energy
  54.2) converges over 2000--4000 iterations; the size-15 maximum independent
set is subsequently recovered from the measurement distribution by
the bitstring heuristic pipeline.}
\label{fig:180_results}
\end{figure}

Figure~\ref{fig:180_excp} shows the convergence of Phase~3, the
excitation-preserving VQE run ($p=2$, SPSA, 1000--2000 shots) with the
ancilla superposition initialization.
Because the initial state $|\Psi_0\rangle$ already encodes size-14 independent
sets, the initial energy of $54.2$ is dramatically lower than the Phase~1
starting point. This further converges to a low expectation value over 2000--4000 iterations,
and the convergence profile shows continued slow descent rather than premature plateau
behavior. This is expected of the excitation-preserving landscape as the circuit is
confined exactly to the Hamming-weight-14 subspace (the $fSim$ gates conserve Hamming weight).
Every gradient step is directly informative about the relative quality of competing size-14
configurations. The size-15 MIS is subsequently recovered by the stepwise
extension step of the bitstring heuristic. Note that the expectation value energy is
not a direct indicator of MIS presence; it reflects the average quality of the measurement
distribution, guiding the optimizer toward productive regions where MIS-adjacent
configurations are likely.

Post-processing of the Phase~3 measurement distribution yields multiple
independent sets of size~14 and, maximum independent sets of
size~15. Depending on the configuration, between 645 and 2573 unique
independent sets are recovered per run (see Section~\ref{subsec:phase3_sensitivity}),
with 1--2 MIS found per run at the recommended configuration
($p=2$, 2000--4000 iterations, 2000 shots). Different runs with different
seed compositions recover MIS from different structural families of the
optimal solution space. The first MIS confirmed is:
\begin{equation*}
  \text{MIS}_1 = \{4, 12, 42, 50, 56, 71, 95, 112, 121, 127, 139, 146, 159, 168, 177\}.
\end{equation*}
Further runs with varied seed compositions recover additional distinct MIS
from other structural regions of the graph, as detailed in
Section~\ref{subsec:phase3_sensitivity}. Table~\ref{tab:output_sets}
lists the output sets from one representative run. Several recovered
size-14 sets (e.g., sets~2, 4, 5, and 6) contain vertices not present
in any of the input seed states, confirming that the quantum evolution
genuinely explored regions of the solution space inaccessible to any
individual starting point.

\begin{table}[!htbp]
\centering
\caption{Independent sets recovered from Phase~3 measurement distribution after
  post-processing. Sets~1--8 have size 14; the final set has size 15 and is
  the maximum independent set. Sets~2, 4, 5, 6 contain vertices not present in
  any input seed state, indicating genuine quantum exploration beyond the seed
  neighborhood.}
\label{tab:output_sets}
\resizebox{\linewidth}{!}{%
\begin{tabular}{@{}cll@{}}
\toprule
Set & Size & Vertices \\ \midrule
1 & 14 & $\{8, 13, 30, 48, 56, 74, 95, 100, 112, 122, 126, 140, 146, 178\}$ \\
2 & 14 & $\{2, 20, 38, 42, 51, 56, 95, 113, 121, 154, 159, 165, 168, 177\}$ \\
3 & 14 & $\{20, 35, 36, 46, 49, 73, 93, 111, 115, 135, 148, 167, 173, 179\}$ \\
4 & 14 & $\{2, 22, 51, 59, 86, 95, 113, 121, 137, 154, 159, 165, 168, 177\}$ \\
5 & 14 & $\{20, 38, 42, 51, 56, 95, 102, 113, 121, 134, 154, 159, 170, 177\}$ \\
6 & 14 & $\{2, 20, 51, 59, 86, 95, 113, 121, 137, 154, 159, 165, 168, 177\}$ \\
7 & 14 & $\{2, 22, 32, 38, 42, 51, 95, 100, 113, 121, 134, 154, 163, 177\}$ \\
8 & 14 & $\{17, 24, 42, 89, 95, 101, 104, 111, 132, 140, 149, 154, 168, 178\}$ \\
\midrule
\textbf{MIS} & \textbf{15} & $\boldsymbol{\{4, 12, 42, 50, 56, 71, 95, 112, 121, 127, 139, 146, 159, 168, 177\}}$ \\
\bottomrule
\end{tabular}%
}
\end{table}

\subsection{Phase~3 Sensitivity Analysis and Multiple MIS Families}
\label{subsec:phase3_sensitivity}

\textbf{Shot-count sensitivity.}
The number of distinct MIS recovered in a Phase~3 run depends non-monotonically
on the per-evaluation shot count. At 2000 iterations with $p=2$: 1K shots yields
1682 unique independent sets but no MIS; 2K shots yields 2055 unique sets and
1 MIS; 4K shots yields 2082 unique sets but no MIS; 8K shots yields 2573 unique
sets and 1 MIS. This happens because SPSA noise levels vary with shot count and
different shot counts produce different noise levels in the gradient
estimate, leading the optimizer to slightly different parameter basins.
More shots does not guarantee finding the MIS. What matters is whether the
converged parameters place sufficient amplitude on MIS-adjacent configurations.
Increasing the per-iteration shot count $B$ from 50{,}000 to 100{,}000
yields more unique independent sets but does not reliably increase MIS
recovery, confirming that the MIS discovery depends on parameter quality
rather than sampling effort alone.

\textbf{Iteration count and over-optimization.}
Results across Phase~3 configurations peak at approximately 2000--4000
iterations. At 6000 iterations with 2K shots, the number of unique sets
drops sharply to 645 (versus 2055 at 2000 iterations), indicating that
SPSA has drifted into a narrow parameter basin. This over-optimization
pattern also appears in Phase~1, where results peak around 4000--6000
iterations. The practical implication is that an early-stopping criterion
based on solution diversity rather than a fixed iteration budget would be
more appropriate for both phases.

\textbf{Multiple MIS families and iterative enumeration.}
The 180-node instance has multiple structurally distinct MIS families
covering different regions of the graph. The seed composition of the
ancilla superposition determines which family or families are recovered:
when the seeds are concentrated in one structural region, the pipeline
recovers MIS from that region; when the seeds span multiple regions,
multiple MIS families are recovered in a single run.

This seed-family relationship enables an iterative enumeration strategy,
which emerged organically from the experimental observations rather than
by design. In a run using four size-14 seeds that included one set drawn
from a structurally distinct region of the graph, the pipeline returned
two MIS that shared 13--14 vertices with each other:
\begin{align*}
  &\{19,25,63,81,95,97,108,120,133,140,147,155,159,165,178\}, \\
  &\{19,25,43,63,81,95,97,108,120,133,140,155,159,165,178\}.
\end{align*}
The structural similarity of these two solutions suggested they belonged
to the same MIS family and that further members might exist nearby. This
prompted a targeted follow-up experiment. The two known MIS were used
directly as seeds in a new Phase~3 run, together with one additional
size-14 set from the same structural region:
\begin{align*}
  \text{seed}_1 &= \{19,25,43,63,81,95,97,108,120,133,140,155,159,165,178\}, \\
  \text{seed}_2 &= \{19,25,33,51,63,81,95,120,133,140,147,155,165,178\}, \\
  \text{seed}_3 &= \{17,24,42,89,95,101,104,111,132,140,149,154,168,178\}.
\end{align*}
The ancilla superposition encodes all three seeds simultaneously. The
size-15 MIS seed places amplitude in the Hamming-weight-15 subspace,
while the two size-14 seeds place amplitude in the Hamming-weight-14
subspace. The excitation-preserving gates conserve Hamming weight
within each branch independently, so the circuit simultaneously explores
the neighbourhood of the known size-15 solution and the size-14 region
in a single run. The run recovered two MIS, one of
which was a new variant not present among the input seeds:
\begin{align*}
  &\{19,25,63,81,95,97,108,120,133,140,147,155,159,165,178\}, \\
  &\{19,25,43,63,81,95,97,108,120,133,140,155,159,165,178\}.
\end{align*}
This confirms the family hypothesis and demonstrates that the pipeline
can be used iteratively. An observation of structurally similar MIS
solutions provides a clue about the existence of a nearby family, and
targeted reseeding with known solutions from that family enumerates
additional members. This bootstrapping approach converts MIS discovery
into MIS enumeration, and the coverage of the optimal solution space
grows progressively with each targeted run. While classical exact solvers
return one solution, the pipeline presented here returns multiple solutions,
and different runs cover different corners of the MIS space.

\textbf{Certified MIS enumeration.}
To establish a ground truth for the MIS population of the three original
instances (64, 99, and 180 nodes), we used CPLEX's solution pool~\cite{ibm_cplex} with
exhaustive enumeration mode, independently verified by a
Bron--Kerbosch search~\cite{bron1973} on the complement graph.
Both methods agree exactly on counts and sets. The 180-node
instance has precisely 11 distinct MIS, the 64-node instance
has 9, and the 99-node instance has 56. The 11 MIS of the
180-node instance fall into three structurally distinct families.
\textbf{Family~C} comprises a single isolated MIS (MIS$_1$,
the first solution found). \textbf{Family~B} comprises five
tightly clustered MIS sharing a 12-vertex core
$\{19,25,63,95,97,108,120,133,155,159,165,178\}$.
\textbf{Family~A} comprises five more loosely related MIS
sharing only a 3-vertex core $\{23,68,103\}$ in a different
structural region of the graph.

\textbf{Interference circuit MIS recovery.}
The ancilla interference circuit was evaluated on the same 180-node instance
using the same seed compositions and bitstring heuristics pipeline as the
stochastic runs above. With independently optimised SPSA parameters, the interference circuit recovered
MIS~7--11 (all five members of Family~B) and MIS~2 (one member of Family~A),
for a total of six of the 11 certified MIS (see Table~\ref{tab:all_mis}).
This completes Family~B, which the stochastic circuit reached only partially,
and opens Family~A, a structural region not reached by the stochastic circuit
at all. Beyond the MIS count, the interference circuit consistently produced a
larger pool of high-quality size-14 independent sets from the same
measurement budget, providing the bitstring heuristics with a richer
set of starting points for MIS extension. The stochastic circuit recovered
4 of the 11 certified MIS (36\%), spanning Family~C completely
and three of five members of Family~B. The interference circuit
recovered 6 of the 11 certified MIS (55\%), completing all five
members of Family~B and opening Family~A with one recovered set.
Across both Phase~3 variants, 7 of the 11 certified MIS (64\%)
were recovered in total. Family~A members 3--6 were not reached
in the experiments reported here; they are expected to be
recoverable with seeds drawn from the Family~A structural region.

\begin{table}[!htbp]
\centering
\small
\caption{All 11 certified maximum independent sets of the 180-node
  instance, verified by CPLEX population pool and Bron--Kerbosch
  enumeration~\cite{ibm_cplex,bron1973}. Sets are grouped by
  structural family. Family~C (MIS~1) is isolated. Family~B
  (MIS~7--11) shares 12-vertex core
  $\{19,25,63,95,97,108,120,133,155,159,165,178\}$.
  Family~A (MIS~2--6) shares 3-vertex core $\{23,68,103\}$.
  $\checkmark$ indicates MIS recovered by the respective Phase~3 variant.}
\label{tab:all_mis}
\resizebox{\linewidth}{!}{%
\begin{tabular}{@{}cclcc@{}}
\toprule
MIS & Family & Vertices & \multicolumn{2}{c}{Recovered} \\
\cmidrule(l){4-5}
 & & & Stochastic & Interference \\ \midrule
1  & C & $\{4,12,42,50,56,71,95,112,121,127,139,146,159,168,177\}$    & $\checkmark$ & \\
\midrule
2  & A & $\{5,6,23,29,35,41,68,73,79,93,103,148,167,170,179\}$         & & $\checkmark$ \\
3  & A & $\{17,23,26,31,61,68,76,88,103,110,120,140,158,169,175\}$     & & \\
4  & A & $\{17,23,29,42,54,68,76,79,89,101,103,129,140,160,170\}$      & & \\
5  & A & $\{17,23,29,42,54,68,76,79,89,101,103,129,160,170,179\}$      & & \\
6  & A & $\{17,23,29,42,54,68,79,89,101,103,129,148,160,170,179\}$     & & \\
\midrule
7  & B & $\{19,25,37,63,95,97,108,118,120,133,147,155,159,165,178\}$   & $\checkmark$ & $\checkmark$ \\
8  & B & $\{19,25,43,63,81,95,97,108,120,133,140,155,159,165,178\}$    & $\checkmark$ & $\checkmark$ \\
9  & B & $\{19,25,43,63,95,97,108,118,120,133,140,155,159,165,178\}$   & & $\checkmark$ \\
10 & B & $\{19,25,63,81,95,97,108,120,133,140,147,155,159,165,178\}$   & $\checkmark$ & $\checkmark$ \\
11 & B & $\{19,25,63,95,97,108,118,120,133,140,147,155,159,165,178\}$  & & $\checkmark$ \\
\bottomrule
\end{tabular}%
}
\end{table}

\subsection{Ablation: Classical Heuristics from Single Seed States}
\label{subsec:ablation}

A natural question is whether the bitstring heuristic pipeline alone, applied
directly to each of the original size-14 seed solutions without any quantum
circuit, could discover the size-15 MIS. We conducted this ablation
experiment explicitly: starting from each seed state individually, the full
greedy correction and stepwise extension pipeline was applied with multiple
random restarts, exhausting the classical search from that starting point.

In the large majority of runs, the heuristic returns the same size-14 set
it started from. The local search landscape
is flat around each seed and the heuristic is trapped. Occasionally the
heuristic escapes to a different size-14 set, but one with substantial
vertex overlap with the starting state. For example, starting from seed $s^{(3)}$ (set~8 in Table~\ref{tab:output_sets}),
the heuristic returns $s^{(3)}$ itself in most runs; in rare cases it escapes to a
nearby set sharing vertices $\{42, 95, 140, 168\}$ with the seed, a local
neighbourhood excursion, not a global escape. Across all the seeds and
all restarts, the size-15 MIS was never found by classical heuristics
applied to individual seeds.

This result establishes that the barrier at size 14 is an actual local
optimum of the classical search landscape, not a result of an incomplete
heuristic. The key drawback of single-seed initialization is not the
quality of the seed, but the diversity it can access. Each seed covers one part of the graph, and the classical heuristic can only
explore the local neighbourhood of that part. The quantum superposition provides
simultaneous access to all seeds within a single circuit execution. The excitation-preserving
unitary optimizes parameters collectively across all branches, finding
solutions that are better than what any single seed's optimization
produces.
When the ancilla superposition is used, a single run of the pipeline recovers
eight distinct size-14 sets, including several not present among the original
seeds, and the size-15 MIS. The classical ablation thus confirms that the
discovery of the MIS is a consequence of the quantum-parallel search over
multiple near-optimal initializations, rather than of the heuristic pipeline
alone. However, this leaves open whether the excitation-preserving
circuit structure itself is the key ingredient, or whether the
ancilla superposition initialization specifically is what enables
the breakthrough.

A second ablation addresses the role of the ancilla superposition
specifically. To isolate whether the excitation-preserving circuit
structure alone is sufficient, or whether the multi-seed superposition
is essential, we ran the Phase~3 circuit initialized from each seed
state individually, without the ancilla superposition. In all cases, the
circuit returned either the same seed or a size-14 set with
substantial vertex overlap with the starting state, and never
recovered the size-15 MIS regardless of the number of iterations
or random restarts of the SPSA parameters.

This result establishes that the excitation-preserving circuit
structure alone is not sufficient — the ancilla superposition
initialization is the essential ingredient. Even under classical
MPS simulation, where the state is exactly representable,
the single-seed circuit cannot escape the local basin that traps
classical heuristics. The multi-seed quantum superposition breaks
this barrier, and it does so through the collective optimization
mechanism: parameters that are suboptimal for any individual seed
may be collectively optimal across all seeds simultaneously,
accessing regions of the landscape that no single-seed
initialization can reach.

\subsection{Hardware Validation}
\label{subsec:hardware}

The converged VQE configurations for the 64-node, 99-node, and 125-node
(\texttt{c125-9}) instances, and three QAOA configurations for the
64-node instance, were executed on \texttt{ibm\_marrakesh} using the
parameter-transfer protocol described in Section~\ref{subsec:hw_setup}:
noiseless-converged parameters are transferred directly to the transpiled
hardware circuit and sampled without re-optimization, isolating the
effect of hardware noise on the measurement distribution and the
bitstring heuristic pipeline.

\textbf{VQE on hardware.}
Table~\ref{tab:hw_vqe} summarizes the VQE hardware results. For the
64-node instance (hardware depth 65), sampling 20{,}000 shots
($B=20{,}000$, $T=10$, Section~\ref{sec:setup}) recovers 1641 unique
independent sets, including 3 distinct MIS of size 10 and 118 sets of
size 9 (one vertex short of the MIS):
\begin{align*}
  &\{0,3,12,15,21,34,46,51,56,63\}, \quad
   \{0,7,12,17,29,42,48,51,60,63\}, \\
  &\{0,3,12,15,18,45,48,51,60,63\}.
\end{align*}
For the 99-node instance (hardware depth 100), sampling 20{,}000 shots
recovers 217 unique independent sets, including 5 distinct MIS of
size 7 and 56 sets of size 6:
\begin{align*}
  &\{1,2,33,34,85,93,95\}, \quad \{0,16,19,29,39,85,97\}, \\
  &\{0,3,13,69,80,93,97\}, \quad \{0,3,69,80,85,93,97\}, \\
  &\{12,15,60,66,73,88,92\}.
\end{align*}

In both cases, hardware sampling at 40\% of the noiseless shot budget
($B=20{,}000$ vs.\ $B=50{,}000$) recovers approximately half the
noiseless MIS diversity: 3 of up to 6 distinct MIS per run for the
64-node instance, and 5 of up to 10 for the 99-node instance, both
with CVaR (Table~\ref{tab:comparison}). This consistent
$\sim$50\%-at-40\% retention, achieved using parameters optimized
entirely in noiseless simulation with no hardware-aware
re-optimization, indicates that the bitstring heuristic pipeline is
robust to the gate errors, decoherence, and readout errors present on
\texttt{ibm\_marrakesh} at these circuit depths.

\begin{table}[!htbp]
\centering
\caption{VQE hardware validation results on \texttt{ibm\_marrakesh} for all
  three hardware-validated instances ($B=20{,}000$ shots per iteration,
  $T=10$ iterations; hardware depths given in Table~\ref{tab:hw_configs}).
  ``Best (S1)'' is the best single set driven to maximality (Strategy~1);
  ``Best (S2)'' from Strategy~2 (all sets $\to$ maximality)
  (Table~\ref{tab:comparison}). Sets recovered for \texttt{c125-9} on
  hardware is highly variable (1--587 per trial) due to degraded
  measurement diversity under noise.}
\label{tab:hw_vqe}
\begin{tabular}{@{}lccc@{}}
\toprule
Instance & Sets recovered & Best (S1) & Best (S2) \\ \midrule
64-node  & 1641      & 7--8  & 10 (3/run) \\
99-node  & 217       & 4--6  & 7 (5/run)  \\
c125-9   & 1--587    & 33--34 & 34 (4/run) \\
\bottomrule
\end{tabular}
\end{table}

\textbf{c125-9 on hardware.}
The \texttt{c125-9} instance (125 nodes, certified MIS size
34~\cite{koch2025qoblib}) was selected for hardware validation due to
its size being within the available qubit budget ($\leq 156$~qubits).
No ancilla superposition was required: standard VQE with CVaR and SPSA
was sufficient to recover the MIS, consistent with the observation that
the ancilla technique is needed only for instances where single-seed
methods stall. Running standard VQE ($p=2$, CVaR, SPSA) under noiseless
MPS simulation, 148 independent MIS of size~34 were recovered across
post-optimization sampling trials, confirming that the pipeline samples
broadly from the MIS population for this instance. Transferring the
noiseless-converged parameters to \texttt{ibm\_marrakesh} ($B=20{,}000$
shots per iteration, $T=10$ iterations), 4 distinct MIS of size~34 were
recovered. Hardware sampling yields 1--587 unique independent sets per
trial, compared to $\sim$19{,}000 per trial in simulation, reflecting
the degraded measurement diversity at hardware noise levels. The
certified optimal size of 34 is nonetheless preserved exactly,
confirming that the pipeline recovers certified-optimal solutions on
real quantum hardware at the 125-node scale.

\textbf{QAOA on hardware.}
Three QAOA configurations for the 64-node instance were optimised in
noiseless simulation using COBYLA (40--50 iterations), following the same
parameter-transfer protocol: parameters are converged in simulation first,
then transferred directly to the hardware circuit for sampling without
re-optimization on the device. The three configurations are $p=3$ with
$k=4$ (depth 337, CVaR $\alpha=0.2$), $p=3$ with $k=3$
(depth 205, CVaR $\alpha=1.0$), and $p=5$ with $k=3$ (depth
323, CVaR $\alpha=1.0$); hardware depths are given in
Table~\ref{tab:hw_configs}. Each was sampled on \texttt{ibm\_marrakesh}
with $B=20{,}000$ shots per iteration over $T=10$ iterations.
All three recover zero valid independent sets after greedy correction,
regardless of circuit depth or sparsification bandwidth. This is a
stronger result than the noiseless QAOA outcome (0--5 independent sets
per sampling iteration, Section~\ref{sec:results}), and is consistent
with the depth- and sparsification-based arguments of
Sections~\ref{subsec:reorder} and~\ref{subsec:sparse}.

An important caveat, however, is that COBYLA's limited iteration
budget (40--50 iterations, with minimal energy change from the initial
parameters) means the noiseless measurement distribution at the
transferred parameters was already of low quality before hardware
noise is introduced. The zero-recovery outcome on hardware therefore
reflects a \emph{compounding} effect. A measurement distribution that
was already poor under noiseless simulation is further degraded by
gate errors and decoherence across circuit depths of 200--337
two-qubit gates. The result is therefore consistent with the simulator constraints
that limited the QAOA configurations available for parameter
transfer, rather than a definitive statement about QAOA viability
on hardware in general. The relative contributions of optimizer
quality and hardware noise cannot be disentangled from this
experiment alone.

\textbf{QPU usage.}
Across all hardware runs, the IBM Quantum platform reported
approximately 8 seconds of actual QPU execution time per Mode~B
iteration ($\approx 80$ seconds total across $T=10$ iterations per
configuration), against end-to-end wall-clock times of 3.6--37.2
minutes (Table~\ref{tab:hw_configs}) dominated by queueing and
classical post-processing. This is consistent with the bond-dimension
discussion of Section~\ref{subsec:theory_reorder}. The circuits
themselves execute essentially instantaneously on hardware, and the
practical cost of the pipeline lies in the classical optimization and
post-processing stages rather than in quantum execution.

\subsection{200-Node QOBLIB Benchmark Suite}
\label{subsec:200node}

To assess generality beyond the primary 180-node instance, we evaluated the
pipeline on five instances from the QOBLIB benchmark library~\cite{koch2025qoblib}:
\texttt{brock200-1}, \texttt{brock200-2}, \texttt{brock200-3},
\texttt{brock200-4}, and \texttt{c-fat200-1}, all with 200~nodes.
The excitation-preserving ansatz with ancilla superposition and the full
bitstring heuristic pipeline was applied to each instance. The maximum
independent set was recovered for all five instances, demonstrating that
the pipeline generalises reliably to 200-node instances with diverse graph
structures. The \texttt{brock200} family consists of random dense graphs with
certifiably small MIS, while \texttt{c-fat200-1} is a clique-cover graph with
larger MIS. Recovery was verified by comparison against the certified MIS
sizes available in the QOBLIB benchmark.

\subsection{Ancilla Interference: Experimental Evidence}
\label{subsec:results_interference}

The ancilla interference circuit (Section~\ref{subsec:interference}) was
evaluated on the 180-node instance to test whether genuine quantum interference
is present and whether it provides a practical advantage over the stochastic
mixture baseline.

\textbf{Interference diagnostics.}
Two diagnostic tests were applied to verify genuine interference in the
post-selected output distribution.

\emph{Test~1 (Stochastic baseline comparison).}
The interference and stochastic circuits were each run with independently optimised
SPSA parameters under identical conditions, using the same seeds, the same post-optimisation
sample budget, and the same bitstring heuristics applied to measurement outcomes.
Across both seed configurations and both values of $p$ tested on the 180-node
instance, the interference circuit consistently recovered more MIS of size~15 and
produced a larger pool of high-quality size-14 independent sets than the stochastic
baseline. The advantage was most pronounced at $p = 2$, where the interference
circuit recovered up to five times as many MIS from the same sample budget. These
results were obtained with only 10\,000 post-optimisation samples per run, a
deliberately conservative budget; the gap between the two methods is expected to
widen with larger sample counts, as the interference distribution concentrates
probability mass on the configurations most amenable to MIS extension by the
bitstring heuristics.

\emph{Test~2 (Phase-sensitivity fringe test).}
A controlled phase sweep was performed, in which a scalar offset $\delta$ was added
uniformly to all CRZ parameters of seed~0 and swept from $0$ to $2\pi$
in 24 steps (20\,000 shots per step), while recording both the mean IS size
of the post-selected output distribution and the post-selection rate.
For $K=4$ seeds ($p=1$) on the 180-node instance, the mean IS size of the
post-selected output varies with $\delta$ over a range of $\Delta = 0.385$
(peak $9.565$, trough $9.180$) around a grand mean of $\bar{\mu} = 9.359$.
The mean IS size is computed as the probability-weighted IS size averaged
over all post-selected shots, where infeasible bitstrings contribute zero.
Since the excitation-preserving ansatz preserves Hamming weight~14, every
post-selected bitstring has weight~14; valid ones are independent sets of
size~14 and infeasible ones contribute zero to the mean. The grand mean of
$9.359$ therefore implies that approximately $9.359/14 \approx 67\%$ of
post-selected shots are valid IS of size~14. The fringe signal is the
modulation of this fraction with $\delta$: constructive interference
concentrates more probability on valid IS configurations, raising the
fraction; destructive interference lowers it.
The $\delta$-dependence of IS quality is the direct signature of genuine
quantum interference. The CRZ phases control the constructive or destructive
overlap between seed branches, redistributing probability within the
post-selected subspace toward or away from high-quality configurations.
A classical stochastic mixture has no such phase degree of freedom and
would show no $\delta$-dependence.
The post-selection rate simultaneously remains flat within $\pm 0.9\%$ of
$1/K = 0.250$, which rules out the alternative
explanation that more valid IS shots simply survive post-selection at
favourable $\delta$. This flatness is an empirical finding. The globally integrated
phase-weighted cross-terms $\sum_{\mathbf{x}} e^{i\phi(\mathbf{x})}
A_k(\mathbf{x})A_{k'}^*(\mathbf{x})$ remain small as $\delta$ varies
due to the near-orthogonality of the seeds, so the total probability
in the ancilla-$|00\rangle$ subspace is conserved. This does not
preclude the cross-terms from being locally significant on the subset
of valid IS bitstrings. It is precisely the redistribution of
probability within that subset that drives the variation in mean IS size.
Figure~\ref{fig:interf_fringe} shows both panels.

The fringe curve confirms that interference exists and that the CRZ phase improves output
quality, However, it is a scalar summary of the post-selected distribution. It does not
reveal which structural family of solutions is being amplified at the peak phase.
That the interference circuit specifically opened Family~A (refer Table~\ref{tab:all_mis})
was a genuine surprise and was not predictable from the phase sweep alone.

\begin{figure}[!htbp]
\centering
\includegraphics[width=0.9\linewidth]{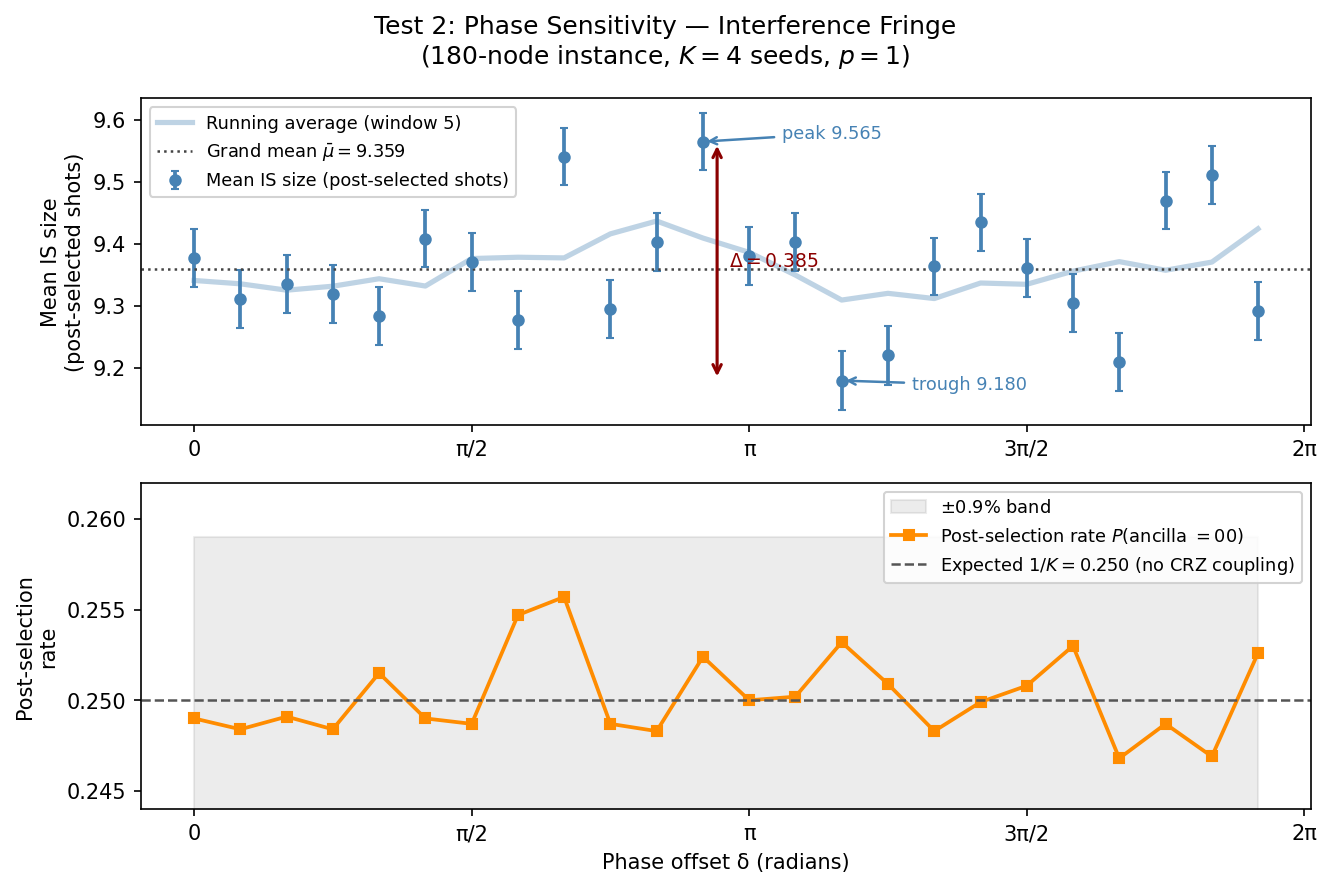}
\caption{Phase-sensitivity fringe test on the 180-node instance ($K=4$ seeds,
  $p=1$, 24 phase steps, $20000$ shots per step).
  \emph{Top:} mean IS size of post-selected shots as a function of the CRZ
  phase offset $\delta$. The mean IS size varies over a range of $\Delta=0.385$
  (peak $9.565$, trough $9.180$) around the grand mean $\bar{\mu}=9.359$,
  with error bars denoting shot noise.
  The running average (window 5, shaded curve) reveals a structured
  $\delta$-dependence consistent with constructive and destructive interference
  across seed branches.
  \emph{Bottom:} post-selection rate $P(\text{ancilla}=00)$ over the same sweep,
  remaining flat within $\pm 0.9\%$ of $1/K=0.250$ (dashed line, grey band),
  confirming empirically that the near-orthogonal seeds keep the
  phase-weighted cross-terms small throughout the sweep.
  The $\delta$-dependence of the mean IS size is the direct signature of
  quantum interference: the CRZ phases control the constructive or destructive
  overlap between seed branches, redistributing probability within the
  post-selected subspace. The flat post-selection rate is a control: it
  confirms that the total probability in the ancilla-$|00\rangle$ subspace
  is conserved, ruling out the alternative that more valid IS shots simply
  survive post-selection at favourable $\delta$.}
\label{fig:interf_fringe}
\end{figure}

\textbf{Fewer trials to MIS.}
The ancilla interference circuit was compared against the stochastic mixture
baseline (same seeds, same post-optimization shot budget, but without the
ancilla interference post-selection layer — the ancilla is traced out and
only the data register is sampled) on the 180-node instance. The
interference circuit required fewer sampling trials to first recover a MIS
than the stochastic mixture baseline. This is consistent with the constructive
amplification predicted by Proposition~\ref{prop:constructive}, which holds that the
interference mechanism concentrates probability mass on configurations
jointly supported by all seeds, which are also the configurations most
amenable to MIS recovery by the stepwise extension heuristic.

The test was performed for $p=1$. We note that for larger
$p$, the circuit depth and bond dimension increase, and the
comparison was not exhaustively repeated across all settings. The evidence
presented is therefore indicative rather than definitive regarding the general
superiority of interference over stochastic mixture as a sampling strategy.

\subsection{400-Node Instance (brock400-1)}
\label{subsec:400node}

The largest instance considered is \texttt{brock400-1} (400 nodes, 59720 edges,
average degree 298.6), with certified MIS~27~\cite{koch2025qoblib}. This instance is
substantially denser than the 180-node system (average degree 298.6 vs.\ 28.0),
and the increased graph density has direct consequences for the required circuit
penalty $P$ and the MPS bond dimension.

\textbf{Phase~1: standard VQE baseline.}
As a baseline, standard VQE with SPSA ($p=1$--$2$) was run from the zero state,
recovering 30--40 distinct independent sets per sampling trial. Strategy~1 reaches a best IS size of 23 in most runs; Strategy~2
reaches IS size 24 in some runs, but such occurrences
are infrequent. These Phase~1 seeds serve as the initialisation source for
Phase~3. The contrast with Phase~3 is striking. The excitation-preserving
ansatz with ancilla superposition not only recovers IS size 24 with
substantially higher frequency, but also crosses the IS=25 barrier that
Phase~1 cannot reach.

\textbf{Penalty $P$-scaling for Phase~1 (VQE+CVaR).}
The density of \texttt{brock400-1} requires a higher penalty coefficient $P$ for
the standard VQE+CVaR phase than is needed for the 180-node system:

\begin{itemize}[noitemsep]
  \item $P=2$: no meaningful progress; the penalty coefficient is too weak to enforce independence on this dense graph.
  \item $P=4$: best IS size 22--23, with very few solutions reaching 24.
  \item $P=6$: best IS size reaches 24 with significantly higher frequency.
\end{itemize}

This $P$-scaling behaviour reflects the harder optimization landscape of a
dense graph. For \texttt{brock400-1} with average degree 298.6, a weak
penalty coefficient allows the optimizer to achieve low energy through
configurations with many constraint violations, obscuring the gradient
signal toward feasible high-quality IS. A higher $P$ sharpens the energy
gap between feasible and infeasible configurations, making the optimizer
reliably track toward valid IS.
The excitation-preserving ansatz (Phase~3), by contrast, operates in the
fixed Hamming-weight subspace and uses $P=2$ throughout; feasibility is
enforced structurally by the ansatz rather than by the penalty, so no
$P$-scaling is needed.

\textbf{Ansatz layers $p$ scaling and the IS=25 barrier.}
Using size-24 IS seeds obtained from the Phase~1, $P=6$ runs above, the
excitation-preserving ansatz with ancilla superposition ($P=2$) was applied.
The number of ansatz layers $p$ is a key hyperparameter: more layers increase expressiveness but raise simulation cost and require more iterations to converge.

\begin{itemize}[noitemsep]
  \item $p=1$: recovers IS size~24 but does not reach 25.
  \item $p=2$: recovers IS size~24 with more distinct solutions.
  \item $p=3$: first recovery of IS size~25 (first time size~25 is reached).
\end{itemize}

All three comparisons use the same total post-optimization shot budget to ensure
a fair comparison. The improvement from $p=1$ to $p=3$
mirrors the pattern observed on the 180-node instance, where $p=1$
fails to find the MIS while $p=2$ succeeds. In both cases, more ansatz layers provide greater expressiveness to
distinguish near-optimal configurations and escape local energy basins.
Higher $p$ ($\geq 4$) were also attempted but caused each iteration to take
too long per step, making the total wall-clock time impractical without
hardware acceleration.

\textbf{Attempting IS=26.}
Using the IS=25 solutions as seeds, the pipeline was re-run targeting IS=26.
Despite multiple runs with varied ancilla configurations, $p$,
and optimizer settings, IS=26 was not recovered. The pipeline appears to
plateau at IS=25, achieving 92.6\% of the certified optimum (27). The
cause of this plateau is not fully understood, but is consistent with the
circuit expressibility ceiling discussed in the context of bond dimension
scaling: a circuit of bounded MPS rank cannot represent all Hamming-weight-25
configurations that are adjacent to Hamming-weight-26 configurations.

\textbf{Swap-based seeds and (1,1)/(2,1) exchanges.}
Additional experiments used seed states derived by (1,1)- and (2,1)-swap
operations on the IS=25 solutions, creating diverse neighboring solutions
that differ by 2--3 vertices. These seeds produced more IS=25 solutions per
run but did not extend the barrier to IS=26. The diversity of the IS=25
population can be expanded, but the circuit cannot yet cross to IS=26.

\textbf{Ancilla interference for 400-node.}
The ancilla interference circuit (Section~\ref{subsec:interference}) was
tested with $K=2$ and $K=3$ seeds on the 400-node system. For $K \leq 3$,
the MPS bond dimension remains tractable (Section~\ref{subsec:theory_interference}).
For $K=4$ ($\chi_{\max} \leq 4 \cdot 2^p$), the bond dimension that was tractable
for the 180-node system becomes computationally prohibitive at 400 qubits, since
the per-iteration MPS contraction cost scales with both $\chi_{\max}$ and system size,
making $K \leq 3$ the practical limit. At $K=2$ with IS=25 seeds
($a=1$ ancilla qubit, confirmed $\chi_\mathrm{max}=4$), the interference circuit
remained tractable and was used with the configurations above.

\textbf{Hardware requirement.}
The 400-node system with ancilla overhead requires approximately 402 qubits
(400 data + 1--2 ancilla). No current publicly accessible quantum hardware
provides this qubit count (\texttt{ibm\_marrakesh} has 156 qubits). The
simulation results reported here therefore represent the practical ceiling of
classical MPS-based optimization for this instance, and hardware execution is expected to access deeper circuit regimes (higher $p$)
that are currently intractable for classical simulation, since the MPS bond
dimension grows as $2^p$ per layer. The $p$-scaling evidence above provides
empirical motivation for this expectation: $p=3$ reaches IS=25 while
$p=1,2$ plateau at IS=24, and higher $p$ ($\geq 4$) becomes impractical
classically — but would be natural targets on hardware where circuit depth
does not incur exponential simulation cost.

\begin{figure}[!htbp]
\centering
\includegraphics[width=0.85\linewidth]{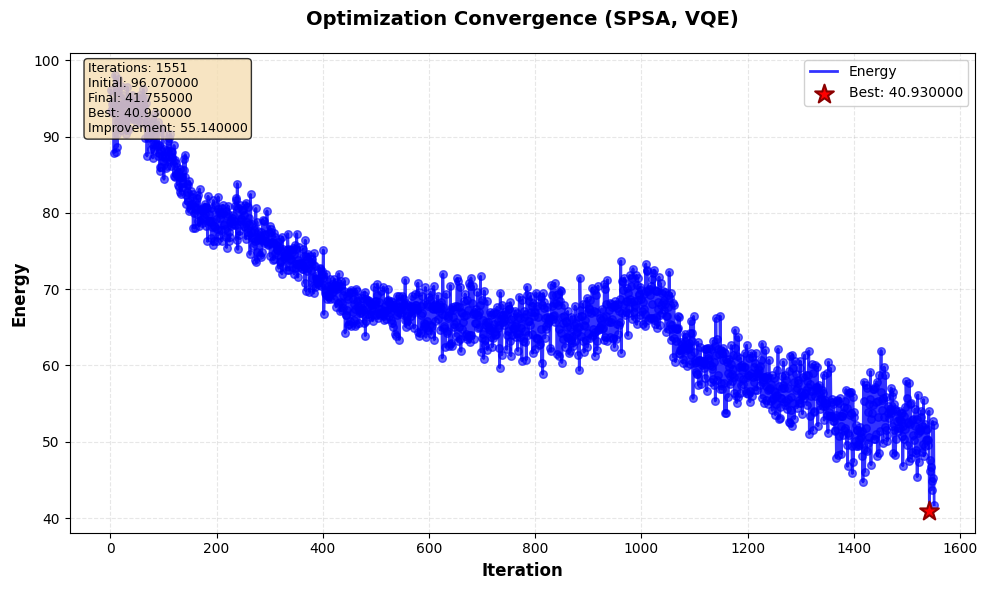}
\caption{Energy convergence of the excitation-preserving ansatz
  ($p=3$, SPSA, 1000 shots per iteration, 1551 iterations)
  on the 400-node \texttt{brock400-1} instance. Energy decreases from 96.07
  to a best of 40.93 over the run. Post-optimization sampling of this
  converged circuit recovers IS size~25 for the first time;
  runs with $p\leq 2$ do not reach IS=25 from the same seeds.
  The certified MIS is 27.}
\label{fig:400_nreps3}
\end{figure}
\subsection{Comparative Summary}

Table~\ref{tab:comparison} summarizes performance across all instances and
configurations, reporting results for both Strategy~1 (best single set driven
to maximality) and Strategy~2 (all recovered sets driven to maximality). The
table makes several patterns clear. First, Strategy~2 consistently outperforms
Strategy~1 by 1--4 vertices, confirming that solution diversity is as important
as optimization quality. Second, SPSA consistently produces more independent
sets than COBYLA for the same configuration, which directly enables Strategy~2
to be more effective. Third, QAOA recovers far fewer independent sets than VQE
(0--5 vs. 700--800 for the 64-node instance), severely limiting its Strategy~2
performance. Fourth, for instances up to 180 nodes no tangible benefit was found beyond
$p=2$--$3$; the 400-node system requires penalty $P=4$--$6$ for standard VQE+CVaR due
to its higher graph density; the excitation-preserving Phase~3 ansatz
retains $P=2$ (no penalty needed in the fixed Hamming-weight subspace).

\begin{table}[!htbp]
\centering
\caption{Performance summary. ``Sets recovered'' is the approximate number of
  distinct independent sets per post-optimization sampling iteration
  ($B=50{,}000$ shots per iteration, $T=10$ iterations; Section~\ref{sec:setup}).
  ``Best (S1)'' is the best result from Strategy~1 (best single set $\to$
  maximality); ``Best (S2)'' from Strategy~2 (all sets $\to$ maximality).
  MIS is the known maximum independent set size. For Phase~3, the wide
  range in sets recovered (645--2573) reflects sensitivity to shot count
  and iteration budget; see Section~\ref{subsec:phase3_sensitivity}.}
\label{tab:comparison}
\small
\begin{tabular}{@{}llccccc@{}}
\toprule
Instance & Configuration & $p$ & Sets rec. & S1 & S2 & MIS \\ \midrule
64-node  & VQE, COBYLA (CVaR) & 2 & 700--1200 & 8  & 10 & 10 \\
64-node  & VQE, SPSA (CVaR)  & 2 & 700--800  & 9  & 10 & 10 \\
64-node  & VQE, SPSA (CVaR)  & 3 & 700--800  & 9  & 10 & 10 \\
64-node  & QAOA, COBYLA & 2--5 & 0--5  & 8--9 & 9 & 10 \\
64-node  & QAOA, SPSA  & 2   & 0--3   & 9  & 9  & 10 \\
99-node  & VQE, COBYLA & 2 & 60--100   & 6  & 7  & 7  \\
99-node  & VQE, SPSA (CVaR)  & 2 & 150--200  & 6  & 7  & 7  \\
180-node & VQE, SPSA (Phase 1) & 2 & 400--600  & 9--10 & 13--14 & 15 \\
180-node & EXCP + Ancilla (Phase 3) & 2 & 645--2573 & 14 & 15 & 15 \\
\midrule
brock200-1--4, c-fat200-1 & EXCP + Ancilla & 2 & --- & --- & MIS & MIS \\
c125-9  & VQE, SPSA (CVaR) & 2 & 40000 & 22--24 & 34 & 34 \\
\midrule
brock400-1 & VQE, SPSA (Phase 1) & 1--2 & 3000-4000 & 23 & 24 & 27 \\
brock400-1 & EXCP + Ancilla (Phase 3) & 2--3 & 8000-9000 & 25 & 25 & 27 \\
\bottomrule
\end{tabular}
\end{table}

The results collectively support several conclusions. SPSA is the optimizer
of choice for large-scale variational circuits. Its stochastic gradient
estimates prevent premature convergence and, importantly its broader exploration
of the measurement distribution yields significantly more diverse independent
sets. Solution diversity is a key bottleneck. QAOA's failure on the 64-node
instance is not primarily due to worse convergence, but to the negligible
number of independent sets it recovers per run. The two-strategy post-processing
framework (Strategy~1 for quick estimates, Strategy~2 for thorough search)
provides a practical and flexible tool for extracting maximum value from any
variational circuit. The 180-node result demonstrates that the
ancilla superposition technique can break barriers that neither classical
heuristics nor standard VQE can resolve, as confirmed by the ablation study
(Section~\ref{subsec:ablation}). The five 200-node QOBLIB instances
demonstrate that these results generalise. The pipeline recovers the MIS
without requiring larger seed populations or higher circuit depth than the
180-node case. For the 400-node instance, both standard VQE Phase~1 and the
excitation-preserving Phase~3 show that the pipeline continues to make
meaningful progress at scale. Phase~1 already reaches IS=23--24 from a
standard initialisation, and Phase~3 with $p=3$ pushes this
to IS=25. The gap to the certified optimum (IS=27) highlights that graph
density introduces a new bottleneck, requiring higher circuit depth and
gradient quality, that is expected to be resolved with hardware execution.

%=============================================================================
\section{Conclusions}
\label{sec:conclusion}
%=============================================================================

This paper presents a layered framework for solving the Maximum Independent Set
problem with variational quantum algorithms. The framework comprises of combining
graph reordering, hardware-efficient variational circuits, SPSA-based classical
optimization, and bitstring post-processing into a coherent pipeline. Applied to
benchmark instances spanning 64 to 400 nodes, the framework achieves exact maximum
independent sets on all instances up to 200 nodes, with the 180-node result
requiring the key novel component of ancilla-qubit-mediated preparation of a
uniform quantum superposition over classically-found near-optimal solutions
followed by evolution under a particle-conserving excitation-preserving variational ansatz.
This breaks a hard barrier that neither classical heuristics
nor standard VQE can resolve. We propose a hybrid approach, in which a single
circuit execution simultaneously optimizes over all seed branches under shared
variational parameters.

We introduce a further extension, the ancilla interference mechanism, that
goes beyond classical stochastic mixture. A CRZ phase layer followed by
Hadamard on the ancilla and post-selection on $|0\rangle^{\otimes a}$ creates
genuine cross-term contributions $\mathrm{Re}[A_k A_{k'}^*]$ in the output
distribution. These are experimentally confirmed via a phase-sensitivity fringe test:
sweeping a CRZ phase offset $\delta$ from $0$ to $2\pi$ produces a variation
of $\Delta = 0.385$ in the mean IS size of the post-selected output, while the
post-selection rate remains flat at $1/K$, confirming that interference
redistributes probability within the post-selected subspace rather than
merely changing its size. The interference circuit also requires fewer
sampling trials to first recover the MIS than the stochastic baseline on
the 180-node instance. The ancilla superposition alone (no interference layer)
is classically tractable as a stochastic mixture; it is the $fSim$ gate with
$\varphi \neq 0$ together with post-selected ancilla interference that
introduces genuine quantum correlations.

We validated the pipeline on five QOBLIB 200-node benchmark instances (all MIS
recovered), on the 125-node \texttt{c125-9} instance both in simulation (148
MIS recovered) and on IBM Quantum hardware \texttt{ibm\_marrakesh} (4 MIS
recovered, certified MIS size 34 confirmed on hardware), and on the 400-node
\texttt{brock400-1} instance. For \texttt{brock400-1}, standard VQE (Phase~1) reaches IS size 24 only
infrequently, while the excitation-preserving ansatz with ancilla superposition
(Phase~3) recovers IS size 24 with substantially higher frequency and first
reaches IS size 25 (92.6\% of certified optimum 27) with $p=3$ ansatz layers;
$p=1,2$ plateau at IS=24. The $p$ (ansatz layers) scaling mirrors the 180-node pattern
($p=2$ needed for MIS) and provides empirical evidence that circuit
expressiveness, not post-optimization shot count, is the limiting factor. The higher penalty coefficient required for Phase~1
($P=4$--$6$ for standard VQE+CVaR, vs.\ $P=2$ for the excitation-preserving
Phase~3 ansatz) reflects the much higher graph density of this instance:
a weak penalty fails to distinguish feasible IS from constraint-violating
configurations in the energy landscape, whereas Phase~3 enforces feasibility
structurally via the Hamming-weight subspace and requires no $P$-scaling.

The specific circuits studied here are efficiently simulable classically via MPS for
shallow depth and small seed counts ($\chi_{\max} \leq m \cdot 2^p$), and we make no claim of
quantum computational advantage for these instances. Rather, we demonstrate that
the ancilla superposition and interference techniques \emph{work}: they provide
principled mechanisms for breaking local optima. The 400-node experiments
provide direct empirical evidence for the hardware need: the MPS bond dimension
grows as $2^p$ per layer, placing $p \geq 4$ circuits at the boundary of
tractable classical simulation, and the certified MIS at 400 nodes is expected
to require circuit depths beyond the classical simulation ceiling.

The mechanism is confirmed by a direct ablation study on the 180-node instance.
Applying the same bitstring heuristic pipeline to each size-14 seed state
individually, with repeated restarts, never yields the size-15 MIS. The
classical search is locally trapped at each seed. Only when multiple seeds are combined into a
quantum superposition does the pipeline succeed, confirming that the discovery
of the MIS is a consequence of quantum-parallel search over multiple
near-optimal initializations rather than of the classical post-processing
alone. The discovery of output solutions containing vertices present in none
of the seed states further confirms that the pipeline explores the solution
space beyond the reach of any individual seed initialization.

Hardware validation on IBM Quantum hardware \texttt{ibm\_marrakesh} confirms that the
pipeline is robust to realistic noise conditions. Using converged noiseless simulator
parameters transferred directly to the hardware circuit, VQE recovers 3 distinct MIS of
size 10 for the 64-node instance (hardware depth 65) and 5 distinct MIS of size 7 for
the 99-node instance (hardware depth 100), approximately half the noiseless diversity at
40\% of the shot budget with no hardware-aware re-optimization. QAOA, by contrast,
recovers zero valid independent sets across all three hardware configurations tested.
This outcome reflects a compounding effect of shallow classical optimization and
circuit-depth-induced decoherence, consistent with the simulator constraints that
limited the QAOA configurations, rather than a definitive statement about QAOA viability
on hardware in general. The approximately 8 seconds of actual QPU execution time per
Mode-B iteration ($\approx 80$ seconds total across $T=10$ iterations) against 3.6--37.2
minutes of end-to-end wall-clock time is consistent with the bond-dimension analysis,
since the quantum circuit executes essentially instantaneously and the practical cost
lies in classical optimization and post-processing.

The most immediate open question is the $p \geq 4$ regime on \texttt{brock400-1}. Bond dimension exceeds 64
at that depth, putting exact MPS simulation out of reach. The current hardware path of transferring
noiseless parameters directly to \texttt{ibm\_marrakesh} loses roughly half the solution diversity.
Hardware-native re-optimization, in which the variational parameters are updated directly on the
quantum processor is the natural next step, though it substantially increases QPU time and makes
the interplay with error mitigation strategies~\cite{temme2017error} non-trivial to manage.
A second open direction is iterative enumeration. The 180-node instance has multiple structurally
distinct MIS families, and the seed composition of the ancilla superposition determines which family
is recovered. Feeding a discovered MIS back as a seed in a subsequent Phase 3 run is a simple
change to the pipeline that converts single-solution discovery into progressive enumeration of the
solution space. We observed this informally but did not study it systematically. Understanding when
the bootstrapping converges over cycles is a concrete question with practical value for instances
where many MIS exist. The sparsification bandwidth $k$ was set by grid search and held fixed
throughout optimization. An adaptive $k$ could be an interesting line of work. In other words,
couplings dropped early in the circuit can be reintroduced as the optimizer approaches convergence
and finer distinctions between configurations begin to matter. Whether adaptive sparsification
provides a consistent benefit is worth testing.

Finally, all components here, the excitation-preserving ansatz, ancilla superposition, and
interference post-selection, are directly applicable to any problem with an Ising Hamiltonian and
a cardinality constraint on the solution: Maximum Clique, Minimum Vertex Cover, graph colouring, etc.
The 180-node result suggests the framework is most useful precisely when the solution is a small
fraction of the full vertex set and classical heuristics get trapped repeatedly at the same
local optima. Whether that pattern holds on other combinatorial problems is an open question.

%=============================================================================
\section*{Acknowledgements}
%=============================================================================

The authors thank Claude (Anthropic) \cite{claudeai} for assistance with literature survey, for refining the flow and presentation of this paper, and for assistance in
designing the adversarial evolutionary graph generator used to construct the
99-node benchmark instance. ChatGPT (OpenAI) \cite{chatgpt} is acknowledged for assistance in designing the CSP and spin-glass encoding used to construct the 180-node
benchmark instance. IBM~Bob \cite{ibmbob} is acknowledged for assistance with
code development and implementation.

%=============================================================================
\section*{Code Availability}
%=============================================================================

The implementation code for the variational quantum pipelines, bitstring
heuristic framework, and graph reordering strategies described in this
work is maintained in an enterprise repository. The code is available
to researchers upon reasonable request to the corresponding author.
The experimental data for the benchmark instances
(99-node and 180-node), are also available upon reasonable request.

\bibliographystyle{unsrtnat}
\bibliography{references}

\end{document}